\newtheorem{theorem}{Theorem}
\newtheorem{lemma}{Lemma}
\newtheorem{proposition}{Proposition}
\newtheorem{remark}{Remark}
\newtheorem{definition}{Definition}
\newtheorem{example}{Example}
\newcommand{\R}{{\mathbb R}} 
\newcommand{\Z}{{\mathbb Z}}
\newcommand{\N}{{\mathbb N}} 
 \def\im{{\rm i}}
\newcommand{\C}{\mathbb{C}} 
\newcommand{\T}{\mathbb{T}}
\newcommand{\stz}{\mathrm{Stz}}
\def\({\left(}
\def\){\right)}
\def\<{\left\langle}
\def\>{\right\rangle}
\newcommand{\Pa}{\partial}
\newcommand{\bra}[1]{ \left<  #1 \right>  }
\newcommand{\braa}[1]{\left(  #1 \right)  }
\newcommand{\brab}[1]{\left\{ #1 \right\} }
\newcommand{\abs}[1]{ \left|  #1 \right|  }
\newcommand{\nr}[1]{  \left\| #1 \right\| }
\newcommand{\wa}{W^\ast}
\newcommand{\wc}{\widetilde{C}_N}
\newcommand{\ve}[2]{
\begin{pmatrix}
#1
\\
#2
\end{pmatrix}
}
\numberwithin{equation}{section}
\title{{\Large {\bf On nonlinear scattering for quantum walks 
}
}}
\author{
{\small 
Masaya Maeda,$^{1}$ 
\footnote{
maeda@math.s.chiba-u.ac.jp
}\quad 
Hironobu Sasaki,$^{1}$ 
\footnote{
sasaki@math.s.chiba-u.ac.jp 
}\quad
Etsuo Segawa,$^{2}$ 
\footnote{
e-segawa@m.tohoku.ac.jp 
}\quad
Akito Suzuki,$^{3}$ 
\footnote{
akito@shinshu-u.ac.jp 
}\quad
Kanako Suzuki,$^{4}$ 
\footnote{
kanako.suzuki.sci2@vc.ibaraki.ac.jp 
}\quad
}\\ 
{\scriptsize $^{1}$ 
Department of Mathematics and Informatics, Faculty of Science, Chiba University, 
}\\
{\scriptsize 
Chiba 263-8522, Japan.
} \\
{\scriptsize $^{2}$ 
Graduate School of Information Sciences, Tohoku University,
}\\
{\scriptsize 
Sendai 980-8579, Japan. 
} \\
{\scriptsize $^3$ 
Division of Mathematics and Physics, Faculty of Engineering, Shinshu University, 
}\\
{\scriptsize 
Nagano 380-8553, Japan 
} \\
{\scriptsize $^4$ 
College of Science, Ibaraki University,
}\\
{\scriptsize 
2-1-1 Bunkyo, Mito 310-8512, Japan
} \\
} 
\date{\empty }
\begin{document}
\maketitle

\par\noindent
\begin{small}
\par\noindent
{\bf Abstract}. 
We study large time behavior of quantum walks (QW) with self-dependent coin.
In particular, we show scattering and derive the reproducing formula for inverse scattering in the weak nonlinear regime. 
The proof is based on space-time estimate of (linear) QW such as Strichartz estimate.
Such argument is standard in the study of nonlinear Schr\"odinger equations but it seems to be the first time to be applied to QW.
We also numerically study the dynamics of QW and observe soliton like solutions. 

\end{small}

\setcounter{equation}{0}

\section{Introduction}
In this paper, we consider the quantum walks (QW) with state dependent quantum coin.
We set  $\mathcal H =  l^2(\Z;\C^2)$ and fix a map $C:\R\times \R\to U(2)$, where $U(2)$ is the set of $2\times 2$ unitary matrices.
We define the (nonlinear) quantum coin $\hat C:\mathcal H\to \mathcal H$ by 
\begin{equation}
\label{defC}
(\hat Cu)(x)=C(|u_1(x)|^2,|u_2(x)|^2) u(x),
\end{equation} where $u=(u_1,u_2)^t\in \mathcal H$.
For $(T_\pm u)(x)=u(x\mp 1)$ and $S=\begin{pmatrix} T_- & 0 \\ 0 & T_+ \end{pmatrix},$
we set
\begin{align}\label{2}
U:=S\hat C:\mathcal H\to\mathcal H.
\end{align}
By definition, $S$ and $\hat C$ preserves the $l^2$ norm, 
and so does $U$.  
Let $u_0 \in \mathcal{H}$
be an initial state for a walker
and satisfy $\|u_0\|_{\mathcal{H}}=1$. 
Then, the state $u(t)$ of the walker at time $t$ is defined by the recursion 
relation 
\begin{align}\label{3}
u(t) = U u(t-1), \quad t=1,2,\cdots,
\end{align}
with $u(0) = u_0$. 
We define a nonlinear evolution operator  $U(t)$ as\[
U(t)u_0 =u(t), \quad t \in \N\cup\{0\}. 
\]
Notice that $u_0 \mapsto u = U(\cdot)u_0$ is a nonlinear map from 
$\mathcal H$ to $l_t^\infty (\N;\mathcal H)$
and $U(t)$ preserves the norm.
As pointed out in \cite{NPR07PRA},  
this nonlinear evolution does not define a quantum system,
but it can be realized in a optical system such as optical Galton board. 
Notice that this is similar to the relation between (linear) Schr\"odinger equation which describes quantum system and nonlinear Schr\"odinger equations which appears in various regions of physics including optics.
Moreover, in a way similar to linear QW, 
we can define
the finding probability $p_t(x)$ of a walker at time $t$ at position $x$ 
through $u(t,x) := (U(t)u_0)(x)$ as
\begin{equation}
\label{find_prob}
p_t(x) = \|u(t,x)\|_{\mathbb{C}^2}^2, 
	\quad (t,x) \in \mathbb{N} \times \mathbb{Z}. 
\end{equation}
Indeed, $p_t$ gives a probability distribution on $\mathbb{Z}$,
because $U(t)$ preserves the norm and
$\sum_{x \in \mathbb{Z}} p_t(x) = \|u_0\|_{\mathcal{H}}^2$.
From these reasons, it is natural to view the system described by $U(t)$
as a nonlinearization of a linear QW
and thus we simply call it 
a nonlinear quantum walk (NLQW). 
We also import terminology from QW and call  $\mathcal{H}$ and vectors in $\mathcal{H}$ the state space and states, respectively.

NLQW have been proposed by several authors for the purpose of investigating more rich dynamics (\cite{NPR07PRA}, \cite{SWH14SR}) or as a simulator of nonlinear Dirac equation \cite{LKN15PRA}.
We now give several examples of NLQW.
The following example by Navarrete, P\'erez and Rold\'an \cite{NPR07PRA} seems to be the first NLQW appeared in the literature.

\begin{example}
Navarrete, P\'erez and Rold\'an  \cite{NPR07PRA} (see also \cite{MDB15PRE}) proposed the following model as a nonlinear generalization of optical Galton board:
\begin{align}\label{4.2}
C(s_1,s_2)=\frac{1}{\sqrt{2}}\begin{pmatrix} 1 & 1\\ 1 & -1 \end{pmatrix} \begin{pmatrix} e^{\im g s_1} & 0 \\ 0 & e^{\im g s_2}\end{pmatrix},
\end{align}
where $g\in \R$.
\end{example}

%


Lee, Kurzy\ifmmode~\acute{n}\else \'{n}\fi{}ski, and Nha \cite{LKN15PRA} proposed two models related to nonlinear Dirac equations.

\begin{example}

For Gross-Neveu model (scaler type interaction)
\begin{align}\label{4.4}
C(s_1,s_2) =\begin{pmatrix} e^{-\im g(s_1-s_2)} & 0 \\ 0 & e^{\im g(s_1-s_2)} \end{pmatrix}R(\theta),
\end{align}
and for Thirring model (vector type interaction)
\begin{align}\label{4.5}
C(s_1,s_2)=e^{\im g (s_1+s_2)}
R(\theta),
\end{align}
where $g,\theta\in\R$ and $R(\theta)=\begin{pmatrix} \cos\theta & -\sin\theta \\ \sin \theta & \cos \theta \end{pmatrix}$.
\end{example}

To construct a NLQW, it suffices to define $C:\R^2\to U(2)$.
Thus, the following NLQW is another natural example.
\begin{example}
Let $\theta:\R^2\to \R$.
We can define an NLQW by $C(s_1,s_2):=R(\theta(s_1,s_2))$.
In particular, setting $\theta(s_1,s_2):=\theta_0+\lambda(gs_1+gs_2)^p$ with $\lambda=\pm1$, we obtain the NLQW with the nonlinear coin
\begin{align}\label{4.5.1}
C(s_1,s_2)=R(\theta_0+\lambda(gs_1+gs_2)^p)=R(\theta_0)R(\lambda(gs_1+gs_2)^p).
\end{align}
We will numerically study the dynamics of this NLQW in section \ref{sec:NS}.
\end{example}

For other models, see \cite{SWH14SR}.
In the following, we restrict our nonlinear coin operator to the following type:
\begin{align}
C(s_1,s_2) = C_0  C_N(gs_1,gs_2), \quad g>0\label{4.6}
\end{align}
where $C_0=\begin{pmatrix} a & b \\-\bar b & \bar a \end{pmatrix}\in U(2)$ ($|a|^2+|b|^2=1$, $0<|a|<1$), $C_N\in C^2([0,\infty)\times [0,\infty);U(2))$ and $C_N(0,0)=I_2$.
Here $I_2=\begin{pmatrix}  1 & 0 \\ 0 & 1 \end{pmatrix}$.
We set
\begin{align}\label{4.7}
U_0=S\hat C_0,\ \text{where}\ \(\hat C_0 u\)(x):=C_0u(x).
\end{align}

The positive parameter $g$ controls the strength of the nonlinearity.
Notice that all models \eqref{4.2}, \eqref{4.4}, \eqref{4.5} \eqref{4.5.1} given above are included in \eqref{4.6}.

In this paper, we view NLQW as a space-time discretized nonlinear Schr\"odinger equation (NLS).
Indeed, we demonstrate that standard estimates such as dispersive estimate and Strichartz estimate hold for QW (Theorem \ref{thm:1}, Lemma \ref{lem:stz}).
These estimates are fundamental tools for the study of NLS and we show that also for NLQW in the weak nonlinear regime, we can show the scattering by parallel argument as the proof of scattering for NLS.

\begin{definition}
We say $U(t)u_0$ scatters if there exists $u_+\in \mathcal H$ s.t.\ $\|U(t)u_0 - U_0^t u_+\|_{l^2}\to 0$ as $t\to \infty$, where $U_0=S C_0$.
\end{definition}

We use $U_{g=1}(t)$ to denote the evolution $U(t)$
that has the nonlinear coin defined in \eqref{4.6} with $g=1$.
We observe that for $v_0 = \sqrt{g} u_0$ with $\|u_0\|_{l^2}=1$, 
\[ U(t)  u_0 = \frac{1}{\sqrt{g}} U_{g=1}(t) v_0. \]
%
Hence, instead of changing $g$, we can fix $g=1$ and vary the norm or $\|u_0\|_{l^2}$. 
Small $\|u_0\|_{l^2}$ will correspond to small $g$.

The main result in this paper is the following.

\begin{theorem}\label{thm:scat}
Assume $C_N\in C^1(\R^2;U(2))$ with $\|C_N(s_1,s_2)-I_2\|_{\C^2\to \C^2}\leq C  (s_1+s_2)^{m}$ and  $\|\partial_{s_j}C_N(s_1,s_2)\|_{\C^2\to \C^2}\leq C  (s_1+s_2)^{m-1}$ for $j=1,2$.
Here, $\|C\|_{\C^2\to \C^2}$ is the operator norm of the matrix $C$.
That is $\|C\|_{\C^2\to \C^2}:=\sup_{v\in \C^2, \|v\|_{\C^2}=1}\|Cv\|_{\C^2}$.
\begin{enumerate}
\item
For the case $m=3$, there exists $\delta>0$ s.t.\ for any $u_0\in l^2$ with $\|u_0\|_{l^2}<\delta$, $U(t)u_0$ scatters.
\item
For the case $m=2$, there exists $\delta>0$ s.t.\ for any $u_0\in l^1$ with $\|u_0\|_{l^1}<\delta$, $U(t)u_0$ scatters.
\end{enumerate}
\end{theorem}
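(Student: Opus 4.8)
The plan is to treat the recursion $u(t)=Uu(t-1)$ as a discrete-time analogue of a nonlinear Schr\"odinger equation and run a contraction/bootstrap argument in Strichartz-type spaces. Write $u(t)=U_0^t u_0+\sum_{s=1}^{t}U_0^{t-s}F(u(s-1))$, where $F(u):=S(\hat C-\hat C_0)u$ is the nonlinear remainder; since $U=S\hat C_0 C_N(\cdot)$ and $C_N(0,0)=I_2$, the hypotheses give the pointwise bound $\|F(u)(x)\|_{\C^2}\lesssim (|u_1(x)|^2+|u_2(x)|^2)^m\|u(x)\|_{\C^2}$, i.e.\ $F$ behaves like a power nonlinearity of degree $2m+1$. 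First I would record this Duhamel formula carefully (being mindful that the sum is a discrete convolution in time), and fix the relevant function spaces: the Strichartz space from Lemma~\ref{lem:stz} for the ``$X$'' norm, together with $l_t^\infty l_x^2$ which is controlled for free by unitarity of $U_0^t$.

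Next I would estimate the Duhamel term. By the (retarded) Strichartz estimate for the free QW evolution $U_0^t$ from Lemma~\ref{lem:stz}, $\|u\|_{X}\lesssim \|u_0\|_{l^2}+\big\|\,\{F(u(s))\}_s\,\big\|_{X'}$ where $X'$ is the dual-admissible space, and then by H\"older in space-time and the pointwise bound on $F$ one gets $\big\|\{F(u(s))\}\big\|_{X'}\lesssim \|u\|_{Y}^{2m}\|u\|_{X}$ for a suitable auxiliary norm $Y$ built from the dispersive decay. For case (i), $m=3$ gives a degree-$7$ nonlinearity; the $t^{-1/3}$-type dispersive decay of Theorem~\ref{thm:1} should make $\|U_0^t u_0\|$ summable in the relevant mixed norm when raised to the sixth power, so a small-$l^2$ initial datum closes the estimate by the standard Banach fixed point argument on the map $u\mapsto U_0^\cdot u_0+(\text{Duhamel})$ in a small ball. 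For case (ii), $m=2$ gives degree $5$, which is below the $l^2$-scattering threshold, so one instead propagates the $l^1$ norm: using the dispersive estimate $\|U_0^t u_0\|_{l^\infty}\lesssim \langle t\rangle^{-1/3}\|u_0\|_{l^1}$ (or the $l^1\to l^\infty$ bound of Theorem~\ref{thm:1}) and interpolation, the quintic term becomes time-summable provided $\|u_0\|_{l^1}$ is small, and again a contraction in a space incorporating both the $l^1$-driven decay and the $l^2$ conservation closes.

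Once global control $\|u\|_{X}+\sup_t\|u(t)\|_{l^2}\lesssim \|u_0\|$ is in hand, scattering is routine: define $u_+:=u_0+\sum_{s=1}^{\infty}U_0^{-s}F(u(s-1))$, show the tail $\sum_{s>N}$ is Cauchy in $l^2$ using the same nonlinear estimate restricted to $s>N$ (which tends to $0$ by dominated convergence / summability), and conclude $\|U(t)u_0-U_0^t u_+\|_{l^2}=\big\|\sum_{s>t}U_0^{t-s}F(u(s-1))\big\|_{l^2}\to 0$.

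The main obstacle I expect is not the abstract scheme but getting the \emph{discrete} Strichartz and dispersive estimates to interlock with the correct exponents: the QW dispersion relation is that of a discrete operator, so the decay rate, the set of admissible Strichartz pairs, and hence the exact power of the nonlinearity that is ``energy-subcritical'' versus ``mass-subcritical'' must be pinned down precisely — this is exactly why the theorem separates $m=3$ ($l^2$ theory) from $m=2$ ($l^1$ theory). A secondary technical point is handling the time-discreteness: sums in place of integrals, endpoint issues in the discrete Strichartz inequality, and making sure the nonlinear map is a genuine contraction rather than merely bounded on the relevant ball.
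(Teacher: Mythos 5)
Your treatment of case 1 ($m=3$) matches the paper's argument: Duhamel formula, contraction in the Strichartz space of Lemma \ref{lem:stz} using $\|(\hat C_N-I_2)u\|_{l^1_tl^2_x}\lesssim\|u\|_{l^6_tl^\infty_x}^6\|u\|_{l^\infty_tl^2_x}$, then the dual Strichartz bound to show the Cauchy property of $u_0+\sum_s U_0^{-s}(\hat C_N-I_2)u(s)$. That part is fine.

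For case 2 ($m=2$, quintic nonlinearity) there is a genuine gap. You claim that the $l^1\to l^\infty$ bound $\|U_0^tu_0\|_{l^\infty}\lesssim\langle t\rangle^{-1/3}\|u_0\|_{l^1}$ of Theorem \ref{thm:1}, interpolated with $l^2$ conservation, makes the quintic term time-summable. It does not: the best you can extract this way is $\|u(s)\|_{l^p}\lesssim\langle s\rangle^{-\frac13(1-\frac2p)}$, so
\begin{align*}
\|u(s)^5\|_{l^1}\;\leq\;\|u(s)\|_{l^\infty}^{3}\,\|u(s)\|_{l^2}^{2}\;\lesssim\;\langle s\rangle^{-1},
\end{align*}
which is exactly borderline non-summable, and feeding $\langle s\rangle^{-1}$ back through the Duhamel sum produces a logarithmic loss that prevents the bootstrap from closing at rate $t^{-1/3}$. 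This is precisely why the paper states that Strichartz (plus the basic dispersive estimate) alone only reaches $m=3$. The missing ingredient is the \emph{improved} decay estimate of Theorem \ref{thm:mp} (the Mielke--Patz-type bound $\|U_0^tu_0\|_{l^{4,\infty}}\lesssim\langle t\rangle^{-1/4}\|u_0\|_{l^1}$), which beats the $t^{-1/6}$ rate that naive interpolation would give in $l^4$; the point is that the slow $t^{-1/3}$ decay occurs only near the caustics $x=\pm|a|t$, a set small enough to be invisible in the weak-$l^4$ norm. Interpolating Theorems \ref{thm:1} and \ref{thm:mp} gives $\|U_0^tu_0\|_{l^5}\lesssim\langle t\rangle^{-4/15}\|u_0\|_{l^1}$, and since $5\cdot\frac{4}{15}=\frac43>1$ the induction of Lemma \ref{lem:qdec} closes, yielding $\|U(t)u_0\|_{l^5}\lesssim\langle t\rangle^{-4/15}$; scattering then follows as you describe. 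Without establishing an estimate of the type of Theorem \ref{thm:mp} (which requires a separate stationary-phase analysis distinguishing the degenerate points where $p''$ vanishes), your scheme for $m=2$ does not go through.
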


By the scattering, we can prove the weak limit theorem which are extensively studied for QW.
As shown in \cite{Suzuki16QIP}, for a linear evolution $U_{\rm L} = S\hat C_{\rm L}$,
$U_{\rm L}^t u_0$ scatters 
if the position dependent coin $\hat C_{\rm L}$ satisfies 
the short range condition 
such as $C_{\rm L}(x) = C_0 + O(|x|^{-1-\epsilon})$ 
($|x| \to \infty$) 
with $\epsilon > 0$ independent of $x \in \mathbb{Z}$.  
This allows us to establish the weak limit theorem,
{\it i.e.}, the position of the walker scaled by $t^{-1}$ 
converges in law to the a random variable with the Konno distribution. 
This also holds for NLQW.
Let $X_t$ denote the position of a walker described by 
the time evolution $U(t)u_0$,
{\it i.e.}, the distribution of $X_t$ is
\begin{equation}
\label{Xt_dist} 
P(X_t = x) = p_t(x), \quad x \in \mathbb{Z}, 
\end{equation} 
where $p_t(x)$ is defined in \eqref{find_prob}.  
Let $\hat v_0$ be the asymptotic velocity operator
for a homogeneous evolution $U_0$, which is a unique self-adjoint operator
such that $e^{i \xi \hat v_0} = \mbox{s-}\lim_{t \to +\infty} e^{i \xi \hat x_0(t)}$. Here $\hat x_0(t) = U_0^{-t} \hat x U_0$ is 
the Heisenberg operator for the position operator $\hat x$
(see \cite{GJS04PRE,Suzuki16QIP} for more details). 
We use $E_{\hat a}$ to denote the spectral measure 
of a self-adjoint operator $\hat a$. 

\begin{proposition}
\label{p_wlt}
Let $u_0 \in \mathcal{H}$ be a normalized vector. 
If $U(t)u_0$ scatters so that $\|U(t)u_0 - U_0^t u_+\|_{l^2}\to 0$ as $t\to \infty$, then $X_t/t$ converges in law to a random variable $V$,
whose distribution $\mu_V$ is
\[ \mu_V = \|E_{\hat v_0}(\cdot) u_+\|^2. \]
\end{proposition}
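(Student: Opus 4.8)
The plan is to reduce the weak limit theorem for the nonlinear evolution $U(t)u_0$ to the corresponding statement for the homogeneous linear evolution $U_0^t u_+$, using only that the scattering hypothesis controls the difference in $l^2$. First I would record the convergence in law we are after: $X_t/t \Rightarrow V$ means $E[f(X_t/t)] \to \int f\, d\mu_V$ for every bounded continuous $f$, equivalently (by the Stone--Weierstrass/Lévy continuity argument) that the characteristic functions converge, $E[e^{i\xi X_t/t}] \to \int e^{i\xi v}\, d\mu_V(v)$ for each $\xi\in\R$. In terms of states, $E[e^{i\xi X_t/t}] = \langle u(t), e^{i\xi \hat x/t} u(t)\rangle$ where $u(t)=U(t)u_0$ and $\hat x$ is the position operator, since $P(X_t=x)=\|u(t,x)\|_{\C^2}^2$ by \eqref{find_prob} and \eqref{Xt_dist}.

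Next I would split off the scattering error. Write $u(t) = U_0^t u_+ + r(t)$ with $\|r(t)\|_{l^2}\to 0$. Because $e^{i\xi\hat x/t}$ is unitary and $\|u(t)\|_{l^2}=\|U_0^t u_+\|_{l^2}$ (both equal $\|u_0\|=1$ up to the scattering limit; strictly, $\|u_+\|=\lim\|u(t)\|=1$), expanding the inner product $\langle u(t), e^{i\xi\hat x/t}u(t)\rangle$ bilinearly produces one main term $\langle U_0^t u_+, e^{i\xi\hat x/t} U_0^t u_+\rangle$ plus three cross terms, each bounded in absolute value by a constant multiple of $\|r(t)\|_{l^2}$ via Cauchy--Schwarz and unitarity of $e^{i\xi\hat x/t}$. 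Hence those cross terms vanish as $t\to\infty$, and it suffices to prove $\langle U_0^t u_+, e^{i\xi\hat x/t} U_0^t u_+\rangle \to \int e^{i\xi v}\, d\mu_V(v)$.

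Now I would invoke the homogeneous QW machinery. Writing $\hat x_0(t) = U_0^{-t}\hat x\, U_0^{t}$ for the Heisenberg position operator, we have $\langle U_0^t u_+, e^{i\xi\hat x/t} U_0^t u_+\rangle = \langle u_+, e^{i\xi \hat x_0(t)/t} u_+\rangle$. The existence and identification of the asymptotic velocity operator $\hat v_0$ — the fact that $e^{i\xi\hat x_0(t)/t}$ converges strongly to $e^{i\xi\hat v_0}$ as $t\to+\infty$ for the translation-invariant evolution $U_0 = SC_0$ — is exactly the content cited from \cite{GJS04PRE,Suzuki16QIP}; this is most cleanly seen after Fourier transform on $\Z$, where $U_0$ becomes multiplication by a $2\times 2$ matrix symbol and $\hat v_0$ is the (matrix-valued) derivative of its eigenphases by a stationary-phase / RAGE-type argument. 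Granting this, $\langle u_+, e^{i\xi\hat x_0(t)/t}u_+\rangle \to \langle u_+, e^{i\xi\hat v_0}u_+\rangle = \int e^{i\xi v}\, d\|E_{\hat v_0}(v)u_+\|^2$ by the spectral theorem, which is $\int e^{i\xi v}\,d\mu_V(v)$ with $\mu_V = \|E_{\hat v_0}(\cdot)u_+\|^2$. Combining with the previous step and applying Lévy's continuity theorem finishes the proof.

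The only genuine obstacle is the last step, the existence of the asymptotic velocity operator for $U_0$ and the strong convergence $e^{i\xi\hat x_0(t)/t}\to e^{i\xi\hat v_0}$; but this is not new here — it is the established linear QW result quoted from \cite{GJS04PRE,Suzuki16QIP}, so in this paper it may simply be cited. Everything else (the bilinear splitting, control of the cross terms by $\|r(t)\|_{l^2}\to 0$, and the passage from characteristic functions to convergence in law) is routine functional analysis and probability.
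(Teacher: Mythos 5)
Your proposal is correct and follows essentially the same route as the paper's appendix proof: both pass to characteristic functions, kill the cross terms coming from the scattering error by Cauchy--Schwarz and unitarity of $e^{i\xi\hat x/t}$, reduce the main term to $\langle u_+, e^{i\xi\hat x_0(t)/t}u_+\rangle$, and conclude via the cited strong convergence to $e^{i\xi\hat v_0}$ together with the spectral theorem and L\'evy continuity. The only differences are cosmetic (you write the decomposition $u(t)=U_0^t u_+ + r(t)$ explicitly, and you correctly use $u_+$ where the paper's Lemma~\ref{lem_A1} has an apparent typo $u_0$).
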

A direct calculation yields the fact that 
\begin{equation}
\label{wldist} 
d \|E_{\hat v_0}(v) u_+\|^2 = w(v) f_K(v;|a|), 
\end{equation}
where $f_K$ is the Konno function defined for all $r > 0$ as
\[ f_K(v;r) = \begin{cases} 
	\frac{\sqrt{1-r^2}}{\pi(1-v^2)\sqrt{r^2 -v^2}},
		& |v| < r \\
		0, & |v| \geq 0
		\end{cases} \]
and $w(v)$ the function determined by $u_+$ (see \eqref{defw}
for more details).    
We prove Proposition \ref{p_wlt} and \eqref{wldist} in the appendix.

We next consider the inverse scattering problem, 
which is the problem of identifying unknown nonlinear terms 
under the assumption that 
all of the scattering states are known.
As for inverse scattering problems  
for some nonlinear Schr\"odinger equations and related equations,
there are many papers (see, e.g., 
\cite{%
CarlesGallagher2009,
MorawetzStrauss1973,
Sasaki2012,
Sasaki2015,
SasakiSuzuki2011,
Strauss1974,
Weder1997} 
and references therein).  
Using Theorem \ref{thm:scat} 
and modifying methods in the above papers, 
we obtain a reproducing formula for the nonlinear coin.

For simplicity, we consider the case that $C_N$ can be expressed as $C_N(s_1,s_2)=\tilde C_N(s_1^2,s_2^2)$ with 
\begin{align}\label{A}
\tilde C_N\in C^2(\R^2;U(2))\text{ and }\|\tilde C_N(s_1,s_2)-I_2\|_{\C^2\to \C^2}\leq C  |s_1|+|s_2|.
\end{align}

We define $\delta_{j,x}\in l^1(\Z;\C^2)$ by $\delta_{j,x}(y)= e_j$ if $y=x$ and $\delta_{j,x}(y)=0$ if $y\neq x$ where $e_1=\begin{pmatrix} 1 \\ 0 \end{pmatrix}$ and $e_2=\begin{pmatrix} 0 \\ 1 \end{pmatrix}$.
Further, for $g:\R_+\to \C$, we define $D_\lambda g(\lambda)=\lambda^{-1} \(g(2\lambda)-g(\lambda)\)$.
We define the nonlinear operator $W^*$ by
\begin{align*}
W^* u_0 = u_0 + \sum_{t=0}^\infty U_0^{-t}\(\hat C_N-I_2\) U(t)u_0,
\end{align*}
which is well defined on $\{u_0\in l^1\ |\ \|u_0\|_{l^1}<\delta\}$ by Theorem \ref{thm:scat}.
 
\begin{theorem}[Inverse scattering]\label{thm:invscat}
Assume $(\mathrm{A})$.
Then, we have
\begin{align*}
\left\| 
\begin{pmatrix} \mathcal L_{11}(\lambda)-\lambda D_\lambda \mathcal L_{11}(\lambda)&  D_\lambda \mathcal L_{11}(\lambda)\\
 \mathcal L_{12}(\lambda)-\lambda D_\lambda \mathcal L_{12}(\lambda)& D_\lambda \mathcal L_{12}(\lambda)
\end{pmatrix} -\partial_{1}\tilde C_N(0,0)\right\|_{\C^2\to \C^2}\leq C  \lambda^3,
\end{align*}
and
\begin{align*}
\left\| 
\begin{pmatrix} D_\lambda \mathcal L_{21}(\lambda)& 
\mathcal L_{21}(\lambda) -\lambda D_\lambda \mathcal L_{21}(\lambda)\\ D_\lambda \mathcal L_{22}(\lambda)&  \mathcal L_{22}(\lambda)-\lambda D_\lambda \mathcal L_{22}(\lambda)
\end{pmatrix} -\partial_{2}\tilde C_N(0,0)\right\|_{\C^2\to \C^2}\leq C  \lambda^3,
\end{align*}
where
\begin{align*}
\mathcal L_{1j}&=\lambda^{-10}\<\(U_0^{-1}W^* U_0-W^*\) \(\lambda^2 \delta_{1,0}+\lambda^3 \delta_{2,0}\), \delta_{j,0}\>,\\
\mathcal L_{2j}&=\lambda^{-10}\<\(U_0^{-1}W^* U_0-W^*\) \(\lambda^3 \delta_{1,0}+\lambda^2 \delta_{2,0}\), \delta_{j,0}\>.
\end{align*}
\end{theorem}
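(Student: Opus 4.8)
The plan is to extract the leading-order Taylor coefficients of $\tilde C_N$ from the nonlinear scattering data by a finite-difference scheme in the small parameter $\lambda$, using the fact that $W^\ast$ encodes the Born-series-type expansion of the nonlinearity. First I would unwind the definition of $W^\ast$ and compute $U_0^{-1}W^\ast U_0-W^\ast$ applied to a small initial state $\lambda^2\delta_{1,0}+\lambda^3\delta_{2,0}$. Writing $u_0=\lambda^2\delta_{1,0}+\lambda^3\delta_{2,0}$, one has $\|u_0\|_{l^1}=\lambda^2+\lambda^3$, so for $\lambda$ small Theorem \ref{thm:scat}(2) guarantees that the series defining $W^\ast u_0$ converges and that $U(t)u_0=U_0^t u_0+$(nonlinear remainder), where the remainder is controlled by the Strichartz/dispersive estimates (Lemma \ref{lem:stz}, Theorem \ref{thm:1}) and is of higher order in $\|u_0\|_{l^1}$. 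Substituting this into $W^\ast$, the dominant contribution to $\sum_t U_0^{-t}(\hat C_N-I_2)U(t)u_0$ comes from the $t=0$ term, because $(\hat C_N-I_2)$ acting on $U_0^t u_0$ for $t\ge 1$ is quadratically small in the $l^\infty$-norm of the spread-out state $U_0^tu_0$, which decays like $t^{-1/3}$ by dispersion, and under assumption (A) the nonlinearity is at least quadratic in the state; summing these contributions in $t$ gives something of order $\lambda^{12}$ or smaller, hence negligible after dividing by $\lambda^{10}$.

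The key computation is therefore the $t=0$ term: $(\hat C_N-I_2)u_0$ evaluated at $x=0$ equals $\bigl(\tilde C_N(|u_{0,1}(0)|^2,|u_{0,2}(0)|^2)-I_2\bigr)u_0(0) = \bigl(\tilde C_N(\lambda^4,\lambda^6)-I_2\bigr)(\lambda^2 e_1+\lambda^3 e_2)$. Expanding $\tilde C_N$ to first order around $(0,0)$ via (A), $\tilde C_N(\lambda^4,\lambda^6)-I_2 = \lambda^4\,\partial_1\tilde C_N(0,0)+\lambda^6\,\partial_2\tilde C_N(0,0)+O(\lambda^8)$, so the $t=0$ term contributes $\lambda^6\,\partial_1\tilde C_N(0,0)e_1+\lambda^7\bigl(\partial_1\tilde C_N(0,0)e_2+\partial_2\tilde C_N(0,0)e_1\bigr)+O(\lambda^9)$. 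Then $U_0^{-1}W^\ast U_0-W^\ast$ shifts the argument $u_0\mapsto U_0 u_0$; since $U_0=SC_0$ moves mass to neighbouring sites but preserves the $l^1$ norm, repeating the above with $U_0u_0$ in place of $u_0$ and tracking which component lands at which site, I would pair against $\delta_{j,0}$ and read off linear combinations of the entries of $\partial_1\tilde C_N(0,0)$. The two different inner products — against $\lambda^2\delta_{1,0}+\lambda^3\delta_{2,0}$ versus $\lambda^3\delta_{1,0}+\lambda^2\delta_{2,0}$ — are exactly what is needed to separate the first and second columns; the operator $D_\lambda$ (the dyadic difference quotient $\lambda^{-1}(g(2\lambda)-g(\lambda))$) is then the device that isolates the $\lambda^6$ versus $\lambda^7$ pieces: if $\mathcal L_{1j}(\lambda)=\alpha+\beta\lambda+O(\lambda^3)$ then $\mathcal L_{1j}-\lambda D_\lambda\mathcal L_{1j}=\alpha+O(\lambda^3)$ and $D_\lambda\mathcal L_{1j}=\beta+O(\lambda^2)$, which recovers the four entries $\alpha,\beta$ assembled into the displayed $2\times2$ matrices.

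The main obstacle I anticipate is \emph{not} the algebra of the leading term but the rigorous bookkeeping of the error terms — showing that every contribution other than the single $t=0$, first-order-Taylor term is $O(\lambda^{13})$ after the $\lambda^{-10}$ normalization and the $D_\lambda$ operation (note $D_\lambda$ only costs a factor $\lambda^{-1}$, so an $O(\lambda^{k})$ quantity stays $O(\lambda^{k-1})$). This requires: (i) the tail $\sum_{t\ge1}$ estimate, where I use that $\|U(t)u_0\|_{l^\infty}\lesssim t^{-1/3}\|u_0\|_{l^1}$ from the dispersive estimate and that the nonlinearity is quadratic, so the $t$-th term is $\lesssim t^{-2/3}\|u_0\|_{l^1}^3\sim t^{-2/3}\lambda^6$, and one must confirm the sum over $t$ together with the additional decay from the smallness of the state actually converges and is higher order — here the precise exponent structure $m=2$, $l^1$ data is essential and may need the full Strichartz machinery rather than the crude bound; (ii) the difference $U(t)u_0-U_0^tu_0$ in the nonlinear Duhamel remainder, which feeds back into $W^\ast$ and must be shown to be $o(\lambda^{10})$ uniformly; and (iii) controlling the $O(\lambda^8)$ Taylor remainder of $\tilde C_N$ from (A) after multiplication by the state and summation. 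Once these three error estimates are in place, the theorem follows by matching coefficients as above.
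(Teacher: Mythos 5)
There is a genuine gap at the heart of your argument: the claim that in $\sum_{t\ge 0} U_0^{-t}(\hat C_N-I_2)U(t)u_0$ the $t=0$ term dominates and the tail is ``of order $\lambda^{12}$ or smaller'' is false. Under (A) the nonlinearity is quintic, $(\hat C_N-I_2)u = O(|u|^4)u$, so the sharpest available bound on the $t$-th term paired with $\delta_{l,0}$ is $\|U_0^t u_0\|_{l^5}^5\,\|U_0^t\delta_{l,0}\|_{l^\infty}\lesssim t^{-4/3}\|u_0\|_{l^1}^5\cdot t^{-1/3}\sim t^{-5/3}\lambda^{10}$, whose sum over $t\ge 1$ is $O(\lambda^{10})$ --- exactly the size of the $t=0$ term and of the normalization $\lambda^{-10}$. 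The tail therefore contributes at order $O(1)$ to $\mathcal L_{ij}$ and cannot be discarded. This is precisely why the theorem is stated for $U_0^{-1}W^*U_0-W^*$ rather than for $W^*-\mathrm{id}$: the paper first replaces $U(t)$ by $U_0^t$ inside the nonlinearity, which by Lemma \ref{lem:2:thm:is:1} costs only $O(\|u_0\|_{l^1}^9)=O(\lambda^{18})$, and then the two linearized sums $\sum_{t\ge 0}\langle(\hat C_N-I_2)U_0^t u_0,U_0^t v_0\rangle$ and $\sum_{t\ge 1}\langle(\hat C_N-I_2)U_0^t u_0,U_0^t v_0\rangle$ telescope \emph{exactly}, leaving only the $t=0$ term. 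No smallness of the $t\ge 1$ terms is ever needed or true. Your plan of ``repeating the above with $U_0u_0$ in place of $u_0$'' treats the conjugation as a technical shift rather than as the cancellation mechanism, so the error analysis you outline in (i) cannot close.

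A second, more mechanical error: since $\hat C_N u(x)=C_N(|u_1(x)|^2,|u_2(x)|^2)u(x)$ and $C_N(s_1,s_2)=\tilde C_N(s_1^2,s_2^2)$, the $t=0$ term is $\bigl(\tilde C_N(\lambda^{8},\lambda^{12})-I_2\bigr)(\lambda^2e_1+\lambda^3e_2)$, not $\bigl(\tilde C_N(\lambda^4,\lambda^6)-I_2\bigr)(\cdots)$. The correct leading orders are $\lambda^{10}$ and $\lambda^{11}$, consistent with the $\lambda^{-10}$ normalization and with $D_\lambda$ extracting the coefficient of $\lambda$; your expansion produces $\lambda^6$ and $\lambda^7$, which would diverge after dividing by $\lambda^{10}$. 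The final $D_\lambda$ bookkeeping and the use of the two probe states to separate the columns are in the spirit of the paper's proof, but the two points above must be repaired before the argument is valid.
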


Finally, we study NLQW numerically and observe soliton like phenomena for the model \eqref{4.5.1}.
It is shown that there is a traveling wave type solution of NLQW with constant velocity.

The paper is organized as follows.
In section 2, we prove the dispersive estimate and Strichartz estimate for QW with constant coin.
In section 3, we prove Theorem \ref{thm:scat}.
In section 4, we prove Theorem \ref{thm:invscat}.
In section 5, we numerically investigate NLQW with the coin given by \eqref{4.5.1}.
In the appendix, we prove Proposition \ref{p_wlt}.

\section{Dispersive and Strichartz estimates}

We first derive the dispersive estimate for the linear evolution $U_0=S\hat C_0$ by an elementary argument using stationary phase.
We note that this dispersive estimate was first obtained by Sunada and Tate \cite{ST12JFA} in a slightly different form.

We define the (discrete) Fourier transform by
\begin{align}\label{9}
\mathcal F u (\xi):=\sum_{x\in \Z} e^{-\im x \xi}u(x),\quad \xi\in \T:=\R/2\pi \Z,
\end{align}
and 
 the inverse Fourier transform by
\begin{align}\label{10}
\mathcal F^{-1}f(x):=\frac{1}{2\pi}\int_\T e^{\im x \xi}f(\xi)\,d\xi.
\end{align}
Since $S\hat C_0u (x)= P_0 u(x+1)+Q_0u(x+1)$ where $P_0=\begin{pmatrix} a & b \\ 0 & 0 \end{pmatrix}$ and $Q_0 = \begin{pmatrix} 0 & 0 \\ -\bar b & \bar a \end{pmatrix}$, we have
\begin{align*}
\mathcal F(U_0u)(\xi)&=\sum_{x\in \Z}e^{-\im x \xi}\(P_0 u(x+1)+Q_0u(x-1)\)=\sum_{x\in \Z}e^{-\im x \xi}\(e^{\im \xi} P_0 +e^{-\im \xi}Q_0\)u(x)\\&=(e^{\im \xi} P_0 +e^{-\im \xi}Q_0)\mathcal F u(\xi).
\end{align*}
Notice that
\begin{align}\label{11}
\hat U_0(\xi) := e^{\im \xi}P_0+ e^{-\im \xi}Q_0=\begin{pmatrix}e^{\im \xi}a && e^{\im \xi}b\\ - \overline{e^{\im \xi}b} && \overline{e^{\im \xi}a} \end{pmatrix},
\end{align}
is also unitary and the
 eigenvalues are given by
\begin{align}\label{12}
\lambda_\pm(\xi)=w(\xi)\pm \im\sqrt{1-w(\xi)^2}=:e^{\pm\im \tilde p(\xi)},\quad w(\xi)=\mathrm{Re}(e^{\im \xi}a).
\end{align}
Thus diagonalizing $\hat U(\xi)$, we have
\begin{align}\label{12.0}
U_0=\mathcal F^{-1} P^{-1}(\xi)\mathrm{exp}\(\im \begin{pmatrix} \tilde p(\xi) & 0 \\ 0 & - \tilde p(\xi)\end{pmatrix}\)P(\xi)\mathcal F,
\end{align}
where
\begin{align}\label{12.01}
P(\xi)=\frac{1}{|b|^2+|e^{\im \xi}a-\lambda_+(\xi)|^2}
\begin{pmatrix}
-e^{-\im \xi}\bar b & -e^{-\im \xi}a + \lambda_-(\xi) \\ -e^{\im \xi} a + \lambda_+(\xi) & -e^{\im \xi} b
\end{pmatrix}.
\end{align}
We set $a=|a|e^{\im \theta_a}$.
Then, since $0<\tilde p<\pi$, setting
\begin{align}\label{12.1}
p(\xi)=\mathrm{arccos}\(|a|\cos (\xi)\),
\end{align}
we have $\tilde p(\xi)=p(\xi+\theta_a)$.
Differentiating \eqref{12.1}, we obtain
\begin{align}\label{12.2}
p'(\xi) &= \frac{|a|\sin \xi}{\sqrt{1-|a|^2\cos^2\xi}},\\
p''(\xi)&=|a|(1-|a|^2)\frac{\cos\xi}{(1-|a|^2\cos^2\xi)^{3/2}},\label{12.3.2}\\
p'''(\xi)&=-|a|(1-|a|^2)\(1+2|a|^2\cos^2 \xi\)\frac{\sin \xi}{(1-|a|^2\cos^2\xi)^{5/2}}.\label{12.3.3}
\end{align}
By \eqref{12.0}, we have
\begin{align}
\(U_0^tu_0\)(x)&=
\(\(\frac{1}{2\pi}\int_{\T}P^{-1}(\xi)\exp\({\im t \begin{pmatrix} p(\xi+\theta_a)+\frac{\cdot}{t}\xi & 0 \\ 0 & -p(\xi+\theta_a)+\frac{\cdot}{t}\xi \end{pmatrix}}\) P(\xi)\,d\xi\)* u_0\)(x),\label{12.3.0}
\end{align}
where $A*u(x):=\sum_{y\in\Z} A(x-y)u(y)$.
We set the projections $P_\pm$ by
\begin{align*}
P_+:=\mathcal F^{-1}P^{-1}(\xi)\begin{pmatrix} 1 & 0 \\ 0 & 0 \end{pmatrix}P(\xi)\mathcal F,
\end{align*}
and $P_-=1-P_+$.
We define
\begin{align}\label{100}
I_\pm(t,s) = \frac{1}{2\pi}\int_\T e^{\im t\(\pm p(\xi)+s(\xi-\theta_a)\)}Q_\pm(\xi)\,d\xi,
\end{align}
where
\begin{align*}
Q_+=P^{-1}(\xi-\theta_a)\begin{pmatrix} 1 & 0 \\ 0 & 0 \end{pmatrix} P(\xi-\theta_a),\quad Q_-=P^{-1}(\xi-\theta_a)\begin{pmatrix} 0 & 0 \\ 0 & 1 \end{pmatrix} P(\xi-\theta_a).
\end{align*}
Then, we can express the generator by
\begin{align}\label{101}
U_0^t u_0 = \sum_{\pm}U_0^t P_\pm u_0=\sum_{\pm} I_\pm(t,\frac{\cdot}{t})*u_0.
\end{align}

The following is the dispersive estimate for QW.

\begin{theorem}\label{thm:1}
Let $0<|a|<1$.
Then, there exists $C>0$ such that for all $t\geq 1$,
\begin{align}\label{12.3}
\|U_0^t u \|_{l^\infty} \leq C t^{-1/3}\|u\|_{l^1}.
\end{align}
\end{theorem}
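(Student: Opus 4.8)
The plan is to reduce the claim, via the kernel representation \eqref{101}, to a uniform oscillatory–integral estimate, and then to run van der Corput's lemma, splitting the circle $\T$ according to where $p''$ or $p'''$ is bounded below.

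First I would use that for a matrix–valued kernel $A:\Z\to\C^{2\times 2}$ one has the convolution bound $\|A*u\|_{l^\infty}\le\big(\sup_{n\in\Z}\|A(n)\|_{\C^2\to\C^2}\big)\|u\|_{l^1}$, immediate from $(A*u)(x)=\sum_y A(x-y)u(y)$. Since \eqref{101} gives $U_0^t u=\sum_{\pm} I_\pm(t,\tfrac{\cdot}{t})*u$ with $I_\pm$ as in \eqref{100}, it suffices to prove
\[
\sup_{s\in\R}\ \|I_\pm(t,s)\|_{\C^2\to\C^2}\ \le\ C\,t^{-1/3},\qquad t\ge1 .
\]
Here $I_\pm(t,s)=\frac1{2\pi}\int_\T e^{\im t\phi_\pm(\xi;s)}Q_\pm(\xi)\,d\xi$ with real phase $\phi_\pm(\xi;s)=\pm p(\xi)+s(\xi-\theta_a)$. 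I would note that the amplitudes $Q_\pm$ are $C^\infty$ on $\T$: since $0<|a|<1$ we have $w(\xi)^2=\Re(e^{\im\xi}a)^2\le|a|^2<1$, so $\lambda_+(\xi)\ne\lambda_-(\xi)$ for every $\xi$ by \eqref{12}, the denominator appearing in $P(\xi)$ in \eqref{12.01} stays bounded away from $0$, and the spectral projections depend smoothly on $\xi$; in particular $\|Q_\pm\|_{L^\infty(\T)}$ and $\|Q_\pm'\|_{L^1(\T)}$ are finite. The decisive structural point is that $\partial_\xi^2\phi_\pm=\pm p''$ and $\partial_\xi^3\phi_\pm=\pm p'''$ are independent of $s$.

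Next I would exploit \eqref{12.3.2} and \eqref{12.3.3}: on $\T$ the function $p''$ vanishes only at $\xi=\pm\pi/2$, and there $p'''(\pm\pi/2)=\mp|a|(1-|a|^2)\ne0$. Hence there exist $\varepsilon>0$ and $c>0$ such that $|p'''|\ge c$ on the $\varepsilon$–neighbourhoods of $\pm\pi/2$ and $|p''|\ge c$ on the complement of these neighbourhoods in $\T$. Choosing a smooth partition of unity $\{\chi_j\}$ subordinate to this finite cover, I would write $I_\pm(t,s)$ as a finite sum of integrals $\frac1{2\pi}\int_\T e^{\im t\phi_\pm}\chi_j Q_\pm\,d\xi$ and apply van der Corput's lemma on each piece — with $k=3$ on the pieces near $\pm\pi/2$ and $k=2$ on the remaining pieces (for $k\ge2$ no monotonicity or sign condition on lower derivatives of the phase is needed). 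This gives a bound $O\big((ct)^{-1/3}(\|\chi_jQ_\pm\|_\infty+\|(\chi_jQ_\pm)'\|_{L^1})\big)$ on the degenerate pieces and $O\big((ct)^{-1/2}(\|\chi_jQ_\pm\|_\infty+\|(\chi_jQ_\pm)'\|_{L^1})\big)$ on the others, with constants independent of $s$ precisely because $p''$ and $p'''$ do not involve $s$. Summing the finitely many contributions and using $t^{-1/2}\le t^{-1/3}$ for $t\ge1$ yields $\|I_\pm(t,s)\|_{\C^2\to\C^2}\le C t^{-1/3}$ uniformly in $s$, which is the required estimate.

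There is no deep obstacle; the points requiring a little care are (i) verifying genuine smoothness of $Q_\pm$, i.e.\ the uniform lower bound on the denominator in \eqref{12.01}, and (ii) making the van der Corput estimate uniform in $s$, which works only because the term $s(\xi-\theta_a)$ is affine in $\xi$ and so contributes nothing to $\phi_\pm''$ or $\phi_\pm'''$ — the decay rate is governed entirely by the curvature of $p$. The exponent $1/3$, rather than $1/2$, is forced by the degenerate critical points at $\xi=\pm\pi/2$, where $p''$ vanishes while $p'''$ does not.
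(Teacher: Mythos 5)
Your proposal is correct and follows essentially the same route as the paper: reduce via the kernel representation \eqref{101} to a bound on $\sup_s\|I_\pm(t,s)\|_{\C^2\to\C^2}$ uniform in $s$, observe that $p''$ and $p'''$ never vanish simultaneously (the paper gets this from the inequality $\bigl(\tfrac{1+2|a|^2}{1-|a|^2}p''\bigr)^2+(p''')^2\ge|a|^2(1-|a|^2)^2$, you get it by locating the zeros of $p''$ at $\pm\pi/2$ directly — an equivalent observation), and then apply van der Corput with $k=2$ and $k=3$ on a corresponding partition of unity. Your additional remarks on the smoothness of $Q_\pm$ and on the $s$-independence of $\phi_\pm''$, $\phi_\pm'''$ make explicit points the paper leaves implicit, but the argument is the same.
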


\begin{proof}
The proof is similar to Theorem 3 of \cite{SK05N}.
By \eqref{101}, it suffices to show that for $t\geq 1$,
\begin{align}\label{12.3.1}
\sup_{s\in\R}\left\|  I_\pm(t,s) \right\|_{\C^2\to \C^2}:=\max_{1\leq i,j\leq 2}\sup_{s\in\R}|I_{\pm,ij}(t,s)|\leq C t^{-1/3},
\end{align}
where $I_\pm$ are given in \eqref{100} and $I_{\pm,ij}$ are the $(i,j)$ matrix component of $I_{\pm}$.
From \eqref{12.3.2} and \eqref{12.3.3},
\begin{align*}
\(\frac{1+2|a|^2}{1-|a|^2}p''(\xi)\)^2+(p'''(\xi))^2&\geq
\(\frac{1+2|a|^2\cos^2\xi}{1-|a|^2\cos^2\xi}p''(\xi)\)^2+(p'''(\xi))^2\\&=|a|^2(1-|a|^2)^2\frac{\(1+2|a|^2\cos^2\xi\)^2}{(1-|a|^2\cos^2\xi)^{5}}\geq |a|^2(1-|a|^2)^2.
\end{align*}
This implies
$
\min_{\xi \in \T}(|p''(\xi)|, |p'''(\xi)|)>0.
$
Therefore, we can set $\psi_l\in C^\infty$ ($l=1,2$) s.t.\ $\psi_1(\xi)+\psi_2(\xi)=1$, $|p''(\xi)|\geq \delta$ for $\xi \in \mathrm{supp}\psi_1$ and $|p'''(\xi)|\geq \delta$ for $\xi \in \mathrm{supp}\psi_2$.
Now, Theorem \ref{thm:1} follows from Van der Corput lemma:

\begin{lemma}[Van der Corput lemma]\label{lem:1}
Let $\psi\in C^\infty$ and $k\geq 2$ and $|q^{(k)}(\xi)|\geq \delta$ in $\xi \in \mathrm{supp} \psi$.
Then, we have
\begin{align}\label{12.3.5}
\left| \int_{\T}e^{\im t q(\xi)}\psi(\xi) \,d\xi\right|\leq C  (t \delta)^{1/k}
\end{align}
\end{lemma}

\begin{proof}
See \cite{SteinHarmonic}.
\end{proof}

\noindent
From Lemma \ref{lem:1}, we obtain the claim of Theorem \ref{thm:1}.
\end{proof}

As the case of Schr\"odinger equations and discrete Sch\"odinger equaiton (or continuous time QW), we can derive the Strichartz estimate from dispersive estimate.
We define
\begin{align}\label{12.3.6}
\stz = l^\infty_t(\Z_{\geq 0};\l^2_x(\Z))\cap l^6_t(\Z_{\geq 0};l^\infty_x(\Z)),\quad \stz^*=l^1_t(\Z_{\geq 0};l^2_x(\Z))+l^{6/5}_t(\Z_{\geq 0};l^1_x(\Z)),
\end{align}
where $\|u\|_{l^p_t l^q_x}:=\(\sum_{t\geq 0} (\sum_{x\in \Z} |u(t,x)|^q)^{p/q}\)^{1/p}$ and 
\begin{align*}
\|u\|_{\stz}=\max(\|u\|_{l^\infty_t l^2_x}, \|u\|_{l^6_t l^\infty_x}),\quad \|u\|_{\stz^*}=\inf_{u_1+u_2=u}\(\|u_1\|_{l^1_t l^2_x}+\|u_2\|_{l^{6/5}_tl^1_x}\).
\end{align*}
We further define the weak $l^{p}$ space $l^{p,\infty}$ by its norm
\begin{align*}
\|f\|_{l^{p,\infty}}:=\sup_{\gamma>0} \gamma \(\#\{x\in \Z\ |\ |f(x)|>\gamma\}\)^{1/p},
\end{align*}
where $\#$ is the counting measure.
It is well known that $\|f\|_{l^{p,\infty}}\leq \|f\|_{l^p}$ and moreover we have $\|\<\cdot\>^{-1/p}\|_{l^{p,\infty}}<\infty$ ($\<x\>:=(1+|x|^2)^{1/2}$) and the Young's inequality for weak type spaces
\begin{align}\label{12.3.7}
\|f*g\|_{l^{p_0}}\leq C  \|f\|_{l^{p_1,\infty}}\|g\|_{l^{p_2}},
\end{align}
for $1<p_0,p_1,p_2<\infty$ and $1+p_0^{-1}=p_1^{-1}+p_2^{-1}$ (see Theorem 1.4.24 of \cite{GrafakosC}).

By parallel argument for the proof of Strichartz estimate of free Schr\"odinger equation, we have the following discrete Strichatrz estimate.

\begin{lemma}[Strichartz estimate]\label{lem:stz}
We have
\begin{align*}
\|U_0^t u_0\|_{\stz}\leq C\|u_0\|_{l^2},\quad \|\sum_{s=0}^{t} U_0^{t-s}f(s)\|_{\stz}\leq C\|f\|_{\stz^*},
\end{align*}
where $C>0$ is a constant.
\end{lemma}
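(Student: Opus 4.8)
The plan is to deduce the discrete Strichartz estimate from the dispersive estimate of Theorem~\ref{thm:1} together with the $l^2$-conservation of $U_0$, following the classical $TT^*$ argument of Keel--Tate adapted to the discrete-time setting. First I would record the two endpoint bounds we actually have for the propagator: the trivial bound $\|U_0^t u\|_{l^2_x}=\|u\|_{l^2_x}$ (since $U_0$ is unitary) and the decay bound $\|U_0^t u\|_{l^\infty_x}\le C\,t^{-1/3}\|u\|_{l^1_x}$ for $t\ge1$ from Theorem~\ref{thm:1}; for $t=0$ one has the trivial $l^1\to l^\infty$ embedding. Interpolating these (via the Riesz--Thorin theorem applied on $\Z$, with $1/q=\tfrac12-\tfrac1q$ handled by duality $l^{q'}\to l^{q}$) gives $\|U_0^t u\|_{l^q_x}\le C\,t^{-\frac13(1-2/q)}\|u\|_{l^{q'}_x}$ for $2\le q\le\infty$. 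The exponent $6$ in $\stz$ is chosen precisely so that the decay rate $\frac13(1-2/q)$ equals $\frac13\cdot\frac23=\frac29$ at $q=\infty$\,---\,wait, more to the point, $6$ is the exponent for which the time-summability in the $TT^*$ estimate below is borderline-admissible: with $\alpha=1/3$ the admissible pair is $(p,q)$ with $\frac1p=\alpha(\frac12-\frac1q)$, giving $p=6,q=\infty$.

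Next I would set up the $TT^*$ computation. For the homogeneous estimate $\|U_0^t u_0\|_{\stz}\le C\|u_0\|_{l^2}$, the $l^\infty_t l^2_x$ bound is immediate from unitarity, so the content is the $l^6_t l^\infty_x$ bound. By duality this is equivalent to boundedness of the operator $T^*g=\sum_{s\ge0}U_0^{-s}g(s)$ from $l^{6/5}_t l^1_x$ to $l^2_x$, which in turn (taking $TT^*$) reduces to showing
\begin{align*}
\Big\|\sum_{s\ge0} U_0^{t-s}g(s)\Big\|_{l^6_t l^\infty_x}\le C\|g\|_{l^{6/5}_t l^1_x}.
\end{align*}
For fixed $x$ and $t$, apply the interpolated dispersive bound at $q=\infty$ inside the sum and then use the one-dimensional discrete Hardy--Littlewood--Sobolev inequality in the time variable: the kernel $|t-s|^{-1/3}$ lies in $l^{3,\infty}_t(\Z)$, and Young's inequality for weak-type spaces \eqref{12.3.7} with $1+\tfrac16=\tfrac13+\tfrac56$... careful, the exponents must be $1+p_0^{-1}=p_1^{-1}+p_2^{-1}$ with $p_0=6$, $p_1=3$, $p_2=6/5$: indeed $1+\tfrac16=\tfrac76=\tfrac13+\tfrac56$. (For small $|t-s|$ where the $t^{-1/3}$ bound is replaced by the trivial $l^1\to l^\infty$ bound, the finitely many terms are absorbed into the same weak-$l^3$ norm since a bounded kernel supported near $0$ is trivially in $l^{3,\infty}$.) This yields the desired $l^6_tl^\infty_x$ bound on the $TT^*$ operator, hence the homogeneous estimate and its dual $\|T^*g\|_{l^2}\le C\|g\|_{\stz^*}$ after also handling the $l^1_tl^2_x$ piece of $\stz^*$ trivially by unitarity and Minkowski.

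For the inhomogeneous (Duhamel) estimate $\|\sum_{s=0}^{t}U_0^{t-s}f(s)\|_{\stz}\le C\|f\|_{\stz^*}$, I would split $f=f_1+f_2$ with $f_1\in l^1_tl^2_x$ and $f_2\in l^{6/5}_tl^1_x$ realizing the infimum in $\|f\|_{\stz^*}$, and bound each of the four combinations (measuring the output in $l^\infty_tl^2_x$ or in $l^6_tl^\infty_x$). The $l^1_tl^2_x\to l^\infty_tl^2_x$ bound is Minkowski plus unitarity; the $l^1_tl^2_x\to l^6_tl^\infty_x$ and $l^{6/5}_tl^1_x\to l^\infty_tl^2_x$ bounds follow from the homogeneous estimate and its dual already proved; and the $l^{6/5}_tl^1_x\to l^6_tl^\infty_x$ bound is exactly the $TT^*$ estimate established above (using that the sum runs over $0\le s\le t$, so the kernel is a truncation of the full one and the same weak-$l^3$ bound applies). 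Here one must note the retarded restriction $s\le t$ does not cause problems because the Christ--Kiselev lemma is not needed at the non-endpoint exponent $6<\infty$; the full (non-retarded) estimate combined with a standard argument, or simply direct estimation of the retarded kernel, suffices.

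The main obstacle I anticipate is purely bookkeeping rather than conceptual: one must carefully track the $t=0$ (and small-$|t-s|$) contributions where the $t^{-1/3}$ decay is unavailable, making sure they are absorbed into the weak-$l^3$ time kernel, and one must verify that the discrete Young and interpolation inequalities apply verbatim on $\Z$ and $\Z_{\ge0}$ with the counting measure\,---\,both are standard (the paper already cites \cite{GrafakosC} for \eqref{12.3.7}), so no genuine difficulty arises. I would therefore present the proof compactly, emphasizing the interpolation step and the single application of weak-type Young's inequality in time, and relegate the four-way case split in the inhomogeneous estimate to a remark that it is identical to the Schrödinger case.
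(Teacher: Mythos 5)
Your proposal is correct and follows essentially the same route as the paper: interpolation between the $t^{-1/3}$ dispersive bound of Theorem~\ref{thm:1} and $l^2$-conservation, followed by the $TT^*$/duality argument with the weak-type Young inequality \eqref{12.3.7} applied to the kernel $\langle t-s\rangle^{-\theta/3}\in l^{3/\theta,\infty}$ in the time variable, and a direct treatment of the retarded sum for the inhomogeneous estimate. The only cosmetic difference is that you work at the single admissible pair $(6,\infty)$ while the paper carries the $\theta$-parametrized family of pairs and interpolates at the end; the substance is identical.
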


\begin{proof}
We say $(p,q)$ is admissible if $(p_\theta^{-1},q_\theta^{-1})=(\frac{\theta}{6},\frac{1-\theta}{2})$ for some $\theta\in [0,1]$.
By interpolation between Theorem \ref{thm:1} and $l^2$ conservation, we have $\|U_0^t u_0\|_{l^{q_\theta}}\leq C\<t\>^{-\theta/3}\|u_0\|_{l^{q_\theta'}}$.
Here, $q_\theta' =\frac{q_\theta}{1-q_\theta}$ is the H\"older conjugate.
We first show the dual estimate:
\begin{align}\label{12.3.8}
\|\sum_{s=0}^\infty U_0^{-s}f(s)\|_{l^2}\leq C\|f\|_{\stz^*}
\end{align}
For  admissible $(p_\theta,q_\theta)$, we have
\begin{align}
\|\sum_{s=0}^\infty U_0^{-s} f(s) \|_{l^2}^2 
&=\<\sum_{s=0}^\infty U_0^{-s}f(s),\sum_{s=0}^\infty U_0^{-t}f(t)\>
=\sum_{t=0}\<\sum_{s=0}^\infty U_0^{t-s}f(s),f(t)\>\label{12.3.9}
\\&\leq \sum_{t=0}^\infty \sum_{s=0}^\infty \|U_0^{t-s} f(s) \|_{l^{q_\theta}} \|f(t)\|_{l^{q_\theta'}}
\leq C \sum_{t=0}^\infty \(\sum_{s=0}^\infty \<t-s\>^{-\theta/3} \|f(s)\|_{l^{q_\theta'}}\) \|f(t)\|_{l^{q_\theta'}}\nonumber
\\&\leq C\left\|\<\cdot\>^{-\theta/3}* \|f(\cdot)\|_{l^{q_\theta'}} \right\|_{l^{p_\theta}} \|f\|_{l^{p_\theta'}l^{q_\theta'}}\leq C\|\<\cdot\>^{-\frac{\theta}{3}}\|_{l^{\frac{3}{\theta},\infty}}\|f\|_{l^{p_\theta'}l^{q_\theta'}}^2\leq C\|f\|_{l^{p_\theta'}l^{q_\theta'}}^2,\nonumber
\end{align}
where we have used \eqref{12.3.7} in the third line.
Therefore, we have \eqref{12.3.8}.
Notice that by the same argument we have
\begin{align}\label{12.3.10}
\|\sum_{s=0}^t U_0^{t-s}f(s)\|_{l^\infty l^2}+\|\sum_{s=t}^\infty U_0^{t-s}f(s)\|_{l^\infty l^2}\leq C \|f\|_{l^{p_\theta'}l^{q_\theta'}}.
\end{align}
The first claim follows from a duality argument using \eqref{12.3.8}.
\begin{align*}
\|U_0^t u_0 \|_{\stz}=\sup_{\|f\|_{\stz^*}\leq 1}  \sum_{t=0}^\infty \<U_0^t u_0, f\>=\sup_{\|f\|_{\stz^*}\leq 1}   \< u_0, \sum_{t=0}^\infty U_0^{-t} f\>\leq C\sup_{\|f\|_{\stz^*}\leq 1}\|u_0\|_{l^2} \|f\|_{\stz^*}\leq C\|u_0\|_{l^2}.
\end{align*}
We show the second inequality (inhomogeneous Strichartz).
For admissible $(p_\theta,q_\theta)$, applying the argument of \eqref{12.3.9}, we have
\begin{align}\label{12.3.11}
\|\sum_{s=0}^t U_0^{t-s} f(s)\|_{l^{p_\theta} l^{q_\theta}}\leq C \|\<\cdot\>^{-\frac{\theta}{3}}\|_{l^{\frac{3}{\theta},\infty}}\|f\|_{l^{p_\theta'}l^{q_\theta'}}\leq C \|f\|_{l^{p_\theta'}l^{q_\theta'}}.
\end{align}
Combining \eqref{12.3.10} and \eqref{12.3.11} we have
\begin{align}\label{12.3.12}
\|\sum_{s=0}^t U_0^{t-s} f(s)\|_{\stz}\leq C  \|f\|_{l^{6/5}l^{1}}.
\end{align}
By \eqref{12.3.10}, for admissible pair $(p,q)$,
\begin{align}
\|\sum_{s=0}^t U_0^{t-s}f(s)\|_{l^p l^q}&=\sup_{\|g\|_{l^{p'}l^{q'}}\leq 1} \sum_{t=0}^\infty\< \sum_{s=0}^t U_0^{t-s}f(s),g(t)\>=\sup_{\|g\|_{l^{p'}l^{q'}}\leq 1}\sum_{s=0}^\infty\<  f(s),\sum_{t=s}^\infty U_0^{s-t}g(t)\>\nonumber\\
&\leq \sup_{\|g\|_{l^{p'}l^{q'}}\leq 1}\|f\|_{l^1l^2} \|\sum_{t=s}^\infty U_0^{s-t}g(t) \|_{l^\infty l^2}\leq C \|f\|_{l^1l^2}.
\end{align}
Therefore, by interpolation, we have the conclusion.
\end{proof}

If we only use Strichartz estimate, we can only handle the case $\|C_N(s_1,s_2)\|_{\C^2\to \C^2}\leq C\(  s_1^3+s_2^3\)$.
To lower the power of the nonlinearity, we adapt the idea of Mielke and Patz \cite{MP10AA} (see also \cite{MP12DCDS}).

\begin{theorem}[Improved decay estimate]\label{thm:mp}
Let $0<|a|<1$.
Then, we have
\begin{align}\label{12.4}
\| U_0^t u_0 \|_{l^{4,\infty}}\leq C \<t\>^{-1/4}\|u_0\|_{l^1}.
\end{align}
\end{theorem}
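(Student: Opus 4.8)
\emph{Sketch of proof.} The plan is to reduce \eqref{12.4} to a pointwise estimate on the convolution kernel of $U_0^t$ and then to use that, although this kernel only decays like $t^{-1/3}$ in $l^\infty_x$ (because $p$ has inflection points), the set of lattice sites where it is that large is small. Put $K_\pm(t,x):=I_\pm(t,x/t)$, so that by \eqref{101}, $U_0^tu_0=\sum_\pm K_\pm(t,\cdot)*u_0$ and $K_\pm(t,\cdot)$ is the (matrix-valued) kernel of $U_0^tP_\pm$; write $k_\pm(t,x):=\|K_\pm(t,x)\|_{\C^2\to\C^2}$. Since $l^{4,\infty}$ is normable and translation invariant, convolution against an $l^1$ function is bounded on it, so (the matrix case following from the scalar one by pointwise domination of norms) it suffices to show $\|k_\pm(t,\cdot)\|_{l^{4,\infty}}\le C\<t\>^{-1/4}$. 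For $t$ in a bounded range this follows from $\|k_\pm(t,\cdot)\|_{l^{4,\infty}}\le\|k_\pm(t,\cdot)\|_{l^2}\le C$, because the kernel of $U_0^tP_\pm$ has columns of $l^2$-norm at most $\|U_0^tP_\pm\|_{l^2\to l^2}\le1$; so assume $t\ge1$. Unravelling the weak-$l^4$ quasinorm, it is enough to prove
\begin{equation}\label{eq:count}
\gamma^4\,\#\bigl\{x\in\Z\ :\ k_\pm(t,x)>\gamma\bigr\}\le Ct^{-1}\qquad\text{for all }\gamma>0 .
\end{equation}

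I would check \eqref{eq:count} in three ranges. If $\gamma\ge C_1t^{-1/3}$, with $C_1$ the constant in \eqref{12.3.1}, the set is empty, since $\|k_\pm(t,\cdot)\|_{l^\infty}\le\sup_{s}\|I_\pm(t,s)\|_{\C^2\to\C^2}\le C_1t^{-1/3}$ by Theorem \ref{thm:1}. If $\gamma\le C_2t^{-1/2}$, then by Chebyshev $\#\{k_\pm(t,\cdot)>\gamma\}\le C\gamma^{-2}$, so the left side of \eqref{eq:count} is $\le C\gamma^2\le CC_2^2t^{-1}$. The range $C_2t^{-1/2}<\gamma<C_1t^{-1/3}$ is the crux. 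Here I would insert into \eqref{100} a smooth partition of unity $1=\psi_1+\psi_2$, where $\psi_1$ is supported in small fixed neighborhoods of the two inflection points $\xi=\pm\tfrac\pi2$ of $p$ — there $p''=0$ but $|p'''|\ge\delta>0$ by \eqref{12.3.3} — and $\psi_2$ is supported where $|p''|\ge\delta>0$, and split $K_\pm=K_\pm^{(1)}+K_\pm^{(2)}$ accordingly. On $\operatorname{supp}\psi_2$, the van der Corput estimate with $k=2$ (Lemma \ref{lem:1}), which is insensitive to the linear term $s(\xi-\theta_a)$ of the phase, gives $\|K_\pm^{(2)}(t,\cdot)\|_{l^\infty}\le C_3t^{-1/2}$, so with $C_2:=2C_3$ the set $\{\,\|K_\pm^{(2)}(t,\cdot)\|_{\C^2\to\C^2}>\gamma/2\,\}$ is empty in this range. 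For $K_\pm^{(1)}$, near $\xi=\tfrac\pi2$ one has $p(\xi)=p(\tfrac\pi2)+|a|(\xi-\tfrac\pi2)+\tfrac16p'''(\tfrac\pi2)(\xi-\tfrac\pi2)^3+O((\xi-\tfrac\pi2)^4)$ with $p'''(\tfrac\pi2)\neq0$, and the Airy-type bound
\[
\Bigl|\int e^{\im(\mu\eta+\lambda\eta^3)}\chi(\eta)\,d\eta\Bigr|\le C|\lambda|^{-1/3}\bigl(1+|\mu|\,|\lambda|^{-1/3}\bigr)^{-1/4},\qquad\lambda\neq0,\ \chi\in C^\infty_c,
\]
(obtained by rescaling $\eta=|\lambda|^{-1/3}\zeta$ and using $|\mathrm{Ai}(y)|\le C(1+|y|)^{-1/4}$, the cubic-and-higher remainder of $p$ and the smooth amplitude $Q_\pm$ being absorbed by the stability of stationary phase) gives, uniformly in $x$,
\[
k_\pm^{(1)}(t,x)\le Ct^{-1/3}\Bigl[\bigl(1+t^{-1/3}|x-t|a|\,|\bigr)^{-1/4}+\bigl(1+t^{-1/3}|x+t|a|\,|\bigr)^{-1/4}\Bigr],
\]
the two terms being the caustics $x=\pm t|a|$ of the two inflection points. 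Hence $k_\pm^{(1)}(t,x)>\gamma/2$ forces $|x\mp t|a|\,|<Ct^{-1}\gamma^{-4}$ for one of the signs, so $\#\{k_\pm^{(1)}(t,\cdot)>\gamma/2\}\le Ct^{-1}\gamma^{-4}$ (the rounding term $O(1)$ being $\le t^{-1}\gamma^{-4}$ as $\gamma<C_1t^{-1/3}$). Combining the two pieces proves \eqref{eq:count} in the remaining range, hence \eqref{12.4}.

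The single nontrivial step is the uniform-in-$x$ Airy-type bound on $K_\pm^{(1)}$: after the rescaling $\eta=|\lambda|^{-1/3}\zeta$ the domain of integration becomes long, and one must control the cubic-and-higher Taylor remainder of $p$ together with the non-constant smooth amplitude $Q_\pm$ from \eqref{100} with constants independent of the linear coefficient $\mu$, equivalently of $x$. Everything else — the van der Corput step, the $l^2$ and $l^\infty$ bounds, the weak-type Young inequality, and the counting — is routine. An alternative that stays within Lemma \ref{lem:1} is to run the $k=2$ van der Corput estimate on the dyadic subinterval around $\tfrac\pi2$ containing the (now nondegenerate) critical point of the full phase and to integrate by parts on its complement inside $\operatorname{supp}\psi_1$; this recovers the same pointwise bound on $K_\pm^{(1)}$.
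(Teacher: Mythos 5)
Your argument is correct and lands on the same key estimate as the paper, but it executes the two nontrivial steps differently, so a comparison is worth recording. The paper also reduces, via the weak-type Young inequality, to a weak-$l^4$ bound on the kernel $I_{\pm}(t,x/t)$ and then to the counting estimate \eqref{12.6}; its engine is the pointwise bound \eqref{13}, $|I_{\pm,ij}(t,s)|\le C t^{-1/2}\bigl(1+|s^2-|a|^2|^{-1/4}\bigr)$, which (after substituting $s=x/t$) is exactly your Airy-type bound $Ct^{-1/4}\,|x\mp t|a||^{-1/4}$ away from the caustics, matched with the global $t^{-1/3}$ bound at them. The difference is in how that bound is proved: the paper localizes around the two stationary points of the full phase $\pm p(\xi)+s\xi$ in adjustable windows of width $\delta_1$, proves the lower bound \eqref{16} $|p'(\xi)+s|\gtrsim(|a|^2-s^2)^{1/2}\delta_1+\delta_1^2$ off those windows, and optimizes $\delta_1=(|a|^2-|s|^2)^{-1/4}t^{-1/2}$ --- essentially the dyadic/van der Corput route you sketch as an ``alternative'' at the end --- whereas your primary route rescales to the Airy function; both work, but the paper's version avoids having to control the Taylor remainder of $p$ and the amplitude $Q_\pm$ over the long rescaled interval, which is precisely the step you flag as delicate. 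Your peripheral reductions are also organized differently and arguably more cleanly: you dispose of the range $\gamma\lesssim t^{-1/2}$ by Chebyshev against the uniform $l^2$ bound on the kernel columns, which covers all $x\in\Z$ at once and removes the need for the paper's separate integration-by-parts treatment of the region $|x|\ge t$ (where it uses $|I(t,s)|\le C(ts)^{-1}$ for $|s|\ge1$); and your explicit partition $\psi_1+\psi_2$ separating the inflection points $\xi=\pm\pi/2$ of $p$ from the region $|p''|\ge\delta$ mirrors the cutoff already introduced in the proof of Theorem \ref{thm:1}, so it costs nothing. In short: same strategy and same critical inequality, with a genuinely different (Airy-rescaling) derivation of it and a slicker handling of the small-$\gamma$/large-$x$ regimes; no gap, provided you (or a cited standard stationary-phase lemma) supply the uniform-in-$\mu$ Airy estimate you identify as the crux.
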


\begin{proof}
By \eqref{101} and Young's inequality for weak type spaces (see Theorem 1.2.13 of \cite{GrafakosC}):
\begin{align*}
\|f*g\|_{l^{p,\infty}}\leq C \|f\|_{l^{p,\infty}} \|g\|_{l^1},
\end{align*}
 it suffices to show that for $t\geq 1$,
\begin{align*}
\max_{1\leq i,j\leq 2}\left\| I_{\pm,ij}(t,\frac{\cdot}{t})\right\|_{l^{4,\infty}}\leq C t^{-1/4},\quad t\geq 1,
\end{align*}
where $I_{\pm}$ is given in \eqref{100} and $I_{\pm,ij}$ are the $(i,j)$ matrix component of $I_{\pm}$.
Further, for $|s|\geq 1$, we can show $|I(t,s)|\leq C (ts)^{-1}$ using integration by parts.
Therefore,
\begin{align*}
\|I_{\pm,ij}(t,\frac{\cdot}{t})\|_{l^4(|x|\geq t)}^4\leq C \sum_{|x|\geq t}|x|^{-4}\leq C t^{-3}.
\end{align*}
Thus, it suffices to show that for each $1\leq i,j\leq 2$, we have
\begin{align}\label{12.6}
\#\left\{x\in \Z_{|\cdot|\leq t} \ |\ \left| I_{\pm,ij}(t,\frac{x}{t})\right|>\gamma\right\}\leq C \gamma^{-4}t^{-1}.
\end{align}
Notice that if $\gamma\leq C  t^{-1/2}$, then \eqref{12.6} is automatically satisfied.
Thus, it suffices to consider the case $\gamma\gg t^{-1/2}$.
For $|s|\leq 1$, we claim
\begin{align}\label{13}
\left| I_{\pm,ij}(t,s)\right|\leq C  t^{-1/2}\(1+ \left|s^2 - |a|^2\right|^{-1/4}\).
\end{align}
\begin{remark}
This statement corresponds to Lemma 3.6 of \cite{MP10AA} and (4.12) of \cite{MP12DCDS}.
However, for the convenience of the readers, we prove it here. 
\end{remark}
If we have \eqref{13}, for $\gamma\gg t^{-1/2}$, we obtain \eqref{12.6} from
\begin{align*}
x\in  \left\{x\in \Z\ |\ \left| I_{\pm,ij}(t,\frac{x}{t})\right|>\gamma\right\} &\Rightarrow \gamma \leq C  t^{-1/2}\(1+ \left|s^2 - |a|^2\right|^{-1/4}\)  \Rightarrow \min_{\pm}(x \pm |a|t) \leq C  \gamma^{-4}t^{-1}.
\end{align*}
Thus, it suffices to show \eqref{13}.
Further, since we already have the global bound \eqref{12.3.1}, it suffices to consider the case
\begin{align}\label{14}
|s^2-|a|^2|\sim \min_{\pm}|s\pm |a| |\gtrsim t^{-2/3}.
\end{align}

For the case $|s|>|a|-t^{-3/2}$, we have $|\pm p'(\xi)+s|\geq |s|-|a|$.
Hence, integration by parts twice, we have
\begin{align*}
|I_{\pm,ij}(t,s)|\leq C  t^{-2}(|s|-|a|)^{-2}\leq C  t^{-1/2}\(1+ \left|s^2 - |a|^2\right|^{-1/4}\).
\end{align*}
For the case $|s|<|a|-t^{-3/2}$, we only consider $I_{+,11}$ and write $Q_{+,11}$ as $Q$ for simplicity.
Without loss of generality, we can assume $s\geq0$.
Set $\delta(s)>0$ so that $p'(-\frac{\pi}{2}\pm \delta(s))+s=0$.

We now fix $\delta_0\ll1$.
Then, if $\delta_0<|a|-|s|$, we have $|p'(\xi)+s|\geq \delta_1$ in $A_{\delta_1}:=\T\setminus \cup_{\pm}(-\frac{\pi}{2}\pm \delta(s)-\delta_1,-\frac{\pi}{2}\pm \delta(s)+\delta_1)$.
Thus, we have
\begin{align}
|I_{\pm,11}(t,s)|&\leq 
\left|\int_A e^{\im t(p(\xi)+s(\xi-\theta_a))}Q(\xi)\,d\xi\right|+\sum_{\pm}\left|\int_{-\frac{\pi}{2}\pm \delta(s)-\delta_1}^{-\frac{\pi}{2}\pm \delta(s)+\delta_1}e^{\im t(p(\xi)+s(\xi-\theta_a))}Q(\xi)\,d\xi\right|\nonumber\\&\leq C  \delta_1+(t \delta_1)^{-1}.\label{15}
\end{align}
Thus, if we take $\delta_1=t^{-1/2}$ in $A(\delta_1)$, we have \eqref{13}.

For the case $t^{-3/2}< |a|-|s|<\delta_0$, we claim 
\begin{align}\label{16}
|p'(\xi)+s|\gtrsim \(|a|^2-|s|^2\)^{1/2}\delta_1+\delta_1^2,\quad\text{for } \xi\in A(\delta_1).
\end{align}
If we have \eqref{16}, take $\delta_1=(|a|^2-|s|^2)^{-1/4}t^{-1/2}$.
Then, by \eqref{14}, we have $\delta_1\leq C  (|a|^2-s^2)^{1/2}$ and we can replace r.h.s.\ of \eqref{16} by $\(|a|^2-|s|^2\)^{1/2}\delta_1$.
Therefore, estimate of \eqref{15} with $(t \delta_1)^{-1}$ replaced by $(t(|a|^2-|s|^2)^{1/2}\delta_1)^{-1}$ will give us \eqref{13}.

Finally, we show \eqref{16}.
First, notice that as $s\to |a|$, $\delta(s)\to 0$.
Therefore, if $0<|a|-|s|<\delta_0\ll1$, we have $0<\delta(s)\ll1$ and
\begin{align*}
p'(-\pi/2 + \delta(s))+s= -|a|+s +\frac 1 2|a|(1-|a|^2)\delta(s)^2+o(\delta(s)^2),
\end{align*}
where we have used $p''(-\pi/2)=0$ and $p'''(-\pi/2)=|a|(1-|a|^2)$.
This gives us 
\begin{align}
\delta(s)= \sqrt{\frac{2(|a|-s)}{|a|(1-|a|^2)}}+o(\sqrt{|a|-s})
\end{align}
Since $\inf_{\xi\in A(\delta_1)}|p'(\xi)+s|$ is given by the minimum of $|p'\(-\pi/2 \pm(\delta(s)+\delta_1)\)+s|$, we have
\begin{align*}
|p'(-\pi/2 \pm(\delta(s)+\delta_1))+s|&=p'(-\pi/2 \pm \delta(s))+p''(-\pi/2 \pm \delta(s))(\pm \delta_1)+\frac 1 2 p'''(-\pi/2 \pm \delta(s))\delta_1^2\\&=
 \frac{1}{2}|a|(1-|a|^2)\(\delta(s)\delta_1+\delta_1^2\)+o(\delta_1^2). 
\end{align*}
Therefore, we have \eqref{16}.
\end{proof}

\section{Scattering}

We now prove Theorem \ref{thm:scat}

\begin{proof}[Proof of Theorem \ref{thm:scat} 1.]
We first estimate the Strichartz norm.
Set $\Phi:l^\infty l^2\to l^\infty l^2$ by
\begin{align*}
\Phi(u)(t)=U_0^t u_0+\sum_{s=0}^{t-1}U_0^{t-s} \(\hat C_N-I_2\)u(s).
\end{align*}
Notice that $U(t)u_0$ is the unique solution of \eqref{3} if and only if it is a fixed point of $\Phi$.
We show that if $\delta:=\|u_0\|_{l^2}\ll1$, then $\Phi$ has an fixed point.
Indeed, by lemma \ref{lem:stz}
\begin{align*}
\|\Phi(0)\|_{\stz}\leq C  \|u_0\|_{l^2},
\end{align*}
and
\begin{align*}
&\|\Phi(u)-\Phi(v)\|_{\stz}\leq C  \| U_0(\hat C-I_2)u-U_0(\hat C-I_2)v \|_{\stz^*}\leq C  \|(\hat C-I_2)u-(\hat C-I_2)v\|_{l^1l^2}\\&\leq C  \sum_{t\geq 0} \(\sum_{x\in \Z} \| C_N(|u_1(t,x)|^2,|u_2(t,x)|^2)-I_2\|_{\C^2\to\C^2}^2 \| u(t,x)-v(t,x)\|_{\C^2}^2\)^{1/2} \\&\quad+ C\sum_{t\geq 0} \(\sum_{x\in \Z} \|C_N(|u_1(t,x)|^2,|u_2(t,x)|^2)-C_N(|u_1(t,x)|^2,|u_2(t,x)|^2)\|_{\C^2\to\C^2}^2 \| v(t,x)\|_{\C^2}^2\)^{1/2}\\&\leq C 
\left\| \(\|u\|_{\C^2}^{6}+\|v\|_{\C^2}^{6}\) \| u-v\|_{\C^2}\right\|_{l^1l^2}\leq C  \left\| \|u\|_{\C^2}^6+\|v\|_{\C^2}^6 \right\|_{l^1 l^\infty} \|u-v \|_{l^\infty l^2}
\\&
\leq C  
\(\|u\|_{\stz}^6+\|v\|_{\stz}^6\)\|u-v\|_{\stz}.
\end{align*}
Thus, if we set $\mathcal B:=\{ u\in \stz\ |\ \|u\|_{\stz}\leq 2 \|u_0\|_{l^2}=2 \delta\}$,
we see that $\Phi:\mathcal B\to \mathcal B$ is a contraction mapping, provided $\delta>0$ sufficiently small.
Therefore, there exists a unique $u$ s.t. $\Phi(u)=u$, which is actually $U(t)u_0$
Further, since
\begin{align*}
U_0^{-t}U(t) u_0 = u_0 +\sum_{s=0}^{t-1}U_0^{-s}\(\hat C_N-I_2\)u(s),
\end{align*}
if we can show the right hand side is Cauchy in $\mathcal H$, we have the conclusion.
However, we already have the Strichartz bound $\|u\|_{\stz}\ll1$, so by the dual Strichartz estimate \eqref{12.3.8}, we have
\begin{align*}
\|\sum_{s=0}^{\infty} U_0^{-s}\(\hat C_N-I_2\)u(s)\|_{l^2}\leq C  \|u\|_{\stz}^7<\infty,
\end{align*}
and we conclude
\begin{align*}
\|\sum_{s=t_1}^{t_2}U_0^{-s}\(\hat C_N-I_2\)u(s)\|_{l^2}\to 0,\quad t_1\to \infty.
\end{align*}
Thus,
\begin{align*}
u_0 +\sum_{s=0}^{t-1}U_0^{-s-1}\(\hat C_N-I_2\)u(s)\to u_+,\quad t\to \infty.
\end{align*}
Therefore, we have the conclusion.
\end{proof}

To show Theorem \ref{thm:scat} 2., we first show the decay of $l^5$ norm.
Notice that $\<t\>^{-4/15}$ is the same rate for the decreasing of linear solution, which is obtained by interpolation between $l^\infty$-$l^1$ estimate (Theorem \ref{thm:1}) and $l^{4,\infty}$-$l^1$ estimate (Theorem \ref{thm:mp}).
\begin{lemma}\label{lem:qdec}
Under the assumption of Theorem \ref{thm:scat} 2., there exists $\delta>0$ such that if $\|u(0)\|_{l^1}\leq \delta$, we have $\|U(t) u_0\|_{l^5} \leq C  \<t\>^{-4/15}\|u_0\|_{l^1}$.
\end{lemma}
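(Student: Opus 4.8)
The plan is to run a contraction-mapping / bootstrap argument on the time-weighted norm $\|u\|_{X} := \sup_{t\geq 0}\<t\>^{4/15}\|u(t)\|_{l^5}$, exactly parallel to the Strichartz fixed-point argument used for Theorem~\ref{thm:scat} 1., but now using the pointwise-in-time decay estimates (Theorem~\ref{thm:1} and Theorem~\ref{thm:mp}) rather than the space-time Strichartz bounds. First I would interpolate between the $l^\infty$--$l^1$ bound $\|U_0^t u\|_{l^\infty}\leq C\<t\>^{-1/3}\|u\|_{l^1}$ and the weak bound $\|U_0^t u\|_{l^{4,\infty}}\leq C\<t\>^{-1/4}\|u\|_{l^1}$ to obtain the linear decay $\|U_0^t u_0\|_{l^5}\leq C\<t\>^{-4/15}\|u_0\|_{l^1}$; here $\tfrac15$ sits between $0$ and $\tfrac14$, and matching the exponents gives precisely the rate $\tfrac{4}{15}$. (One must be slightly careful: $l^{4,\infty}$ is only a quasi-normed space, so I would either interpolate via the real interpolation identity $(l^1,l^{4,\infty})_{\theta,5}$ with the right $\theta$, or, more elementarily, combine $\|U_0^t u_0\|_{l^5}\lesssim \|U_0^t u_0\|_{l^{4,\infty}}^{4/5}\|U_0^t u_0\|_{l^\infty}^{1/5}$ — valid since $l^{4,\infty}\cap l^\infty \subset l^5$ with the corresponding norm inequality — and insert the two decay estimates.)

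Next I would set up the Duhamel map
\begin{align*}
\Phi(u)(t)=U_0^t u_0+\sum_{s=0}^{t-1}U_0^{t-s}\(\hat C_N-I_2\)u(s)
\end{align*}
and estimate it in $X$. For the nonlinear term, the hypothesis of Theorem~\ref{thm:scat} 2. ($m=2$) gives $\|C_N(|u_1|^2,|u_2|^2)-I_2\|_{\C^2\to\C^2}\leq C(|u_1|^2+|u_2|^2)^2 = C\|u\|_{\C^2}^4$, so $\|(\hat C_N-I_2)u(s)\|_{\C^2}\leq C\|u(s)\|_{\C^2}^5$, whence pointwise in $x$ we get an $l^1_x$ bound $\|(\hat C_N-I_2)u(s)\|_{l^1_x}\leq C\|u(s)\|_{l^5_x}^5\leq C\<s\>^{-4/3}\|u\|_X^5$. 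Applying the linear $l^5$--$l^1$ decay to each Duhamel summand and splitting the sum at $s=t/2$ yields
\begin{align*}
\<t\>^{4/15}\sum_{s=0}^{t-1}\|U_0^{t-s}(\hat C_N-I_2)u(s)\|_{l^5}\leq C\|u\|_X^5\<t\>^{4/15}\sum_{s=0}^{t-1}\<t-s\>^{-4/15}\<s\>^{-4/3},
\end{align*}
and since $\tfrac43>1$ the sum over $s\leq t/2$ converges and contributes $O(\<t\>^{-4/3+4/15})$ while the sum over $s>t/2$ contributes $O(\<t\>^{4/15}\cdot\<t\>^{-4/3+1})=O(\<t\>^{4/15-1/3})$ — in both cases bounded uniformly in $t$ (indeed decaying), so $\|\Phi(u)\|_X\leq C\|u_0\|_{l^1}+C\|u\|_X^5$. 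A matching difference estimate $\|\Phi(u)-\Phi(v)\|_X\leq C(\|u\|_X^4+\|v\|_X^4)\|u-v\|_X$ follows the same way using $C^1$-regularity of $C_N$ as in the proof of part 1. Hence $\Phi$ is a contraction on the ball $\{\|u\|_X\leq 2\|u_0\|_{l^1}\}$ once $\|u_0\|_{l^1}=\delta$ is small enough, and its fixed point is $U(t)u_0$, giving the claimed bound.

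\textbf{Main obstacle.} The genuinely delicate point is the interpolation step producing the sharp $l^5$--$l^1$ linear decay rate $\<t\>^{-4/15}$, because Theorem~\ref{thm:mp} is stated with the weak-type space $l^{4,\infty}$ rather than $l^4$; one must verify that the real-interpolation (or Hölder-type) inequality relating $l^5$, $l^{4,\infty}$ and $l^\infty$ really does hold with the constants uniform in $t$, so that no logarithmic loss appears. The rest — the dyadic splitting of the Duhamel sum and the bookkeeping of exponents ($4/3 > 1$ being exactly what makes the near-diagonal part summable) — is routine, entirely analogous to the continuous NLS scattering argument and to the proof of part 1. above. The exponent arithmetic is tight but consistent: the nonlinearity is quintic, $5\cdot\tfrac4{15}=\tfrac43>1$, which is precisely the integrability threshold needed.
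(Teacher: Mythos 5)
Your proposal is correct and follows essentially the same route as the paper: both rest on the linear $l^1\to l^5$ decay at rate $\langle t\rangle^{-4/15}$ obtained by interpolating Theorems \ref{thm:1} and \ref{thm:mp}, the quintic pointwise bound on $\hat C_N-I_2$, and the convolution estimate $\sum_{s<t}\langle t-s\rangle^{-4/15}\langle s\rangle^{-4/3}\lesssim\langle t\rangle^{-4/15}$. The only (inessential) difference is packaging — the paper runs a direct induction/bootstrap on the already-defined iterates $u(t)$ rather than a contraction in the weighted norm $X$ — and your explicit treatment of the weak-type interpolation via $\|f\|_{l^5}\lesssim\|f\|_{l^{4,\infty}}^{4/5}\|f\|_{l^\infty}^{1/5}$ supplies a detail the paper leaves implicit.
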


\begin{proof}
We prove by induction.
Suppose $\|u(0)\|_{l^1}= \delta$.
Then, by Theorem \ref{thm:mp}, there exists $c_1>0$ s.t.\ $\|U_0^t u(0)\|_{l^5}\leq c_1 \<t\>^{-4/15} \|u(0)\|_{l^1}$.
We assume that for $0\leq s\leq t-1$, we have $\|u(s)\|_{l^5}\leq 2c_1 \delta\<s\>^{-4/15}$.
Then, we have
\begin{align*}
\<t\>^{4/15}\|u(t)\|_{l^5}&\leq c_1 \delta + \sum_{s=0}^{t-1}\<t\>^{4/15}c_1\<t-s-1\>^{-4/15}\|u(s)^5\|_{l^1}\\&\leq
c_1 \delta +  (2 c_1 \delta)^5 c_1\sum_{s=0}^{t-1}\<t\>^{4/15}\<t-s-1\>^{-4/15}\<s\>^{-4/3}.
\end{align*}
Notice that 
\begin{align*}
\sum_{s=0}^{t-1}\<t\>^{4/15}\<t-s-1\>^{-4/15}\<s\>^{-3/4}\leq c_2
\end{align*}
with some absolute constant $c_2>0$.
Indeed,
\begin{align*}
\sum_{s=0}^{t-1}\<t-s-1\>^{-4/15}\<s\>^{-3/4}&\sim \int_0^{t}\frac{1}{(1+t-s)^{4/15}}\frac{1}{(1+s)^{4/3}}
\\&\leq C  \<t\>^{-4/15}\int_0^{t/2}\<s\>^{-4/3}+\<t\>^{-4/3}\int_{t/2}^t (1+t-s)^{-4/15}\,ds
\\&\leq C  \<t\>^{-4/15}+\<t\>^{-4/3}\<t\>^{-4/15+1}\leq C   \<t\>^{-4/15}.
\end{align*}
Thus, if we take $0<\delta$ to satisfy $\delta<(2c_1 c_2^{1/5})^{-1}$, we have
\begin{align*}
c_1 \delta(1 +  32c_1^5\delta^5	c_2)< 2c_1 \delta.
\end{align*}
Therefore, we have the conclusion.
\end{proof}

We can show scattering by using decay.

\begin{proof}[Proof of Theorem \ref{thm:scat} 2.]
Let $\|u(0)\|_{l^1}\leq \delta$, where $\delta>0$ given by Lemma \ref{lem:qdec}. Then we have $\|u(t)\|_{l^5}\leq C  \<t\>^{-4/15}$.
Thus, $\|u\|_{l^{24/5}l^5(t,\infty)}\to 0$ as $t\to \infty$.
Therefore, we have
\begin{align*}
\|u\|_{\stz(T,\infty)}&\leq C  \|u(T)\|_{l^2}+C\|\sum_{s=T}^{t-1} U_0^{t-s}(\hat C_N-I_2)u(s)\|_{\stz}\\& \leq C  \|u(T)\|_{l^2} + C\| U_0(\hat C-I_2)u\|_{l^{6/5}l^1}\\&
\leq C  \|u(T)\|_{l^2} + C\|u\|_{l^{24/5}l^8(T,\infty)}^4 \|u\|_{l^\infty l^2}\\&
\leq C  \|u(T)\|_{l^2} + C\|u\|_{l^{24/5}l^5(T,\infty)}^4 \|u\|_{\stz}
\end{align*}
Therefore, we see that $\stz$ norm is finite.
Since we can bound $\|\sum_{s=0}^\infty U_0^{-s}(\hat C_N-I_2)u(s)\|_{l^2}$ by the same estimate, we have the conclusion.
\end{proof}

\section{Inverse scattering}

In the following, we always assume \eqref{A}.
\begin{lemma}\label{lem:2:thm:is:1}
Let $\delta>0$ sufficiently small.
Then, for any $u_0 \in  l ^1$ 
with $\|u_0\|_{ l ^1}\le \delta$,
\begin{align}\label{ineq:1:lem:2:thm:is:1}
\| U(t)u_0 - U_0^t u_0 \|_{ l ^\infty  l ^2}
\leq C  \|u_0\|_{ l ^1}^5
\end{align} 
and 
\begin{align}\label{ineq:2:lem:2:thm:is:1}
\| \(\hat C_N-I_2\) U(t)u_0 - \(\hat C_N-I_2\) U_0^t u_0 \|_{ l ^{6/5}  l ^1}
\leq C  \|u_0\|_{ l ^1}^9.
\end{align} 
\end{lemma}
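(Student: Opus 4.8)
The plan is to prove the two estimates in sequence, using the fixed-point structure of $U(t)u_0$ from the proof of Theorem \ref{thm:scat} together with the decay bound of Lemma \ref{lem:qdec}. Throughout we assume $\|u_0\|_{l^1}\le\delta$ with $\delta$ small enough that Lemma \ref{lem:qdec} applies and that the Strichartz bound $\|U(t)u_0\|_{\stz}\le C\|u_0\|_{l^1}$ holds (the latter follows from the contraction argument, since $\|U_0^t u_0\|_{\stz}\le C\|u_0\|_{l^2}\le C\|u_0\|_{l^1}$). Under assumption $(\mathrm{A})$ we have the pointwise bound $\|C_N(|u_1|^2,|u_2|^2)-I_2\|_{\C^2\to\C^2}\le C(|u_1|^2+|u_2|^2)\le C\|u\|_{\C^2}^2$, so $\hat C_N-I_2$ is a cubic-type nonlinearity: $\|(\hat C_N-I_2)u(x)\|_{\C^2}\le C\|u(x)\|_{\C^2}^3$ and, by the mean value theorem and the derivative bound in $(\mathrm{A})$, $\|(\hat C_N-I_2)u-(\hat C_N-I_2)v\|_{\C^2}\le C(\|u\|_{\C^2}^2+\|v\|_{\C^2}^2)\|u-v\|_{\C^2}$ pointwise.

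For \eqref{ineq:1:lem:2:thm:is:1}, I would write, using the fixed-point identity,
\begin{align*}
U(t)u_0-U_0^t u_0=\sum_{s=0}^{t-1}U_0^{t-s}\(\hat C_N-I_2\)U(s)u_0,
\end{align*}
and apply the inhomogeneous Strichartz estimate of Lemma \ref{lem:stz} in the form $\|\sum_{s=0}^{t-1}U_0^{t-s}f(s)\|_{l^\infty l^2}\le C\|f\|_{l^1 l^2}$ (a consequence of \eqref{12.3.10}). Then
\begin{align*}
\|U(t)u_0-U_0^t u_0\|_{l^\infty l^2}\le C\|(\hat C_N-I_2)U(\cdot)u_0\|_{l^1 l^2}\le C\big\||U(\cdot)u_0|^3\big\|_{l^1 l^2}=C\|U(\cdot)u_0\|_{l^3 l^6}^3.
\end{align*}
To bound $\|U(\cdot)u_0\|_{l^3 l^6}$ I interpolate: in the spatial variable, $l^6$ lies between $l^5$ and $l^\infty$; in time we then combine the decay $\|U(s)u_0\|_{l^5}\le C\<s\>^{-4/15}\|u_0\|_{l^1}$ from Lemma \ref{lem:qdec} with the Strichartz bound $\|U(\cdot)u_0\|_{l^6 l^\infty}\le C\|u_0\|_{l^1}$, via Hölder in $t$, to get $\|U(\cdot)u_0\|_{l^3 l^6}\le C\|u_0\|_{l^1}^{5/3}$. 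Cubing gives the claimed $\|u_0\|_{l^1}^5$. (One must check the time exponents close: a convex combination placing weight $2/3$ on the $l^6 l^\infty$ endpoint and $1/3$ on a summable $l^p l^5$ tail built from the $\<s\>^{-4/15}$ decay does the job, since $4/15\cdot p>1$ for the relevant $p$; this is the routine bookkeeping step.)

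For \eqref{ineq:2:lem:2:thm:is:1}, I would split
\begin{align*}
\(\hat C_N-I_2\)U(t)u_0-\(\hat C_N-I_2\)U_0^t u_0
\end{align*}
using the pointwise difference bound above with $u=U(t)u_0$, $v=U_0^t u_0$, obtaining a pointwise majorant $C(\|U(t)u_0\|_{\C^2}^2+\|U_0^t u_0\|_{\C^2}^2)\|U(t)u_0-U_0^t u_0\|_{\C^2}$. Taking $l^1$ in $x$ and Hölder, $\|\cdots\|_{l^1}\le C(\|U(t)u_0\|_{l^4}^2+\|U_0^t u_0\|_{l^4}^2)\|U(t)u_0-U_0^t u_0\|_{l^2}$; then taking $l^{6/5}$ in $t$ and Hölder in time, the first two factors contribute $l^{q}_t l^4_x$ norms that are $O(\|u_0\|_{l^1}^2)$ each (again by interpolating the $l^5$ decay against the $\stz$ bound, or by interpolating Theorem \ref{thm:1} and Theorem \ref{thm:mp} for the linear piece), while the remaining factor $\|U(t)u_0-U_0^t u_0\|_{l^\infty_t l^2_x}$ is $O(\|u_0\|_{l^1}^5)$ by \eqref{ineq:1:lem:2:thm:is:1}. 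Altogether $2+2+5=9$ powers, giving the bound; one just arranges the three Hölder exponents in $t$ to sum correctly, using that $U(t)u_0-U_0^t u_0$ is uniformly bounded in $l^2$ so its time-norm can be taken in $l^\infty_t$.

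The main obstacle is the time-exponent bookkeeping: one must verify that the convex combinations of decay rates (the $\<t\>^{-4/15}$ from Lemma \ref{lem:qdec}, the $l^6_t$ Strichartz integrability, and the $\<t\>^{-1/3}$ / $\<t\>^{-1/4}$ linear decay rates of Theorems \ref{thm:1} and \ref{thm:mp}) actually produce summable series in $t$ and land on the stated powers $5$ and $9$ of $\|u_0\|_{l^1}$ — in particular that no endpoint is lost and that the $l^{6/5}_t$ norm in \eqref{ineq:2:lem:2:thm:is:1} can be absorbed. Everything else (pointwise nonlinear estimates from $(\mathrm{A})$, the inhomogeneous Strichartz inequality, the Duhamel identity) is already available from the earlier parts of the paper.
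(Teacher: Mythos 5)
Your proposal has a genuine gap, and it originates in a misreading of assumption \eqref{A}. Since $C_N(s_1,s_2)=\tilde C_N(s_1^2,s_2^2)$ and $\|\tilde C_N(\sigma_1,\sigma_2)-I_2\|_{\C^2\to\C^2}\leq C(|\sigma_1|+|\sigma_2|)$, the coin deviation obeys $\|C_N(|u_1(x)|^2,|u_2(x)|^2)-I_2\|_{\C^2\to\C^2}\leq C(|u_1(x)|^4+|u_2(x)|^4)$; hence $\hat C_N-I_2$ is a \emph{quintic} nonlinearity, $\|(\hat C_N-I_2)u(x)\|_{\C^2}\leq C\|u(x)\|_{\C^2}^5$, with Lipschitz bound $C(\|u(x)\|_{\C^2}+\|v(x)\|_{\C^2})^4\|u(x)-v(x)\|_{\C^2}$ --- not cubic as you assert. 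The exponents $5$ and $9$ in the lemma are exactly $4+1$ and $4+5$ produced by this quartic prefactor, and the cubic reading cannot reach them. Indeed your bookkeeping does not close even on its own terms: the claimed bound $\|U(\cdot)u_0\|_{l^3l^6}\leq C\|u_0\|_{l^1}^{5/3}$ cannot follow from interpolating Lemma \ref{lem:qdec} with the Strichartz bound, because both inputs are of degree one in $\|u_0\|_{l^1}$ and interpolation of two degree-one bounds yields a degree-one bound; cubing then gives $\|u_0\|_{l^1}^{3}$, not $\|u_0\|_{l^1}^{5}$. Likewise, in your second estimate the quantity $\|U(t)u_0\|_{l^4}^2+\|U_0^tu_0\|_{l^4}^2$ is a single additive factor of size $O(\|u_0\|_{l^1}^2)$, so your H\"older chain produces $2+5=7$ powers, not ``$2+2+5=9$''.

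The repair is the paper's argument. From the Duhamel identity and Lemma \ref{lem:stz},
\begin{align*}
\|U(t)u_0-U_0^tu_0\|_{l^\infty l^2}\leq C\left\|(\hat C_N-I_2)U(\cdot)u_0\right\|_{l^{6/5}l^1}\leq C\left\||U(\cdot)u_0|^5\right\|_{l^{6/5}l^1}\leq C\|U(\cdot)u_0\|_{l^{24/5}l^5}^4\,\|U(\cdot)u_0\|_{l^\infty l^2},
\end{align*}
where Lemma \ref{lem:qdec} gives $\|U(\cdot)u_0\|_{l^{24/5}l^5}\leq C\|u_0\|_{l^1}$ since $(4/15)\cdot(24/5)>1$, and $\|U(\cdot)u_0\|_{l^\infty l^2}=\|u_0\|_{l^2}\leq\|u_0\|_{l^1}$; this yields the power $5$ at once. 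For \eqref{ineq:2:lem:2:thm:is:1} one applies the quartic Lipschitz bound and the same H\"older split to get $C(\|U(\cdot)u_0\|_{l^{24/5}l^5}+\|U_0^{\cdot}u_0\|_{l^{24/5}l^5})^4\|U(\cdot)u_0-U_0^{\cdot}u_0\|_{l^\infty l^2}=O(\|u_0\|_{l^1}^{4+5})$. Note also that your choice of the $l^1l^2$ component of $\stz^*$ instead of $l^{6/5}l^1$ is a needless complication: it forces control of $\|U(\cdot)u_0\|_{l^5l^{10}}$, which is not directly supplied by Lemma \ref{lem:qdec}, whereas the $l^{6/5}l^1$ route uses exactly the norms the paper has already estimated.
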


\begin{proof}
We have 
\begin{align*}
\| U(t)u_0 - U_0^t u_0 \|_{ l ^\infty  l ^2}
=&
\| \sum_{s=0}^{t-1}U_0^{t-s}\(\hat C_N-I_2\) U(s)u_0 \|_{ l ^\infty  l ^2}
\leq C 
\| \(\hat C_N-I_2\) U(t)u_0 \|_{ l ^{6/5} l ^1}
\\
\leq C 
&
\| | U(t)u_0 |^5 \|_{ l ^{6/5} l ^1}
\leq C 
\| U(t)u_0 \|_{ l ^{24/5} l ^5}^4 
\| U(t)u_0 \|_{ l ^\infty  l ^2}
\leq C 
\| u_0 \|_{ l ^1}^5,
\end{align*} 
where we have used Strichartz estimate (Lemma \ref{lem:stz}), assumption \eqref{A} and H\"older inequality.
Thus, we have \eqref{ineq:1:lem:2:thm:is:1}. 
\\

Let $u_1=\ve{u_{11}}{u_{12}} \in  l ^1$ 
and $u_2=\ve{u_{21}}{u_{22}} \in  l ^1$.
Then we have for any $x\in \Z$, 
\begin{align*}
&
\(\hat C_N-I_2\) u_1(x) - \(\hat C_N-I_2\) u_2(x)
\\
&=
\braa{\wc\braa{|u_{11}(x)|^4,|u_{12}(x)|^4}-I_2}u_1
-
\braa{\wc\braa{|u_{21}(x)|^4,|u_{22}(x)|^4}-I_2}u_2
\\
&=
\braa{\wc\braa{|u_{11}(x)|^4,|u_{12}(x)|^4}-I_2}(u_1(x)-u_2(x))
\\
&\quad +
\brab{
\braa{\wc\braa{|u_{11}(x)|^4,|u_{12}(x)|^4}-I_2}
-
\braa{\wc\braa{|u_{21}(x)|^4,|u_{22}(x)|^4}-I_2}
}u_2(x).
\end{align*} 
Using (A1), 
we obtain for any $x\in \Z$,
\begin{align*}
&
\abs{
\(\hat C_N-I_2\) u_1(x) - \(\hat C_N-I_2\) u_2(x)
}_{\C^2}
\\
&\leq C 
\braa{
|u_{11}(x)|^4+|u_{12}(x)|^4
}
\abs{ u_1(x)-u_2(x) }_{\C^2}
\\
&\quad +
\braa{
|u_{11}(x)|^4-|u_{21}(x)|^4+|u_{21}(x)|^4-|u_{22}(x)|^4
}
|u_2(x)|_{\C^2}
\\
&\leq C 
\braa{
|u_1(x)|_{\C^2} + |u_2(x)|_{\C^2}
}^4
\abs{ u_1(x)-u_2(x) }_{\C^2}.
\end{align*} 
Hence it follows that 
\begin{align*}
&
\left\|
\(\hat C_N-I_2\) U(t)u_0 - \(\hat C_N-I_2\) U_0^t u_0 
\right\|_{ l ^{6/5} l ^1}
\\&\leq C 
\braa{
  \nr{U(t)  u_0}_{ l ^{24/5} l ^5} 
+ \nr{U_0^t u_0}_{ l ^{24/5} l ^5}
}^4
  \nr{ U(t)u_0-U_0^t u_0 }_{ l ^\infty  l ^2}
\leq C 
\nr{u_0}_{ l ^1}^9,
\end{align*} 
where we have used \eqref{ineq:1:lem:2:thm:is:1} and Lemma \ref{lem:qdec} in the last line.
This completes the proof of  \eqref{ineq:2:lem:2:thm:is:1}.
\end{proof}

\begin{proof}[Proof of Theorem \ref{thm:invscat}]
Let $u_0\in  l ^1$ with $\nr{u_0}_{ l ^1}\le \delta$ 
and let $v_0 \in  l ^1$.
Since 
\begin{align*}
\wa u_0 = u_0 + \sum_{t=0}^\infty U_0^{-t}\(\hat C_N-I_2\) U(t)u_0,
\end{align*} 
we have 
\begin{align*}
\bra{ \wa u_0 - u_0 , v_0}
=
\sum_{t=0}^\infty \bra{ \(\hat C_N-I_2\) U(t)u_0 ,U_0^t v_0}.
\end{align*} 
By Lemmas \ref{lem:stz} and \ref{lem:2:thm:is:1}, 
we obtain 
\begin{align*}
&
\abs{
\bra{ \wa u_0 - u_0 , v_0}
-
\sum_{t=0}^\infty \bra{ \(\hat C_N-I_2\) U_0^t u_0 ,U_0^t v_0}
}
\\
&
\leq C 
\nr{ \(\hat C_N-I_2\) U(t)u_0-\(\hat C_N-I_2\) U_0^t u_0 }_{ l ^{6/5} l ^1}
\nr{ U_0^t v_0 }_{ l ^{6  } l ^\infty}
\\
&\leq C 
\nr{u_0}_{ l ^1}^9 
\nr{v_0}_{ l ^2}.
\end{align*}
Replacing $u_0$ and $v_0$ 
by $U_0u_0$ and $U_0v_0$, respectively, 
we have
\begin{align*}
&
\abs{
\bra{ U_0^{-1}\wa U_0 u_0 -u_0 , v_0}
-
\sum_{t=1}^\infty \bra{ \(\hat C_N-I_2\) U_0^t u_0 ,U_0^t v_0}
}
\\
&=
\abs{
\bra{ \wa U_0 u_0 - U_0 u_0 , U_0 v_0}
-
\sum_{t=0}^\infty \bra{ \(\hat C_N-I_2\) U_0^{t+1} u_0 ,U_0^{t+1} v_0}
}
\\
&\leq C 
\nr{U_0u_0}_{ l ^1}^9 
\nr{U_0v_0}_{ l ^2}
\leq C 
\nr{u_0}_{ l ^1}^9 
\nr{v_0}_{ l ^2},
\end{align*} 
which implies that 
\begin{align*}
&
\abs{
\bra{ \braa{U_0^{-1}\wa U_0 -\wa}u_0 , v_0}
-
\bra{ \(\hat C_N-I_2\) u_0 ,v_0}
}
\leq C 
\nr{u_0}_{ l ^1}^9 
\nr{v_0}_{ l ^2}.
\end{align*} 
In particular, for $k=1,2$ and $l=1,2$
we see that 
\begin{align}
&
\left|
\< \braa{U_0^{-1}\wa U_0 -\wa}
 \(\lambda^{1+k}\delta_{1,0}+\lambda^{4-k}\delta_{2,0}\),
\delta_{l,0}\>
-
\< \(\hat C_N-I_2\) \(\lambda^{1+k}\delta_{1,0}+\lambda^{4-k}\delta_{2,0}\),
\delta_{l,0}\>
\right|
\nonumber\\
&\leq C 
\left\|\lambda^{1+k}\delta_{1,0}+\lambda^{4-k}\delta_{2,0}\right\|_{ l ^1}^9 
\|\delta_{l,0}\|_{ l ^2}
\leq C  
\lambda^{18}\label{ineq:1:proof}
\end{align} 
for any $k=1,2$ and for any $\lambda>0$ sufficiently small.
By the Taylor theorem and \eqref{A}, 
we have 
\begin{align*}
&\< \(\hat C_N-I_2\) \(\lambda^{1+k}\delta_{1,0}+\lambda^{4-k}\delta_{2,0}\),
\delta_{l,0}\>
=
\bra{ \braa{\wc(\lambda^{4(1+k)},\lambda^{4(4-k)})-I_2}
\ve{\lambda^{1+k}}{\lambda^{4-k}},
\delta_{l,0}}_{\C^2}
\\
&=
\bra{ \Pa_1\wc(0,0)\lambda^{4(1+k)} \ve{\lambda^{1+k}}{\lambda^{4-k}},
\delta_{l,0}}_{\C^2}
+
\bra{ \Pa_2\wc(0,0)\lambda^{4(4-k)} \ve{\lambda^{1+k}}{\lambda^{4-k}},
\delta_{l,0}}_{\C^2} +O(\lambda^{18}).
\end{align*} 
We see from (\ref{ineq:1:proof}) that 
\begin{align*}
\mathcal L_{1j}(\lambda)
&=
\lambda^{-10}
\< \(\hat C_N-I_2\) \(\lambda^{2}\delta_{1,0}+\lambda^{3}\delta_{2,0}\),
\delta_{j,0}\>
+O(\lambda^8)
\\
&=
\bra{ \Pa_1\wc(0,0)\ve{1}{\lambda},
\delta_{j,0}}_{\C^2}
+O(\lambda^4)
\\
&=
\(\Pa_1\wc(0,0)\)_{j1}+\lambda\(\Pa_1\wc(0,0)\)_{j2}
+O(\lambda^4)
\quad (\lambda\to+0)
\end{align*}
and
\begin{align*}
\mathcal L_{2j}(\lambda)
&=
\lambda^{-10}
\< \(\hat C_N-I_2\) \(\lambda^{3}\delta_{1,0}+\lambda^{2}\delta_{2,0}\),
\delta_{j,0}\>
+O(\lambda^8)
\\
&=
\bra{ \Pa_2\wc(0,0)\ve{\lambda}{1},
\delta_{j,0}}_{\C^2}
+O(\lambda^4)
\\
&=
\(\Pa_2\wc(0,0)\)_{j2}+\lambda\(\Pa_2\wc(0,0)\)_{j1}
+O(\lambda^4)
\quad (\lambda\to+0)
\end{align*}
for $j=1,2$.
Here, 
we have defined $\(\partial_k \tilde C_N(0,0)\)_{ij}=\<\partial_k \tilde C_N(0,0) e_j,e_i\>_{\C^2}$.

Hence we have
\begin{align*}
\(\Pa_1\wc(0,0)\)_{j2}&=D_\lambda \mathcal L_{j1}+O(\lambda^3), \quad \(\Pa_2\wc(0,0)\)_{j1}=D_\lambda \mathcal L_{j2}+O(\lambda^3),\\
\(\Pa_1\wc(0,0)\)_{j1}&=\mathcal L_{1j}(\lambda)-\lambda D_\lambda \mathcal L_{1j}+O(\lambda^4), \quad \(\Pa_2\wc(0,0)\)_{j2}=\mathcal L_{2j}-\lambda D_\lambda \mathcal L_{2j}+O(\lambda^4),
\end{align*}
which completes the proof.
\end{proof}

\section{Numerical simulation for NLQW and soliton like behavior}\label{sec:NS}
\begin{figure}[t]
\begin{tabular}{cc}
\begin{minipage}{0.45\hsize}
(a)
\vspace{-0.5cm}
\begin{center}
\includegraphics[width=6.0cm]{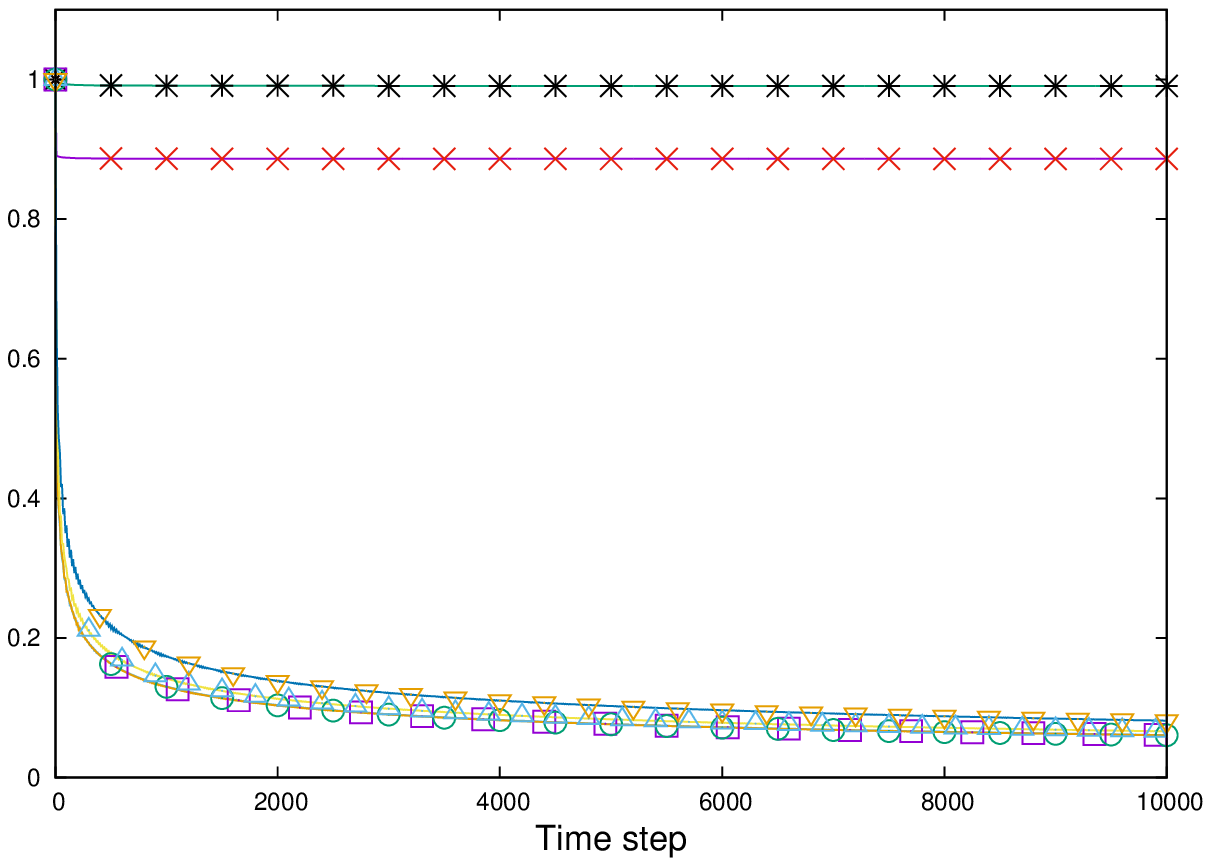} 
\end{center}
\end{minipage}
\hspace{6mm}
\begin{minipage}{0.45\hsize}
(b)
\vspace{-0.5cm}
\begin{center}
\includegraphics[width=6.0cm]{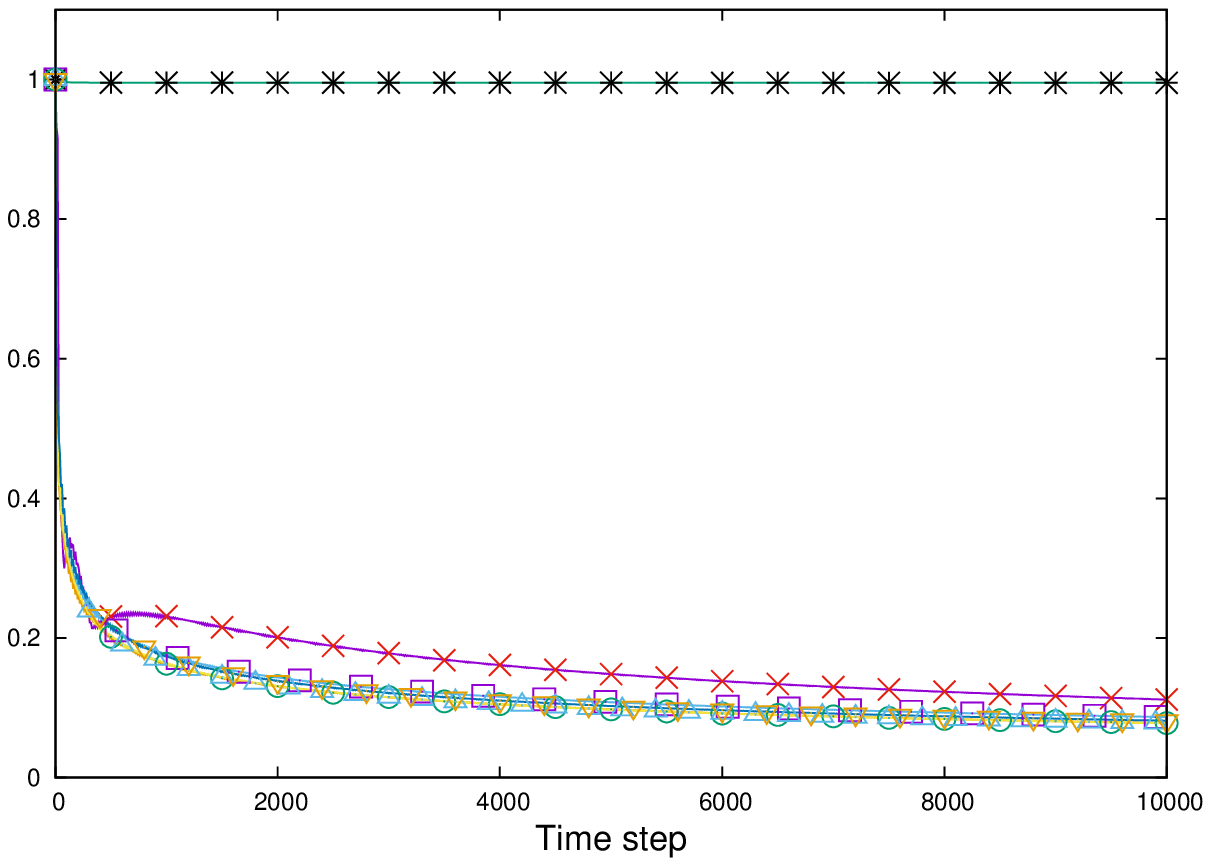}
\end{center}
\end{minipage}\vspace{0.3cm}\\
\begin{minipage}{0.45\hsize}
(c)
\vspace{-0.5cm}
\begin{center}
\includegraphics[width=6.0cm]{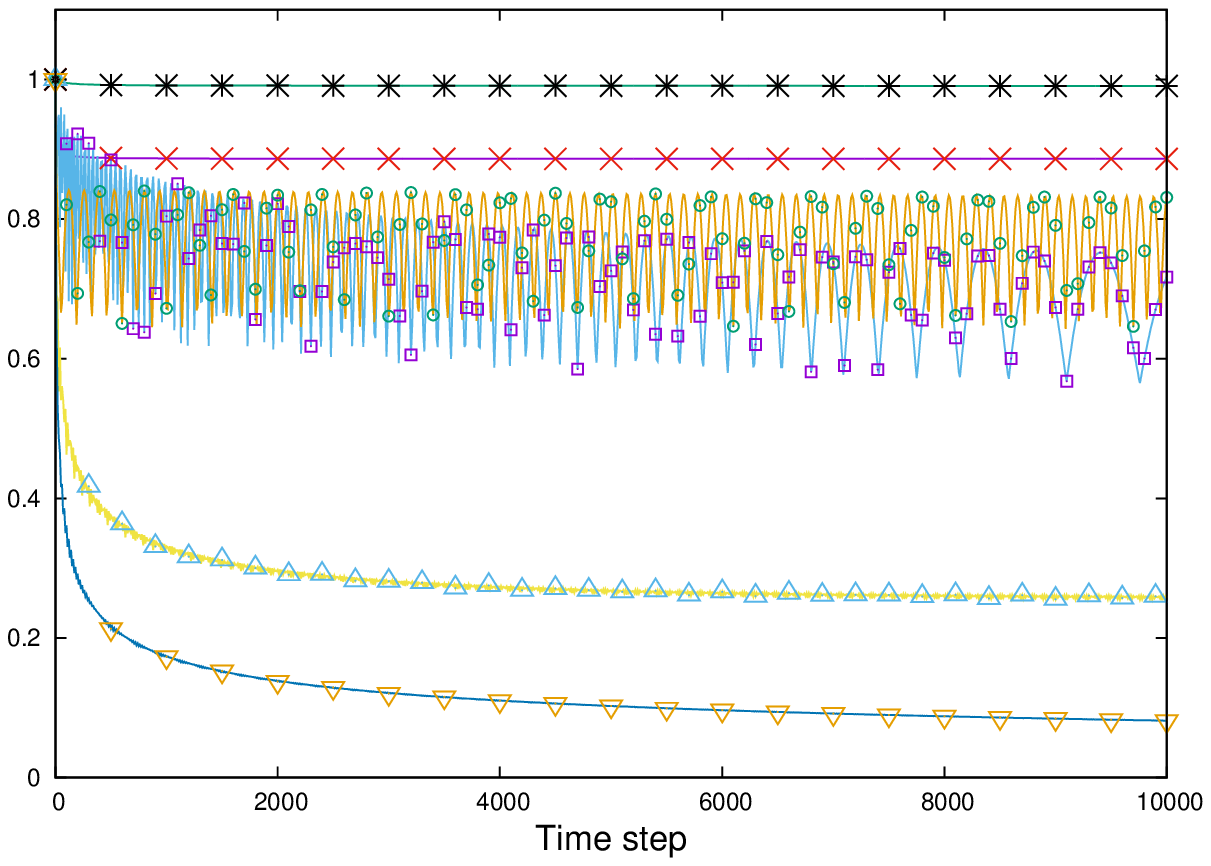}
\end{center}
\end{minipage}
\hspace{6mm}
\begin{minipage}{0.45\hsize}
(d)
\vspace{-0.5cm}
\begin{center}
\includegraphics[width=6.0cm]{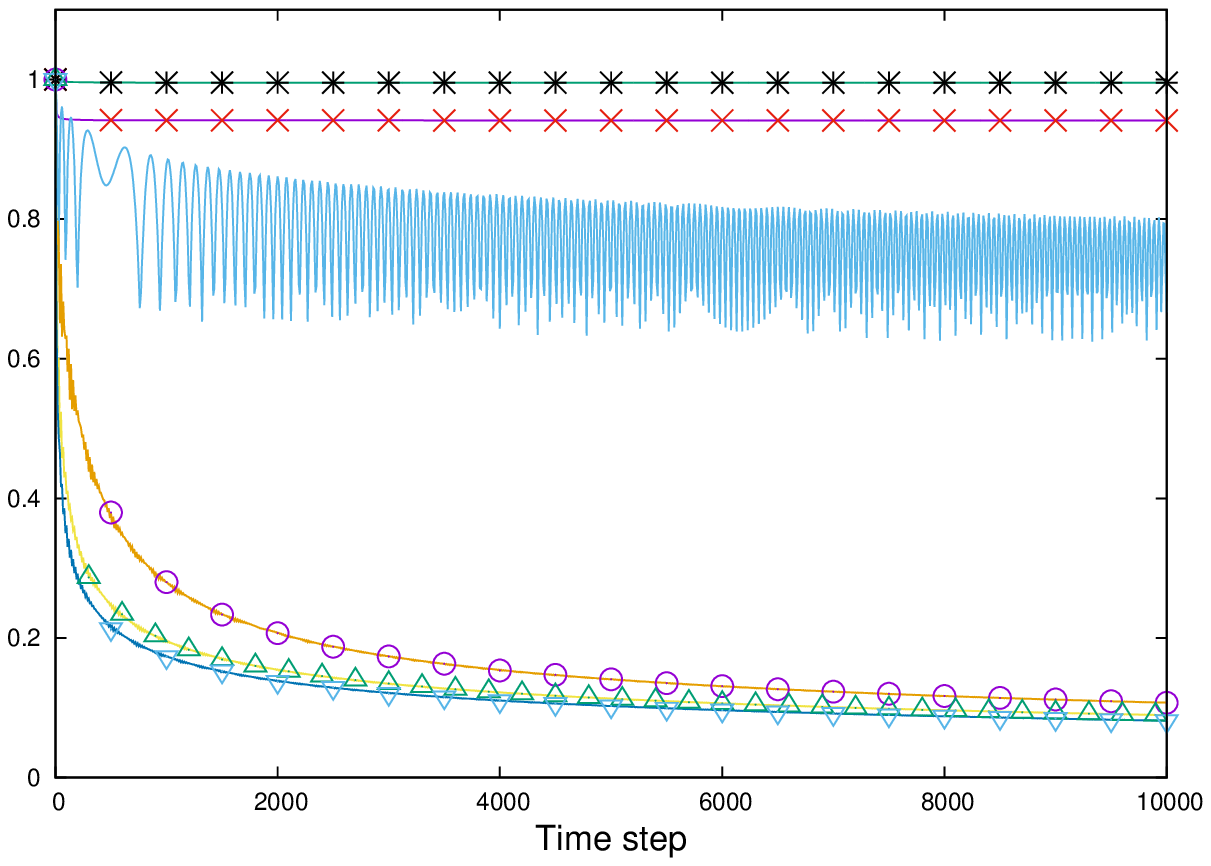}
\end{center}
\end{minipage}
\end{tabular}
\caption{Figures (a)--(d) show the behavior of $\|u\|_{l^\infty}$, in which the vertical axis is $\|u\|_{l^\infty}$ and the horizontal axis is time step. Figures (a) and (b) are the case $g > 0$ for $p = 1$ and $p = 2$, respectively. In both figures, there are $g = 1\ (\times)$, $g = 0.8\ (\ast)$, $g = 0.6 \ (\Box)$, $g = 0.4 \ (\bigcirc)$, $g = 0.2\ (\triangle)$, $g = 0.0\ (\triangledown)$. Figures (c) and (d) are the case $g < 0$ for $p = 1$ and $p = 2$, respectively. In both figures, there are $g = -1\ (\times)$, $g = -0.8\ (\ast)$, $g = -0.6 \ (\Box)$, $g = -0.4 \ (\bigcirc)$, $g = -0.2\ (\triangle)$, $g = 0.0\ (\triangledown)$. In (c), $g = -0.6$ and $g = -0.4$ show oscillations, while only the case $g = -0.6$ oscillates in (d) (it is plotted without $\Box$). }
\label{sup-norm-u}
\end{figure}
\begin{figure}[t]
\begin{minipage}{0.45\hsize}
(a)
\vspace{-0.5cm}
\begin{center}
\includegraphics[width=6.0cm]{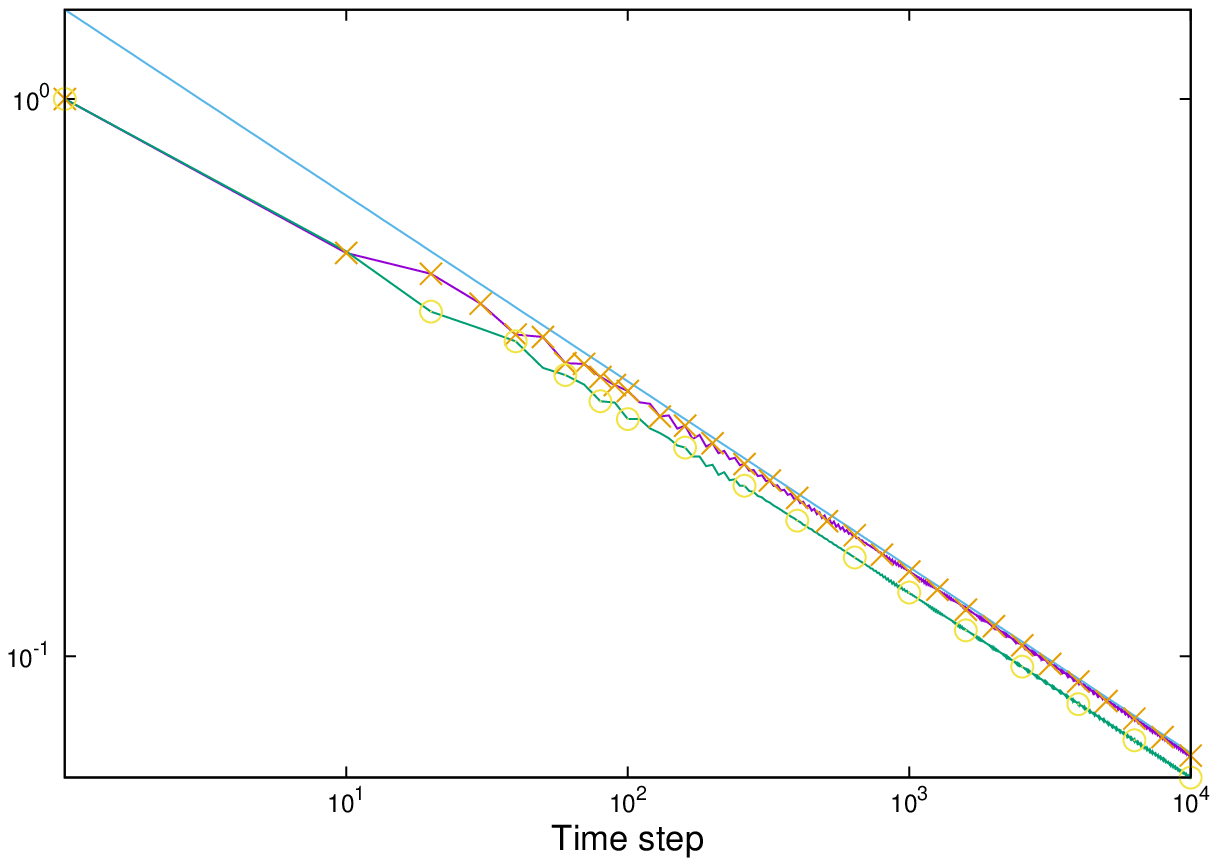}
\end{center}
\end{minipage}
\hspace{6mm}
\begin{minipage}{0.45\hsize}
(b)
\vspace{-0.5cm}
\begin{center}
\includegraphics[width=6.0cm]{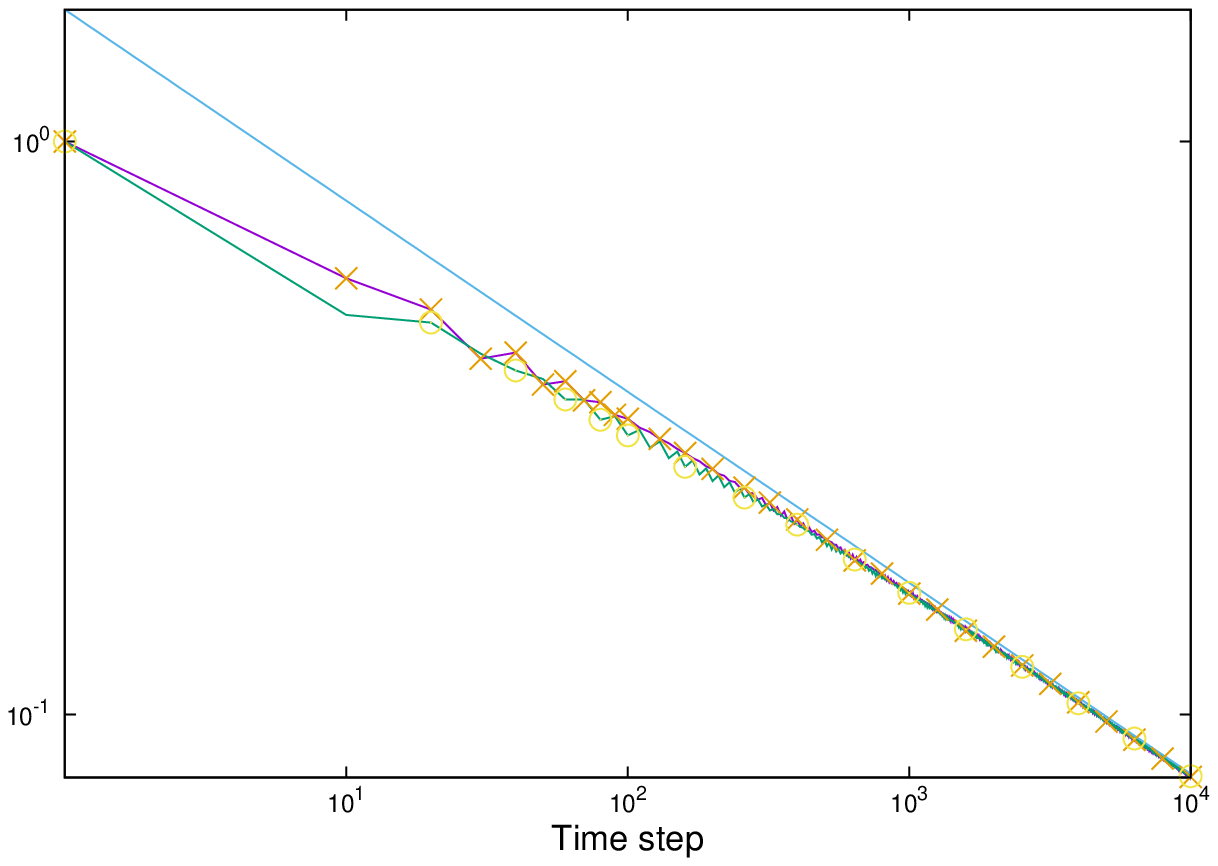}
\end{center}
\end{minipage}
\caption{Log-log plot of the behavior of $\|u\|_{l^\infty}$ to \eqref{s-eq1} with 
$p = 1$ in (a) and $p=2$ in (b). Two cases $g = 0.2\, (\times)$ and $0.4\, (\bigcirc)$ are plotted in each figure. The straight lines are given $-t/3.0 + 0.16$ in (a) and $-t/3.0 + 0.23$ in (b), which are good approximations. }
\label{log-log}
\end{figure}

We present some numerical results for nonlinear coin defined \eqref{4.5.1} in Example 1.3 to understand the dynamics of NLQW. Let $C: \mathbb{R}^2 \to U(2)$, and the following model is considered:
\begin{align}
C (s_1, s_2) = R\left(\dfrac{\pi}{4}\right)R\left(g(s_1 +  s_2)^p\right),\quad u (0, x)=\delta_{1,0} .\label{s-eq1}
\end{align}
Here, $\delta_{1, 0}$ is defined right below \eqref{A} and in particular it is $e_1$ at $x=0$ and $0$ at $x\neq 0$.

\begin{remark}
The coin in \eqref{s-eq1} is similar to the coin \eqref{4.5.1}.
Indeed, $\theta_0=\pi/4$ and $g$ of \eqref{s-eq1} corresponds to $\lambda g^p$ for $g$ in \eqref{4.5.1}.
Thus, $g$ in \eqref{s-eq1} can be positive or negative.
\end{remark}

By Theorem \ref{thm:scat}, we know that for $p \ge 2$ and $|g|\ll1 $, $u(t)$ scatters.
On the other hand, we can find explicit nonscattering solutions such as 
\begin{align}\label{soliton}
\varphi(t)=a \delta_{1,-t},\quad \frac{\pi}{4}+g a^{2p}=0,
\end{align}
for $g<0$.
\begin{remark}
Actually, it suffices to have $\frac{\pi}{4}+g a^{2p}\in 2\pi n$ so even for $g>0$, there exists the above kind of solution.
Further, if $\frac{\pi}{4}+g a^{2p}=\pi/2$, there is a periodic solution with period $4$, which evolves such as
$a \delta_{1,0}\to a \delta_{2,1}\to -a \delta_{1,0}\to -a \delta_{2,1} \to a \delta_{1,0}\to \cdots$.
\end{remark}

By elementary argument we can show $\varphi$ is unstable.
Indeed, if we take initial data $\varphi_{\varepsilon}(0):=a(1-\varepsilon)\delta_{1,0}$, then as $\varepsilon\downarrow 0$, $\varphi_\varepsilon\to \varphi(0)$.
However, setting $\varphi_\varepsilon(t)$ be the solution of NLQW, since 
\begin{align*}
\|\varphi_{\varepsilon}(t,-t)\|_{\C^2}&=|\cos\(\frac{\pi}{4}+g \|\varphi_\varepsilon(t-1,-(t-1))\|_{\C^2}^{2p}\)| \|\varphi_\varepsilon (t-1,-(t-1))\|_{\C^2}\\&\leq (1-\varepsilon')\|\varphi_\varepsilon (t-1,-(t-1))\|_{\C^2},
\end{align*}
for some $\varepsilon'>0$ depending only on $\varepsilon$, 
we see that $\|\varphi_\varepsilon(t,-t)\|_{\C^2}\to 0$ and obtain $\|\varphi(t)-\varphi_\varepsilon(t)\|_{l^2}\to \|\varphi(t)\|_{l^2}+\|\varphi_\varepsilon(t)\|_{l^2}\sim 1$.
Here, notice that $\varphi_{\varepsilon}(t,-t)$ always has the form $\begin{pmatrix} * \\ 0 \end{pmatrix}$.

\begin{remark}
We do not exclude the possibility that $\varphi_\varepsilon$ converges (in some sense) to another unknown traveling wave type solution.
\end{remark}

On the other hand, if we start from the initial data
\begin{align*}
\psi_\varepsilon(0,x)=\begin{cases} 0 & x<0\\ (1+\varepsilon)\begin{pmatrix} a \\ 0 \end{pmatrix} & x=0\\
\text{whatever} & x>0,
\end{cases}
\end{align*}
then by an easy computation, we see that $\psi_\varepsilon(t,-t)\to \begin{pmatrix} a \\ 0 \end{pmatrix}$.
Thus, there is a large set of initial data which persists the speed $1$ soliton like behavior in the left edge.

In the following, we treat \eqref{s-eq1} with $p = 1,2$ and $|g|$ not necessary small to understand its dynamics of NLQW which we could not rigorously study.   
In the case $|g|$ is not small, it is expected that nonlinear effects arise. It will turn out that the behavior of $\|u\|_{l^\infty}$ is different between the case $g > 0$ and $g < 0$. 

\subsection{Decay of $l^\infty$ norm}

First, we show the behavior of $\|u\|_{l^\infty}$ in Figure \ref{sup-norm-u}. 
As examples, we use $|g| = 0, 0.2, 0.4, 0.6, 0.8, 1$ and show the behaviors of $\|u\|_{l^\infty}$ of \eqref{s-eq1} with $p = 1$ (Figure \ref{sup-norm-u} (a) and (c)) and $p = 2$ (Figure \ref{sup-norm-u} (b) and (d)). From Figure \ref{sup-norm-u}, it is easily seen that $\|u\|_{l^\infty}$ decays like a linear for small $|g|$. From Theorem \ref{thm:1}, the behavior of $\|u\|_{l^\infty}$ of the linear model is approximated by $t^{-1/3}$ as $t \to +\infty$. We verify that the behavior of $\|u\|_{l^\infty}$ to the nonlinear model \eqref{s-eq1} is linear if the nonlinear effect $|g|$ is sufficiently small.  In Figure \ref{log-log}, it is demonstrated that the cases $g = 0.2$ and $g = 0.4$ are approximated by a linear model.

On the other hand, if $|g|$ is close to $1$, then $\|u\|_\infty$ becomes almost constant. Indeed, for the cases $g = 0.8$ and $g = 1$ in Figure \ref{sup-norm-u} (a) and for the case $g = 0.8$ in Figure \ref{sup-norm-u} (b), we see that $\|u\|_{l^\infty}$ is almost constant. Similar behaviors can be seen in Figure \ref{sup-norm-u} (c) and (d). A striking feature of the case $g < 0$ is that an oscillating behavior of $\|u\|_{l^\infty}$ appears. We can see oscillations of $\|u\|_{l^\infty}$ if $g = -0.6, -0.4$ in Figure \ref{sup-norm-u} (c) and if $g = -0.6$ in Figure \ref{sup-norm-u} (d). We expect that this oscillation starts decaying lineally as time step $t$ becomes larger. We would like to discuss more details of such cases later.

\subsection{Soliton and oscillating behavior of $u$}\label{sec:non-decay}

\begin{figure}[t]
\begin{tabular}{cc}
\begin{minipage}{0.45\hsize}
(a)
\vspace{-0.5cm}
\begin{center}
\includegraphics[width=6.0cm]{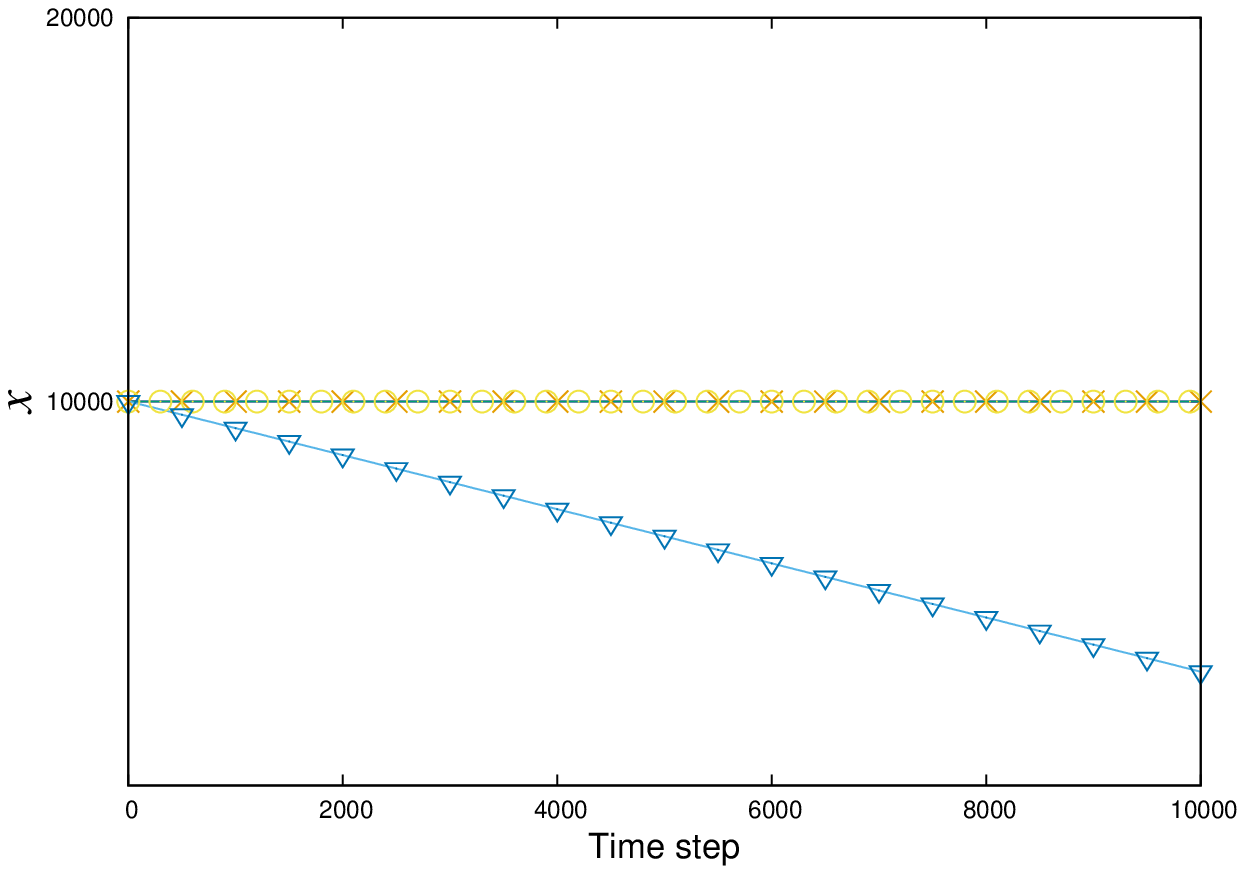} 
\end{center}
\end{minipage}
\hspace{6mm}
\begin{minipage}{0.45\hsize}
(b)
\vspace{-0.5cm}
\begin{center}
\includegraphics[width=6.0cm]{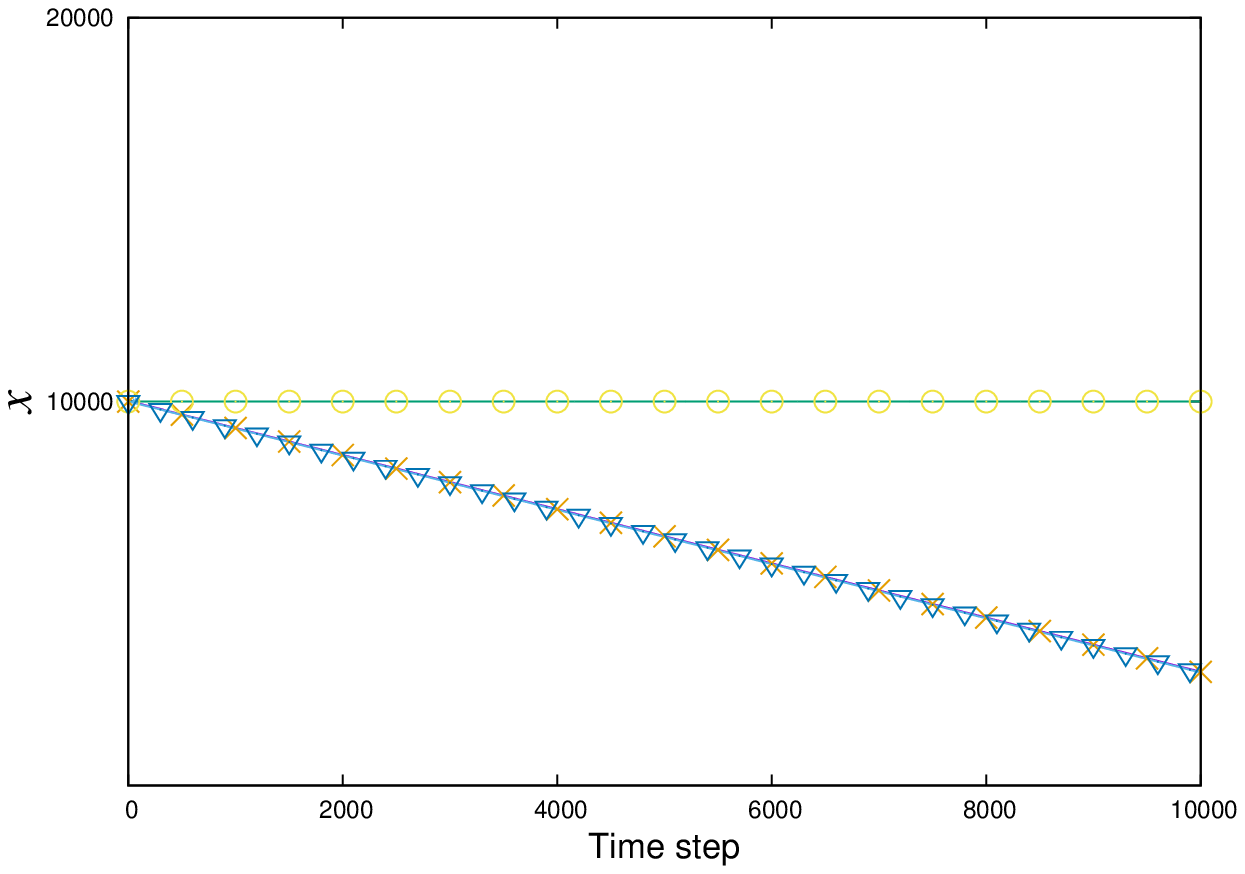}
\end{center}
\end{minipage}\vspace{0.3cm}\\
\begin{minipage}{0.45\hsize}
(c)
\vspace{-0.5cm}
\begin{center}
\includegraphics[width=6.0cm]{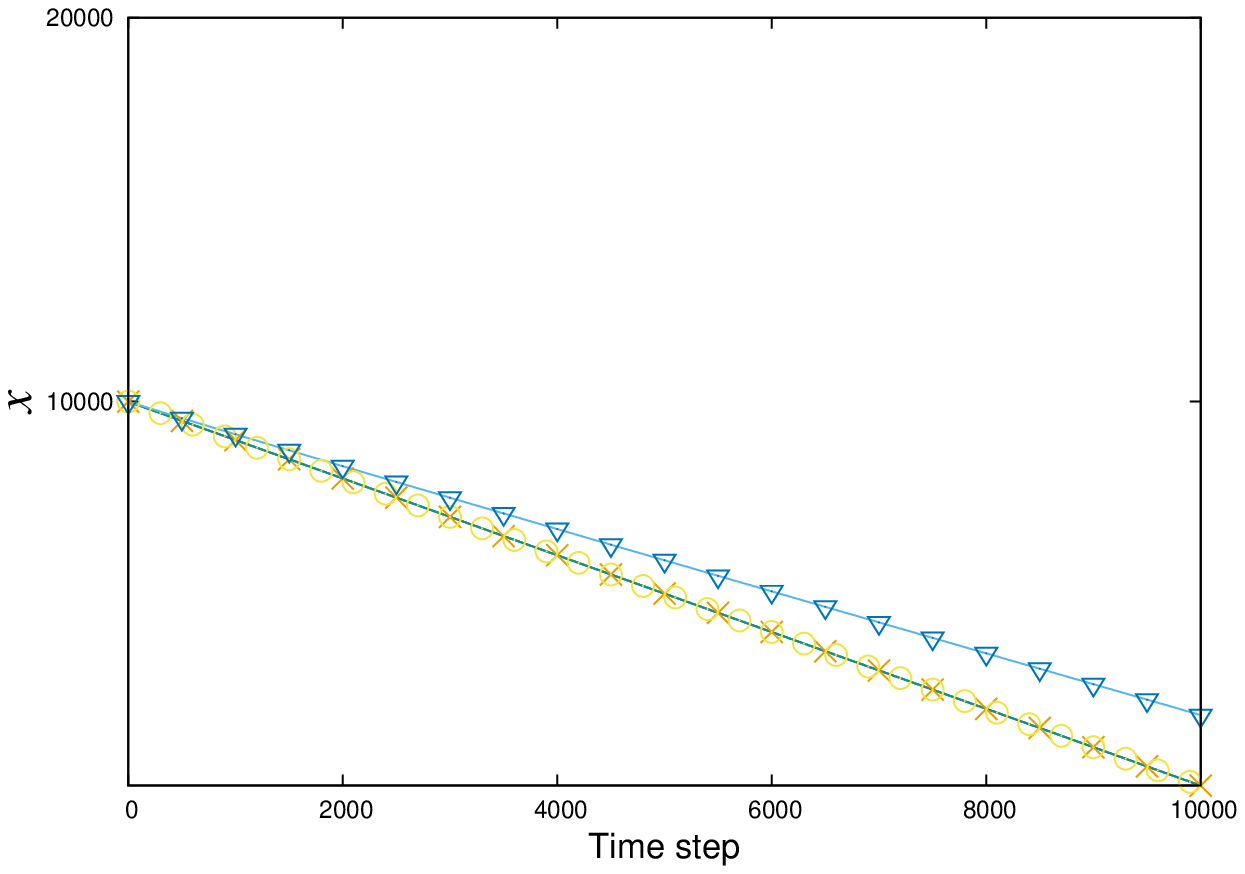}
\end{center}
\end{minipage}
\hspace{6mm}
\begin{minipage}{0.45\hsize}
(d)
\vspace{-0.5cm}
\begin{center}
\includegraphics[width=6.0cm]{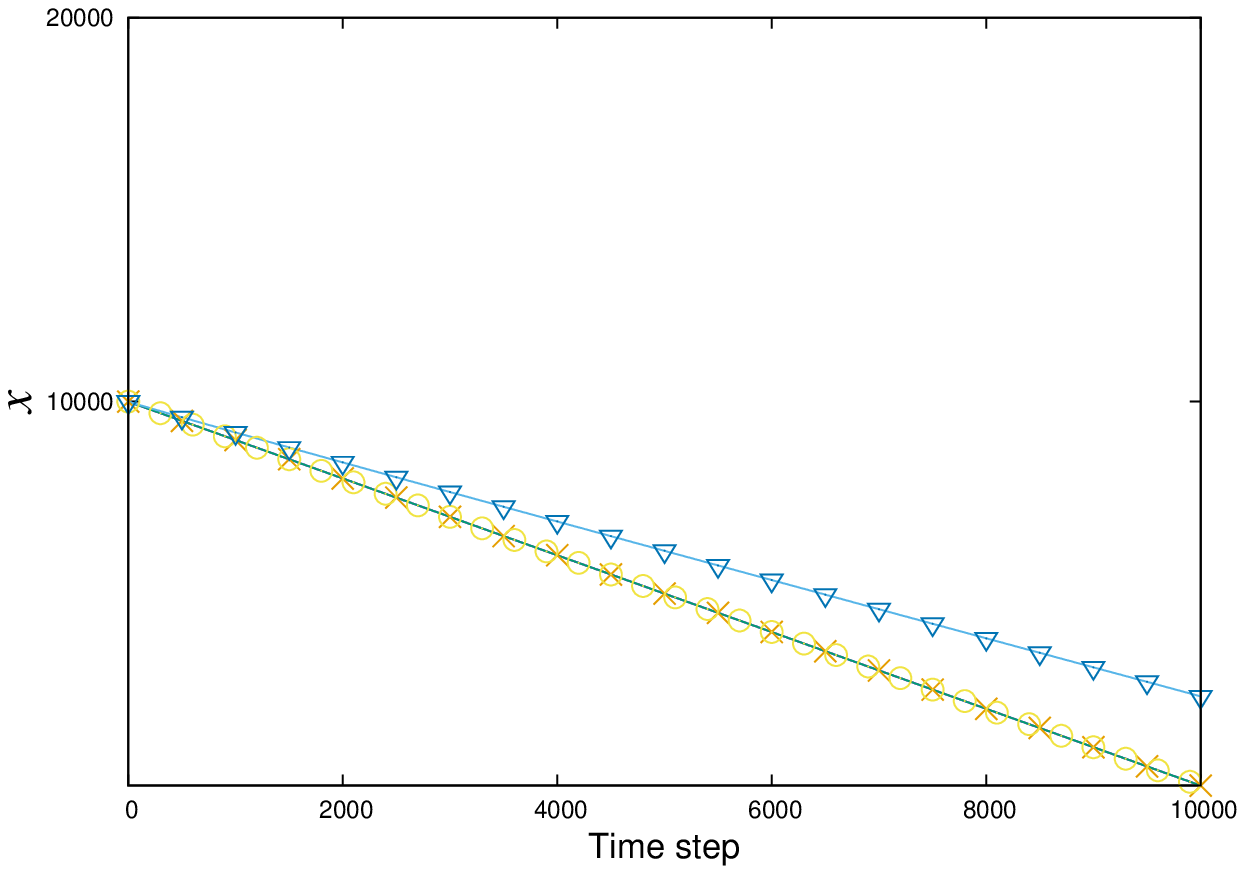}
\end{center}
\end{minipage}
\end{tabular}
\caption{Trace of the maximum point of $u$, where the vertical axis represents the place $x \, (0 \le x \le 20000)$ and and the horizontal axis is time step. At $t = 0$, $u(0, x) = \delta_{1, 10000}$. Figures (a) and (b) are the case $g > 0$ for $p = 1$ and $p = 2$, respectively. Figures (c) and (d) are the case $g < 0$ for $p = 1$ and $p = 2$, 
respectively. In these figures, we put $g = \pm1\ (\times)$, $g = \pm 0.8\ (\bigcirc)$, $g = \pm 0.6 \ (\triangledown)$. }
\label{location-u}
\end{figure}

Figure \ref{sup-norm-u} shows that $\|u\|_{l^\infty}$ does not decay lineally if $|g|$ is not sufficiently small. When $\|u\|_{l^\infty}$ becomes almost constant, we can obtain the value of $a = \|u\|_{l^\infty}$ by solving the equations $\pi/4 + g a^{2p}=\pi/2$ for $g>0$ and $\pi/4+g a^{2p}=0$ for $g<0$. In Table 1, we compare the value of $\left(\pi/4|g|\right)^{1/2p}$ and the value of $\|u\|_{l^\infty}$ obtained by the numerical calculations.  
It is noticed that the numerical result of $\|u\|_{l^\infty} $is much smaller than $(\pi/4|g|)^{1/2p}$ in the case $p = 2$ and $g = 1.0$. From Figure \ref{sup-norm-u} (b), $\|u\|_{l^\infty}$ decays like a linear model in this case, which is the curve with the mark `$\times$'.


\begin{table}[h!]
\begin{center}
\begin{tabular}{|c|c|c|c|}
\hline
 &$g$ & $(\pi/4|g|)^{1/2p}$ & Numerical results of $\|u\|_{l^\infty}$\\
\hline
\multirow{4}{*}{$p=1$} & $ 0.8$ &\multirow{2}{*}{$0.9908318244$} & 0.990865\\
\cline{2-2} \cline{4-4}
& $ -0.8$ & & 0.990911\\
\cline{2-4}
 & $ 1.0$ & \multirow{2}{*}{$0.88622692545$} & 0.886256 \\
\cline{2-2} \cline{4-4}
 & $ -1.0$ & & 0.886299\\
\hline
\multirow{4}{*}{$p=2$}  & $ 0.8$ & \multirow{2}{*}{$0.99540535682$} & 0.995414\\
\cline{2-2} \cline{4-4}
 & $-0.8$ & & 0.995425\\
\cline{2-4}
 & $ 1.0$ & \multirow{2}{*}{$0.94139626377$} & 0.111958\\
\cline{2-2} \cline{4-4}
 & $ -1.0$ & & 0.941415\\
\hline
\end{tabular}
\caption{Numerical results are obtained at $t = 10000$. }
\end{center}
\end{table}

In Figure \ref{location-u}, a location of the maximum point of $u$ is traced, which starts from $u(0, x) = \delta_{1, 10000}$. 
From Figure \ref{location-u} (a) and (b), we see that the trace of the maximum point of $u$  is almost constant for $g = 1$ and $g= 0.8$ in (a) and for $g = 1$ in (b).   
Figure \ref{location-u} (c) and (d) show that a location of the maximum point of $u$ moves to the left on the $x$-axis at a constant speed in every case. 

For the  case $g < 0$, we have already seen in Figure \ref{sup-norm-u} (c) and (d) that $\|u\|_{l^\infty}$ is almost constant if $g = -0.8$ and it oscillates if $g = -0.6$. Figure \ref{location-u} (c) and (d) say that the maximum point of $u$ moves similarly except speed in both cases $g = -0.8$ and $g = -0.6$. We would like to understand the dynamics of these cases. In Figure \ref{soliton}, we present the behavior of $u$ in the case $p = 2$ and $g = -0.6$ and the case $p = 2$ and $g = -0.8$. When $p = 2$ and $g = -0.8$ (the second line of Figure \ref{soliton}), the soliton moves to the left. However, $u$ moves up and down when $p = 2$ and $g= -0.6$ (the first line of Figure \ref{soliton}).



\begin{figure}[t]
\begin{tabular}{cccc}
\includegraphics[width=3.5cm]{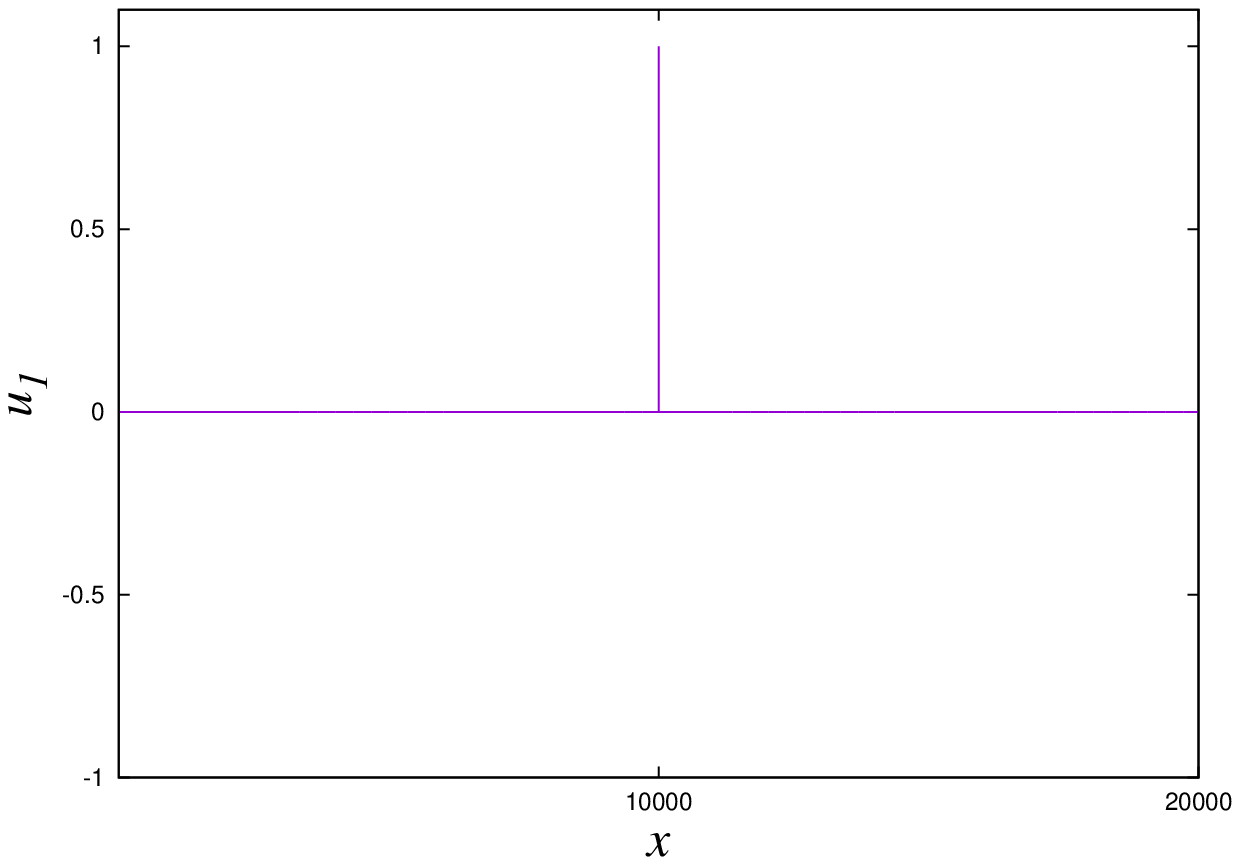} &

\includegraphics[width=3.5cm]{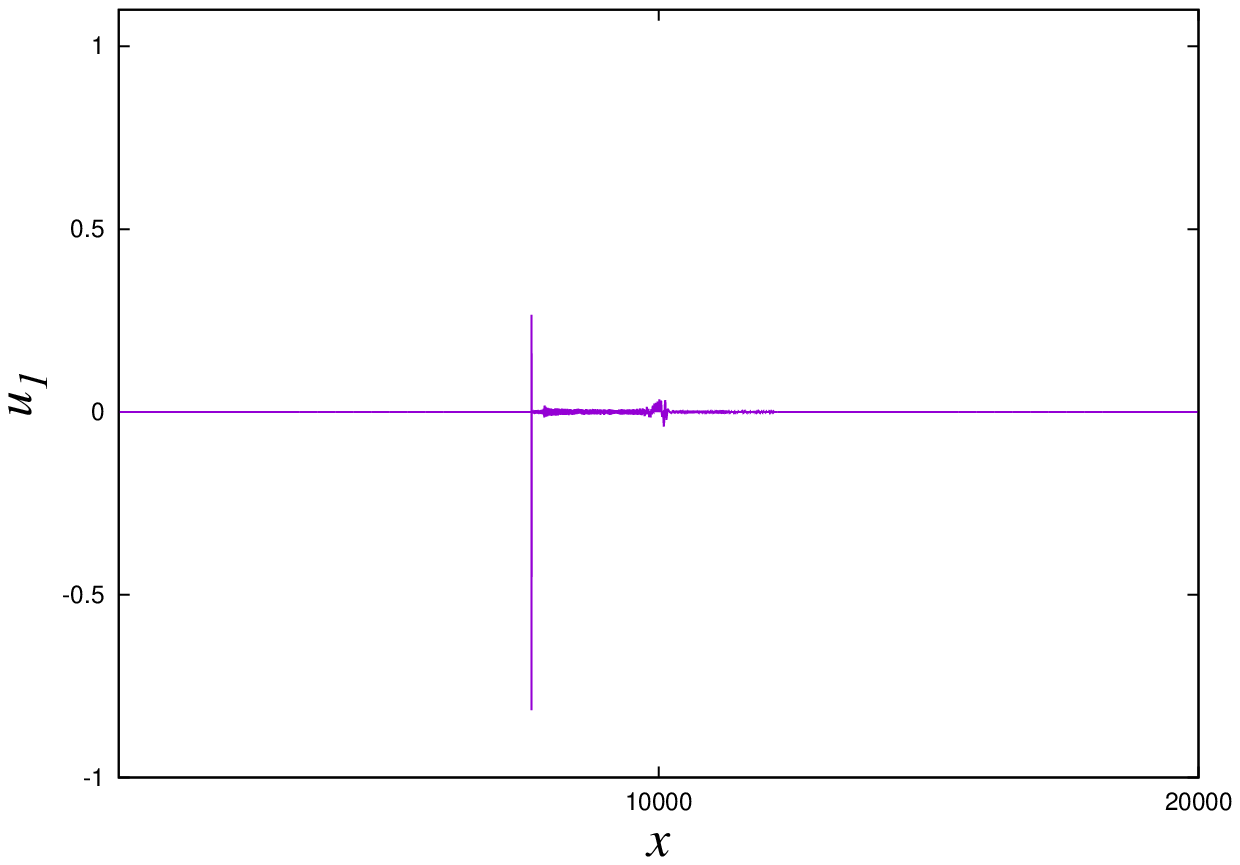} &

\includegraphics[width=3.5cm]{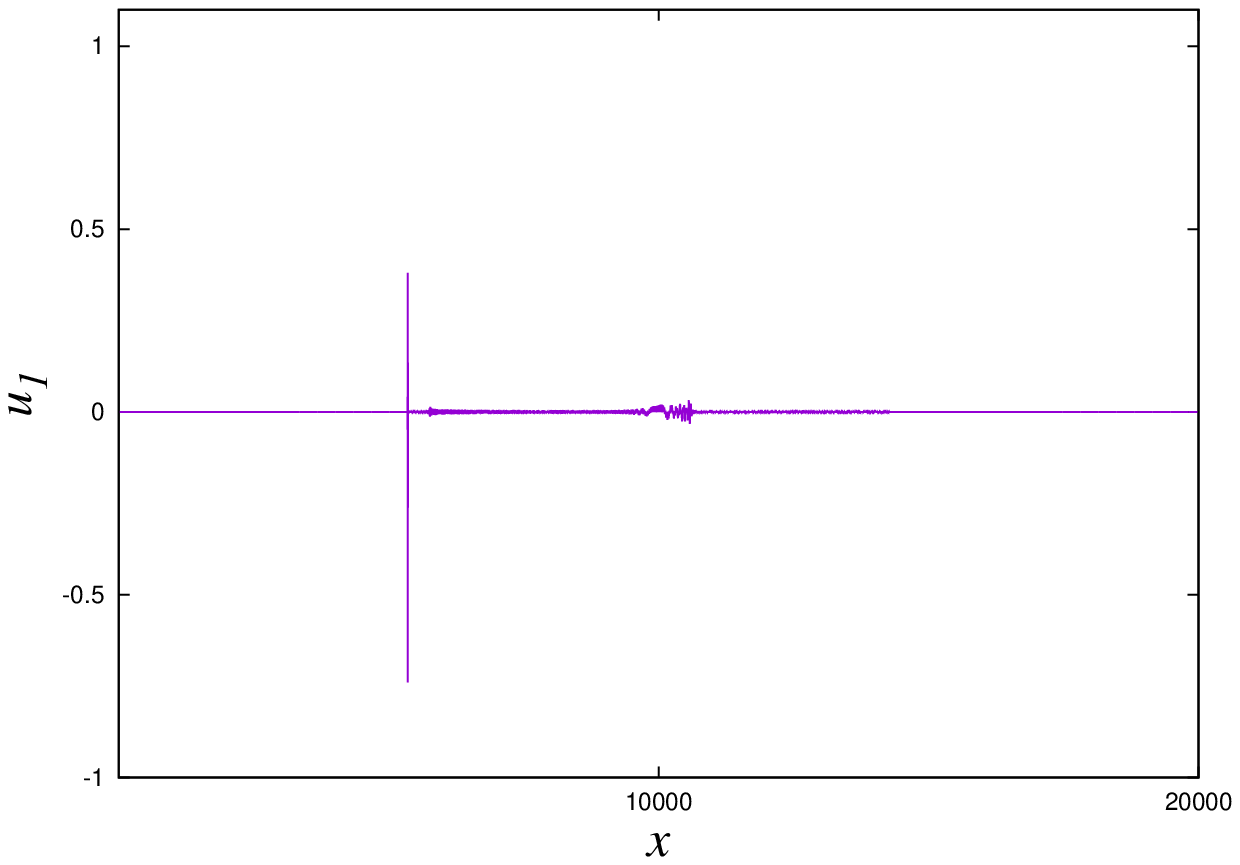}&

\includegraphics[width=3.5cm]{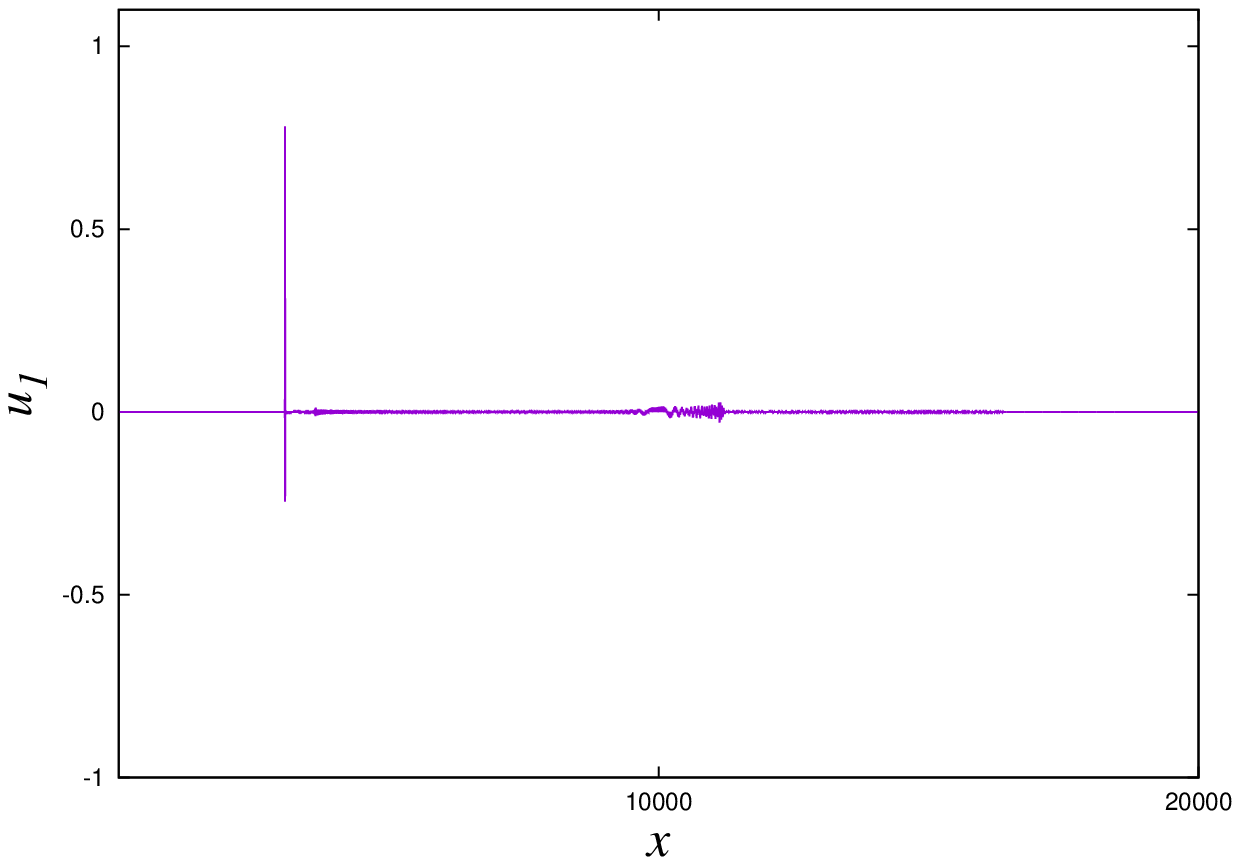} \\

\includegraphics[width=3.5cm]{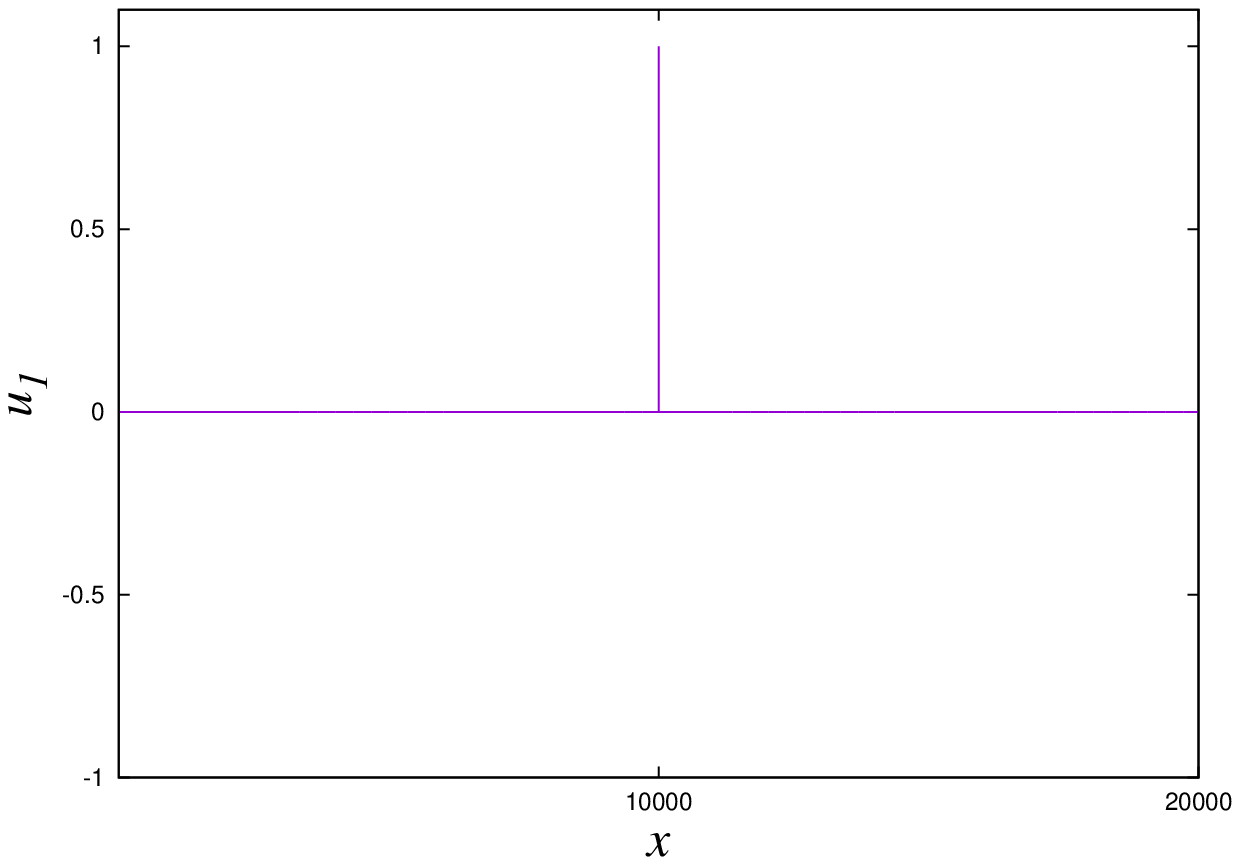} &

\includegraphics[width=3.5cm]{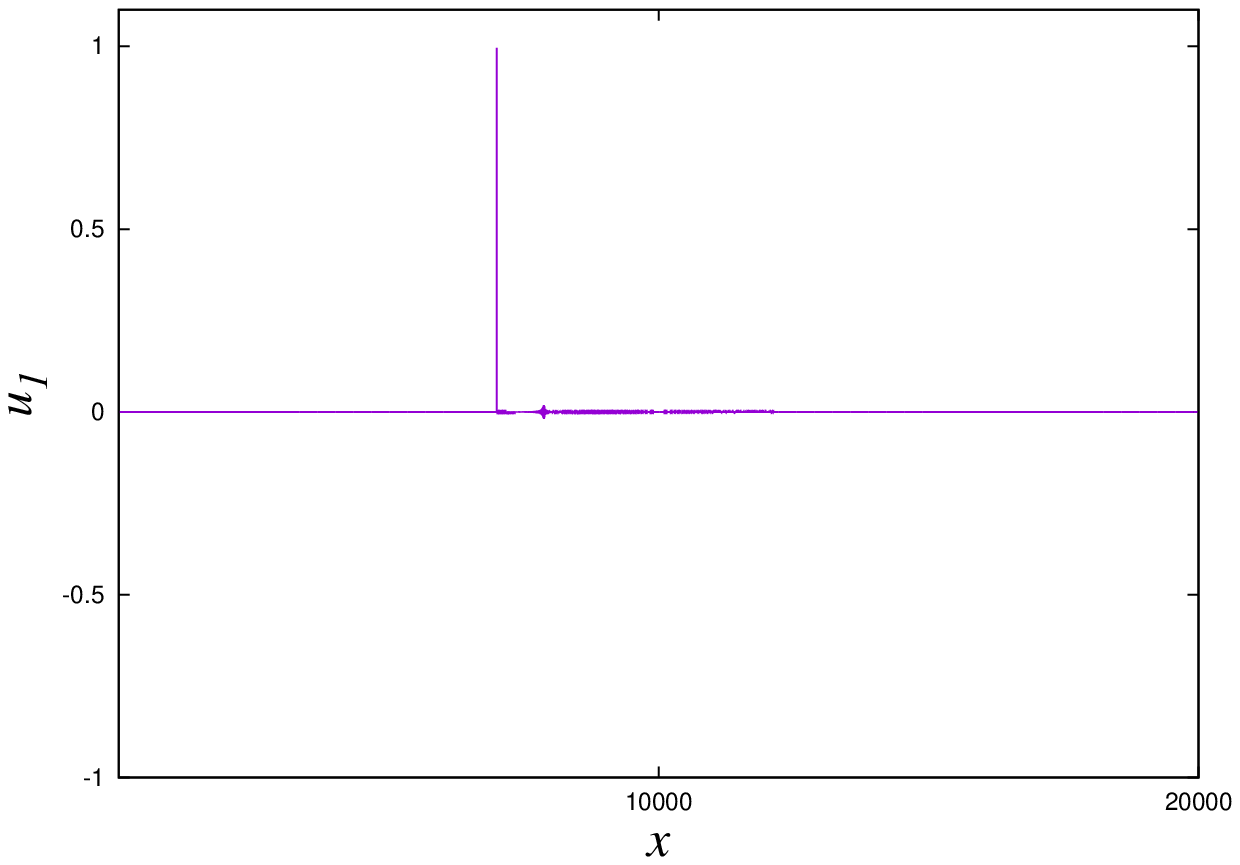} &

\includegraphics[width=3.5cm]{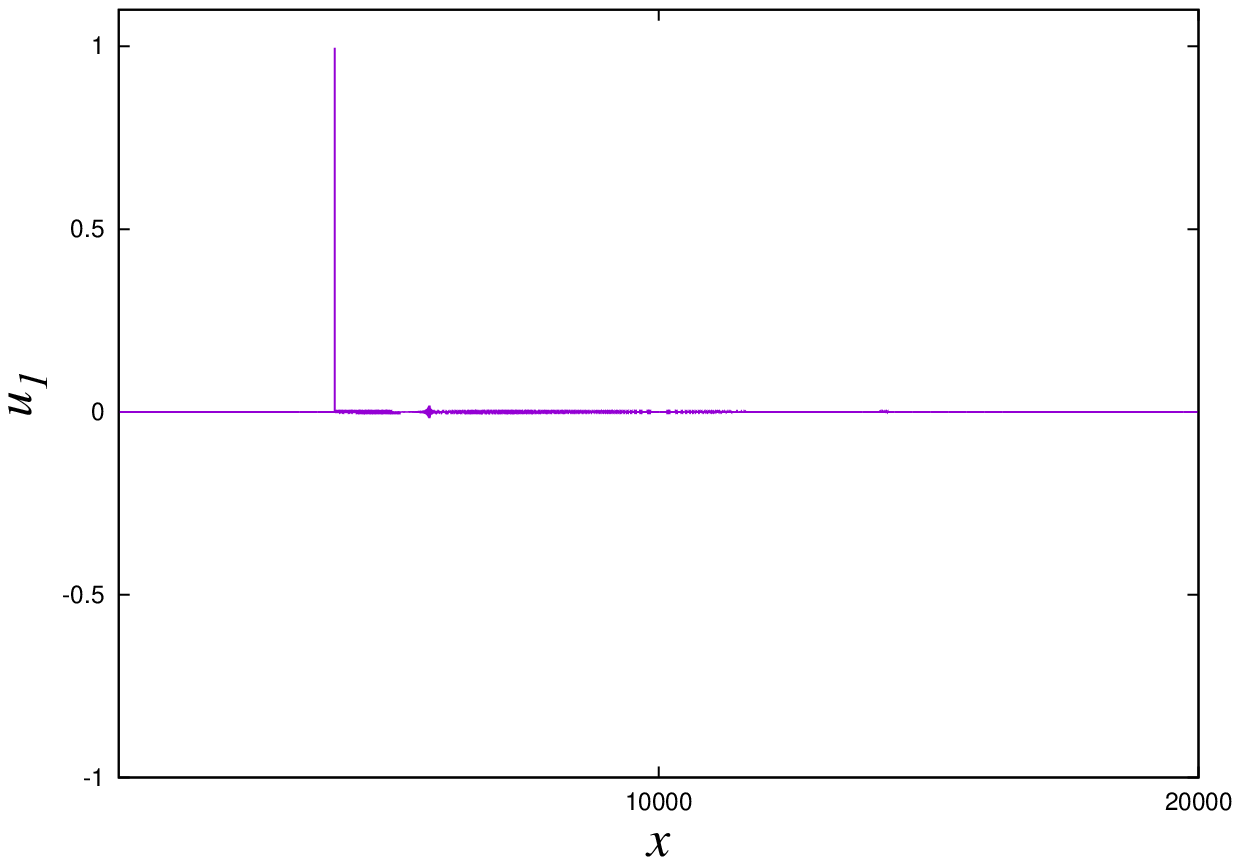}&

\includegraphics[width=3.5cm]{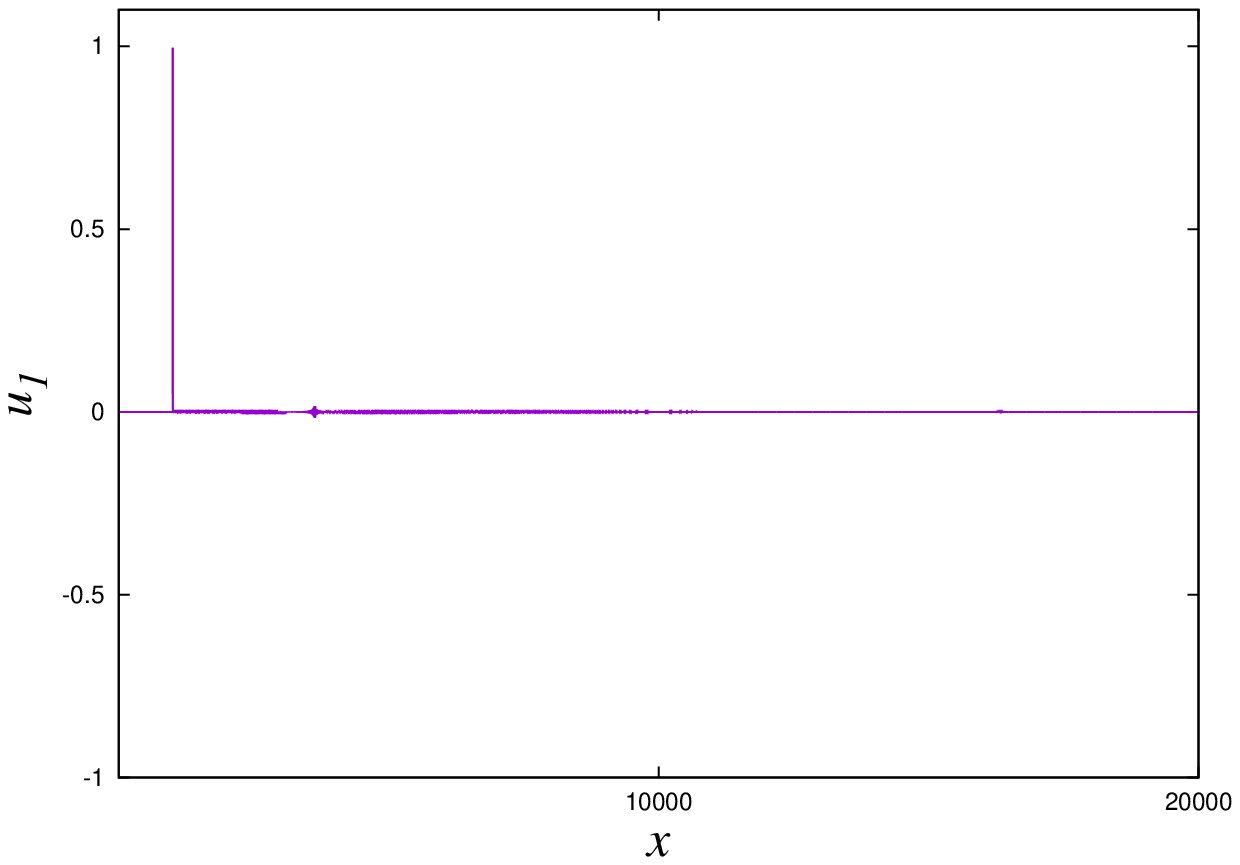}

\end{tabular}
\caption{The first line, four figures are in the case $p = 2$ and $g = -0.6$. The second line, they are in the case $p = 2$ and $g = -0.8$. The initial condition is $u(0, x) = \delta_{1, 10000}$. In both cases, they are at $t = 0, 3000, 6000, 9000$ from the left. We plot only the first element of $u = (u_1, u_2)^t$. }
\label{soliton}
\end{figure}


\subsection{Strong nonlinear regime}

Finally, we give an example to see what happens if $|g|$ is sufficiently large. When $|g|$ is large, it is expected that the behavior of $\|u\|_{l^\infty}$ becomes complicated. We treat the model \eqref{s-eq1} with $p = 2$ and consider the case $g = -15.2$. Figure \ref{max-7} shows the behavior of $\|u\|_{l^\infty}$ with the initial condition $u(0, x) = \delta_{1, 10000}$. Since $\|u(0, \cdot)\|_{l^\infty} = 1$, we see that it decreases quickly at the beginning and stays around $0.4$. Since it has small oscillations, we would like to compare this case with the case $p = 2$ and $g = -0.6$ where we have an oscillation of $\|u\|_{l^\infty}$. 
We trace points on which  the first element of $u = (u_1, u_2)^t$ is larger than $0.1$ in Figure \ref{max-8}. There are three lines in Figure \ref{max-8}. This implies that there exist at least three solitons or oscillating behaviors of $u$. We show the behavior of $u$ in Figure \ref{g-big}, where three oscillating peaks appear. 


\begin{figure}[t]
\begin{minipage}{0.45\hsize}
\begin{center}
\includegraphics[width=6.0cm]{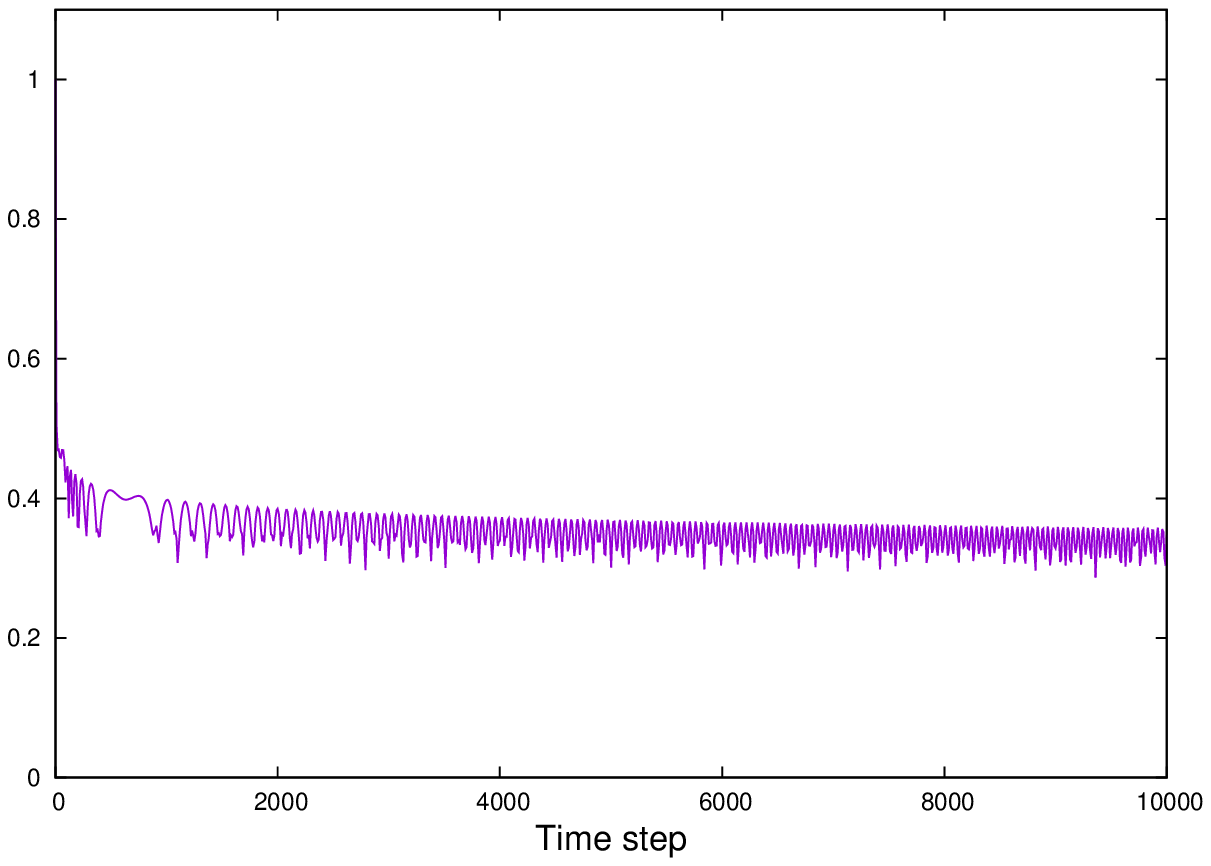}
\caption{The behavior of $\|u\|_{l^\infty}$ for $g = -15.2$ and $p = 2$.}
\label{max-7}
\end{center}
\end{minipage}
\hspace{6mm}
\begin{minipage}{0.45\hsize}
\begin{center}
\includegraphics[width=6.0cm]{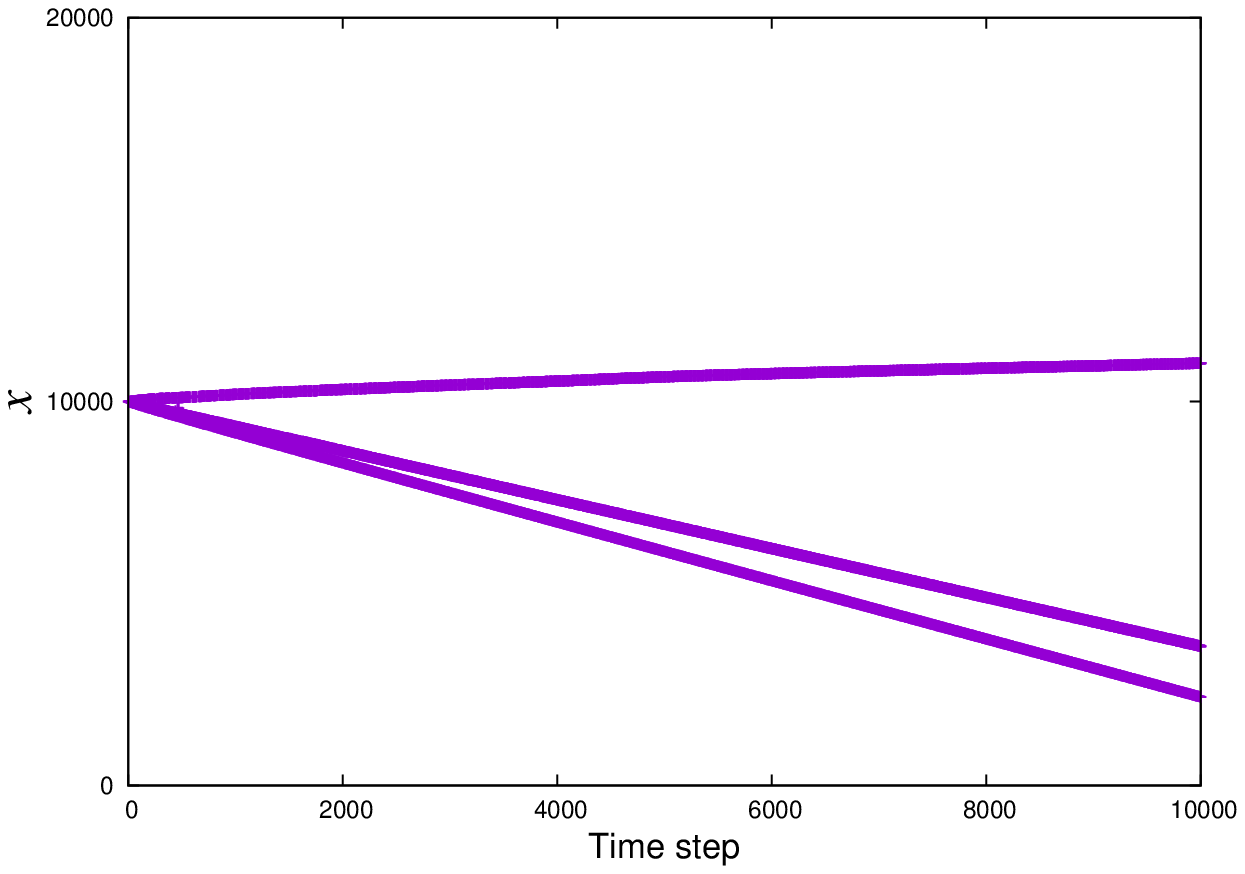}
\caption{The traces of points on which the first element of $u = {}^t\!(u_1, u_2)$ is larger than $0.1$ for $g = -15.2$ and $p = 2$. }
\label{max-8}
\end{center}
\end{minipage}
\end{figure}


\begin{figure}[h!]
\begin{tabular}{cccc}
\includegraphics[width=3.5cm]{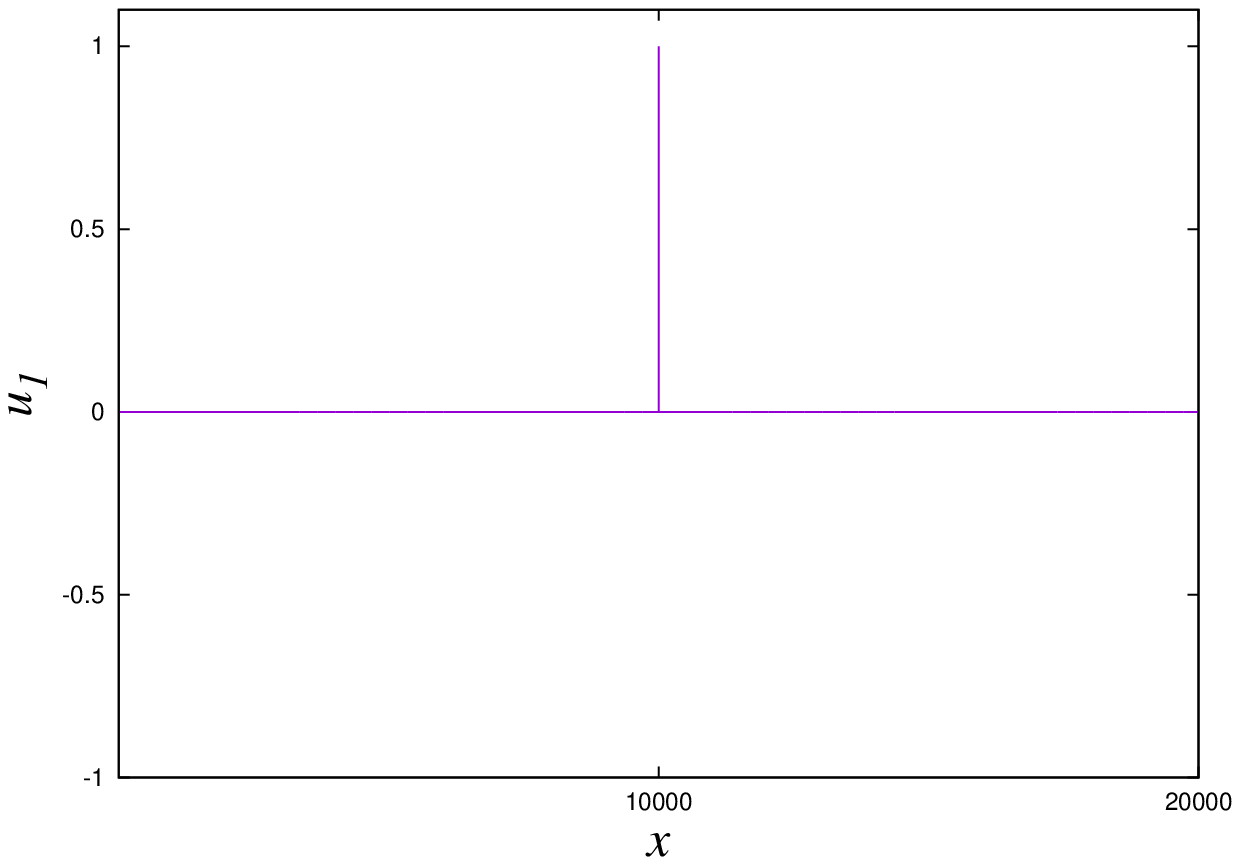} &

\includegraphics[width=3.5cm]{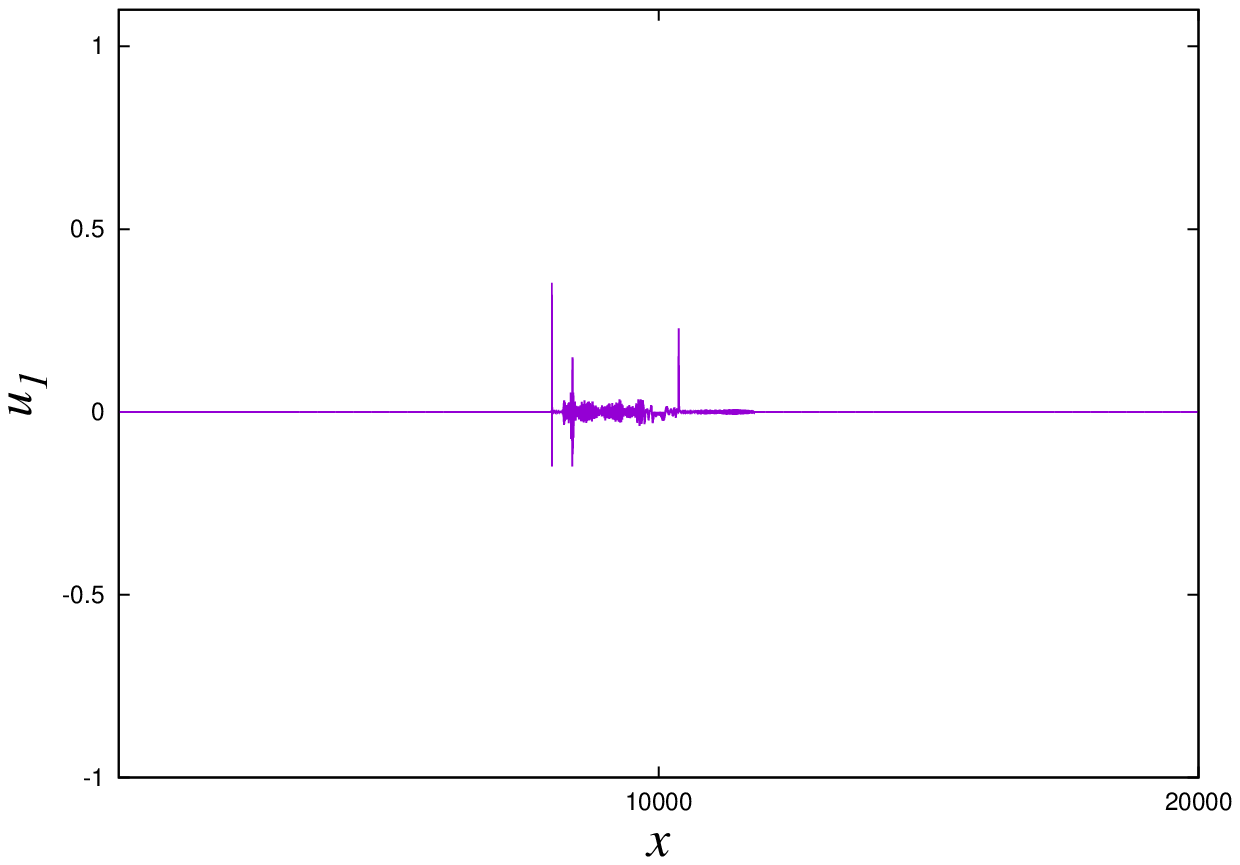} &

\includegraphics[width=3.5cm]{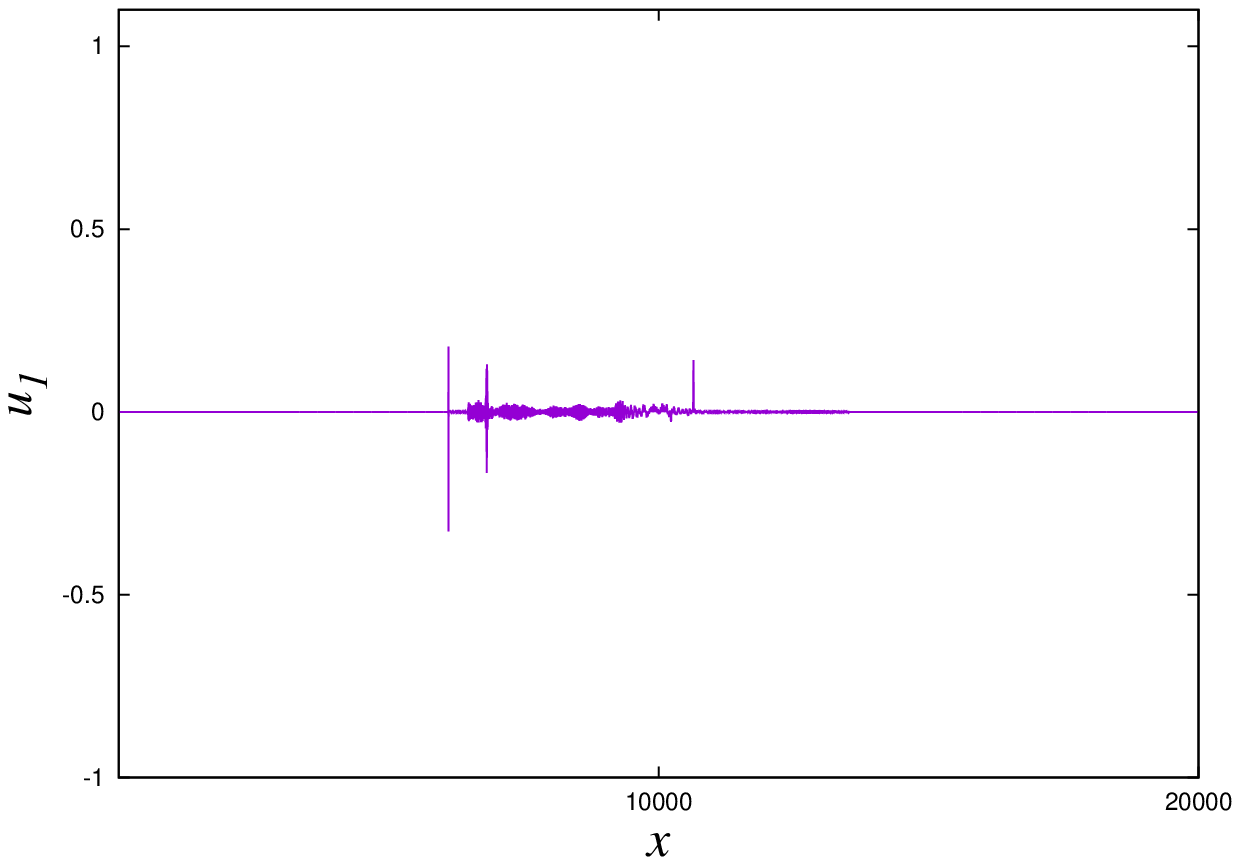}&

\includegraphics[width=3.5cm]{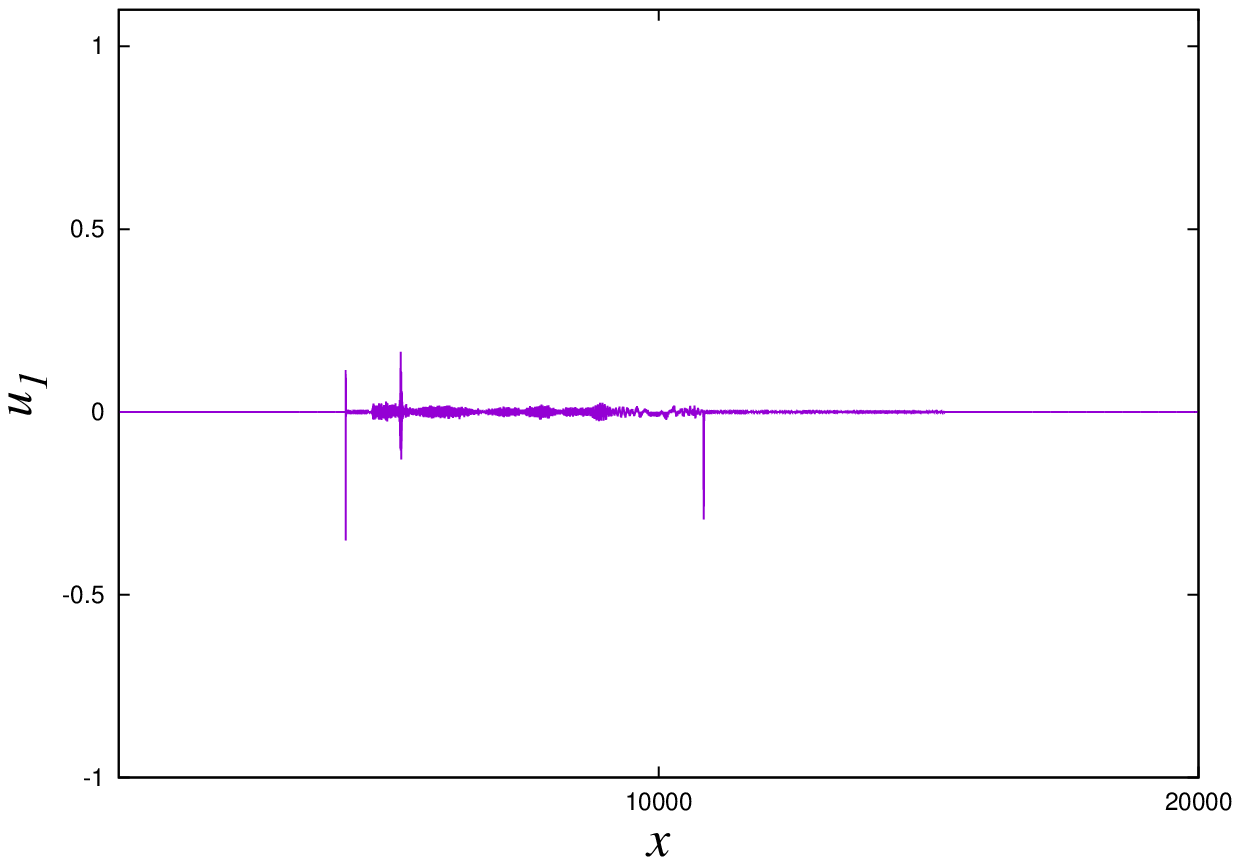}

\end{tabular}
\caption{$g = -15.2$ and $p = 2$. We show the behavior of the first element of $u = (u_1, u_2)^t$ corresponding to Figure \ref{max-8}. They are at $t = 0, 2500, 5000, 7500$ from the left.}
\label{g-big}
\end{figure}

\appendix
\section{Proof of the weak limit theorem}

We first prove Proposition \ref{p_wlt}. 
We suppose that $\|u_0\|_{l^2}=1$. 
By \eqref{Xt_dist}, the characteristic function of $X_t/t$
is given by
\begin{equation}
\label{chrct_X_t} 
\mathbb{E}(e^{i \xi X_t/t})  
	= \langle U(t) u_0, e^{i \xi \hat x/t} U(t) u_0 \rangle,
		\quad \xi \in \mathbb{R}, 
\end{equation}
where $\hat x$ is the position operator. 
Let $\hat v_0$ be a self-adjoint operator defined as
\[ \hat v_0 = \mathcal{F}^{-1} P^{-1}(\xi) 
	\begin{pmatrix} - p^\prime(\xi+\theta_a) & 0 \\ 
		0 & p^\prime(\xi+\theta_a) \end{pmatrix}
		P(\xi) \mathcal{F}, \]
where $p^\prime(\xi)$ is given in \eqref{12.2}
and $\theta_a$ is the argument of $a$. 
$\hat v_0$ is called the asymptotic velocity. 
As shown in \cite{GJS04PRE, Suzuki16QIP}, 
the Heisenberg operator $\hat x_0(t) = U^{-t} \hat x_0 U^t$
converges to $\hat v_0$ in the following sense:
\begin{equation}
\label{slimx} 
\mbox{s-}\lim_{t \to \infty} e^{i \xi \hat x_0(t)}
	=e^{i\xi \hat v_0}, \quad \xi \in \mathbb{R}. 
\end{equation}
\begin{lemma}
\label{lem_A1}
\[ \lim_{t \to \infty}
	\langle U(t) u_0, e^{i \xi \hat x/t} U(t) u_0 \rangle
	= \langle u_+, e^{i \xi \hat v_0} u_+ \rangle \]
\end{lemma}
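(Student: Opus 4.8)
The plan is to reduce the statement about the nonlinear evolution $U(t)u_0$ to the already-established convergence \eqref{slimx} for the linear evolution $U_0$. The starting point is the scattering hypothesis: since $U(t)u_0$ scatters, there is $u_+\in\mathcal H$ with $\|U(t)u_0-U_0^tu_+\|_{l^2}\to0$ as $t\to\infty$. The idea is to replace $U(t)u_0$ by $U_0^tu_+$ in both slots of the inner product $\langle U(t)u_0, e^{\im\xi\hat x/t}U(t)u_0\rangle$, paying an error that vanishes because $e^{\im\xi\hat x/t}$ is unitary (hence uniformly bounded in $t$) and the Cauchy–Schwarz inequality controls the cross terms by $\|U(t)u_0-U_0^tu_+\|_{l^2}$ times bounded quantities. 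Concretely, writing $w(t)=U(t)u_0$ and $v(t)=U_0^tu_+$, one has
\[
\left|\langle w(t),e^{\im\xi\hat x/t}w(t)\rangle-\langle v(t),e^{\im\xi\hat x/t}v(t)\rangle\right|
\le \|w(t)-v(t)\|_{l^2}\left(\|w(t)\|_{l^2}+\|v(t)\|_{l^2}\right),
\]
which tends to $0$.

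Next I would handle the linear term $\langle U_0^tu_+, e^{\im\xi\hat x/t}U_0^tu_+\rangle$. Rewriting $e^{\im\xi\hat x/t}=e^{\im(\xi/t)\hat x}$ and using $\hat x_0(t)=U_0^{-t}\hat x U_0^t$, we get
\[
\langle U_0^tu_+, e^{\im(\xi/t)\hat x}U_0^tu_+\rangle
=\langle u_+, U_0^{-t}e^{\im(\xi/t)\hat x}U_0^t u_+\rangle
=\langle u_+, e^{\im(\xi/t)\hat x_0(t)}u_+\rangle.
\]
So it remains to show $\langle u_+, e^{\im(\xi/t)\hat x_0(t)}u_+\rangle\to\langle u_+,e^{\im\xi\hat v_0}u_+\rangle$. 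This is a slight strengthening of \eqref{slimx}: \eqref{slimx} gives strong convergence of $e^{\im\xi\hat x_0(t)}$ for each fixed $\xi$, but here the parameter $\xi/t$ varies with $t$. I would argue as follows: since $\hat x_0(t)$ is self-adjoint, $e^{\im(\xi/t)\hat x_0(t)}=\big(e^{\im\xi\hat x_0(t)}\big)^{1/t}$ does not literally follow, so instead I would go through the spectral/Fourier picture used in \cite{GJS04PRE,Suzuki16QIP}: the asymptotic-velocity machinery actually yields $e^{\im s\hat x_0(t)}\to e^{\im s\hat v_0}$ strongly, uniformly for $s$ in compact sets (or, more simply, one can apply \eqref{slimx} with the fixed value $\xi$ after noting $e^{\im(\xi/t)\hat x_0(t)}u_+ = $ an approximate identity perturbation and using a three-epsilon argument together with $\|(e^{\im s\hat v_0}-I)u_+\|\to0$ as $s\to0$).

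The cleanest route, which I would adopt, is to split
\[
\langle u_+, e^{\im(\xi/t)\hat x_0(t)}u_+\rangle-\langle u_+,e^{\im\xi\hat v_0}u_+\rangle
=\langle u_+,\big(e^{\im(\xi/t)\hat x_0(t)}-e^{\im(\xi/t)\hat v_0}\big)u_+\rangle
+\langle u_+,\big(e^{\im(\xi/t)\hat v_0}-e^{\im\xi\hat v_0}\big)u_+\rangle.
\]
Wait — this still has $\xi/t$ in the first exponential on the right, which is not what I want. Better: subtract and add $\langle u_+, e^{\im(\xi/t)\hat x_0(t)} u_+\rangle$ against the true target directly and use that by \eqref{slimx}, for the \emph{fixed} parameter value $\xi$, $e^{\im\xi\hat x_0(t)}u_+\to e^{\im\xi\hat v_0}u_+$; then observe that the difference between using $\xi$ and $\xi/t$ in $e^{\im\cdot\hat x_0(t)}$ applied to $u_+$ is bounded by $|\xi-\xi/t|\,\|\hat x_0(t)u_+\|$ only if $u_+$ is in the domain of $\hat x_0(t)$, which is awkward. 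So in the end I expect the genuinely needed input is the stronger statement that $\langle u_+, e^{\im s\hat x_0(t)}u_+\rangle\to\langle u_+,e^{\im s\hat v_0}u_+\rangle$ locally uniformly in $s$; this follows from \eqref{slimx} by a standard equicontinuity argument (the functions $s\mapsto\langle u_+,e^{\im s\hat x_0(t)}u_+\rangle$ are uniformly Lipschitz-free but are uniformly bounded and, being characteristic functions of the spectral measures $\|E_{\hat x_0(t)}(\cdot)u_+\|^2$, pointwise convergence plus tightness upgrades to locally uniform convergence by Lévy's continuity theorem). Applying this with $s=\xi/t\to0$ and using continuity of $s\mapsto\langle u_+,e^{\im s\hat v_0}u_+\rangle$ at $s=0$... no: I want the limit to be $\langle u_+,e^{\im\xi\hat v_0}u_+\rangle$, not the value at $s=0$.

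Let me restate the correct reduction: the claim of the lemma is that the characteristic function $\mathbb E(e^{\im\xi X_t/t})$ converges to $\langle u_+,e^{\im\xi\hat v_0}u_+\rangle$. The correct manipulation is $\langle U(t)u_0,e^{\im\xi\hat x/t}U(t)u_0\rangle$; here $\hat x/t$ is the operator $\hat x$ scaled by $1/t$, so $e^{\im\xi\hat x/t}=e^{\im\xi t^{-1}\hat x}$ and after conjugation this is $e^{\im\xi t^{-1}\hat x_0(t)}$ acting in the $u_+$ picture, i.e. $\mathbb E(e^{\im\xi X_t/t})\approx\langle u_+, e^{\im\xi (\hat x_0(t)/t)}u_+\rangle$. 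Now $\hat x_0(t)/t$ is exactly the object whose strong limit \eqref{slimx} concerns — indeed \cite{GJS04PRE,Suzuki16QIP} prove $\hat x_0(t)/t\to\hat v_0$ in the strong resolvent / bounded-measurable-functional-calculus sense, equivalently $e^{\im\xi\hat x_0(t)/t}\to e^{\im\xi\hat v_0}$ strongly; so the second factor converges to $\langle u_+,e^{\im\xi\hat v_0}u_+\rangle$ by strong convergence against the fixed vector $u_+$. (Note \eqref{slimx} as displayed writes $e^{\im\xi\hat x_0(t)}$ but the intended content, from the cited references, is the scaled version $e^{\im\xi\hat x_0(t)/t}$; I would state this precisely at the start of the proof.)

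So the proof is: (1) use scattering and unitarity of $e^{\im\xi\hat x/t}$ plus Cauchy–Schwarz to replace $U(t)u_0$ by $U_0^tu_+$ with vanishing error; (2) conjugate to move the evolution onto $\hat x$, producing $\langle u_+, e^{\im\xi\hat x_0(t)/t}u_+\rangle$; (3) invoke \eqref{slimx} (the scaled asymptotic-velocity limit from \cite{GJS04PRE,Suzuki16QIP}) to pass to the limit $\langle u_+,e^{\im\xi\hat v_0}u_+\rangle$. The main obstacle is purely bookkeeping: making sure the $1/t$ scaling is tracked consistently so that step (3) really is an instance of the cited strong-limit result rather than the fixed-parameter statement \eqref{slimx} as literally displayed; once that is pinned down, steps (1) and (2) are routine.

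Putting it together:
\begin{proof}
Write $w(t)=U(t)u_0$ and $v(t)=U_0^tu_+$, so that $\|w(t)-v(t)\|_{l^2}\to0$ and $\|w(t)\|_{l^2}=\|v(t)\|_{l^2}=1$. Since $e^{\im\xi\hat x/t}$ is unitary, Cauchy--Schwarz gives
\begin{align*}
\left|\langle w(t),e^{\im\xi\hat x/t}w(t)\rangle-\langle v(t),e^{\im\xi\hat x/t}v(t)\rangle\right|
&\le\left|\langle w(t)-v(t),e^{\im\xi\hat x/t}w(t)\rangle\right|
+\left|\langle v(t),e^{\im\xi\hat x/t}(w(t)-v(t))\rangle\right|\\
&\le2\|w(t)-v(t)\|_{l^2}\longrightarrow0.
\end{align*}
Next, using $\hat x_0(t)=U_0^{-t}\hat x\,U_0^t$ and functional calculus,
\[
\langle v(t),e^{\im\xi\hat x/t}v(t)\rangle
=\langle U_0^tu_+,e^{\im(\xi/t)\hat x}U_0^tu_+\rangle
=\langle u_+,e^{\im(\xi/t)\hat x_0(t)}u_+\rangle
=\langle u_+,e^{\im\xi(\hat x_0(t)/t)}u_+\rangle.
\]
By \eqref{slimx} (the strong convergence of the scaled Heisenberg position $\hat x_0(t)/t$ to the asymptotic velocity $\hat v_0$ established in \cite{GJS04PRE,Suzuki16QIP}), we have $e^{\im\xi(\hat x_0(t)/t)}u_+\to e^{\im\xi\hat v_0}u_+$ in $l^2$, hence
\[
\langle u_+,e^{\im\xi(\hat x_0(t)/t)}u_+\rangle\longrightarrow\langle u_+,e^{\im\xi\hat v_0}u_+\rangle.
\]
Combining the two limits yields $\lim_{t\to\infty}\langle U(t)u_0,e^{\im\xi\hat x/t}U(t)u_0\rangle=\langle u_+,e^{\im\xi\hat v_0}u_+\rangle$.
\end{proof}
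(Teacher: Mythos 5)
Your proof is correct and follows essentially the same route as the paper: replace $U(t)u_0$ by the free profile using scattering, unitarity of $e^{\im\xi\hat x/t}$ and Cauchy--Schwarz, then conjugate and invoke the asymptotic-velocity strong limit \eqref{slimx} for the remaining linear term. In fact your version is slightly more careful than the paper's on two points the paper glosses over: you correctly compare against $U_0^t u_+$ (the paper's displayed decomposition writes $U_0^t u_0$, evidently a typo), and you explicitly note that \eqref{slimx} must be read with the $\hat x_0(t)/t$ scaling for the last step to apply.
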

\begin{proof}
A direct calculation yields
\begin{align*}
& |\langle U(t) u_0, e^{i \xi \hat x/t} U(t) u_0 \rangle
	- \langle u_+, e^{i \xi \hat v_0} u_+ \rangle| \\
& \quad \leq |\langle U(t) u_0 - U_0^t u_0,
		 e^{i \xi \hat x/t} U(t) u_0 \rangle| 
	+ |\langle U_0^t u_0, 
		e^{i \xi \hat x/t}( U(t) u_0 - U_0^t u_0) \rangle| \\
& \qquad + |\langle U_0^t u_0, e^{i \xi \hat x/t}  U_0^t u_0\rangle
		- \langle u_+, e^{i \xi \hat v_0} u_+ \rangle  \rangle| \\
& \quad =: I_1(t) + I_2(t) + I_3(t).
\end{align*}
Because $e^{i\xi \hat x/t}$ and $U(t)$ preserves the norm
and $U(t)u_0$ scatters, 
$\lim_{t \to \infty} I_1(t)= \lim_{t \to \infty} I_2(t)=0$. 
By \eqref{slimx}, $\lim_{t \to \infty}I_3(t)=0$. 
Hence, the proof is completed. 
\end{proof}
By \eqref{chrct_X_t} and Lemma \ref{lem_A1},
\begin{equation}
\label{lim_chrct}
\lim_{t \to \infty} E(e^{i \xi X_t/t})
	= \langle u_+, e^{i \xi \hat v_0} u_+ \rangle
	= \int_{[-|a|, |a|]} e^{i \xi v} d \|E_{\hat v_0}(v) u_+\|^2,
\end{equation}
where we have used the spectral theorem in the last equation. 
The right-hand side in the above equation is equal to 
the characteristic function of a random variable $V$
following the probability distribution 
$\mu_V =  \|E_{\hat v_0}(\cdot) u_+\|^2$. 
This completes the proof of Proposition \ref{p_wlt}. 

In what follows, we prove \eqref{wldist}. 
By \eqref{lim_chrct}, the characteristic function of $V$ coincides with
\[ \langle u_+, e^{i \xi  \hat v_0} u_+ \rangle
= \frac{1}{2\pi} \int_{\mathbb{T}}
 \sum_{\pm} e^{\mp i \xi  p^\prime(\xi)} 
  \langle \hat u_+(\xi-\theta_a),  
  	Q_\pm(\xi) \hat u_+(\xi-\theta_a) \rangle
  d \xi. 
\]
Let $v =  \mp p^\prime(\xi)$. 
By direct calculation, 
$\sin \xi = \frac{\mp |b| v}{|a|\sqrt{1-v^2}}$.
When $-\pi/2 \leq \xi \leq \pi/2$, 
$\xi = \arcsin \frac{\mp |b| v}{|a|\sqrt{1-v^2}}$. 
When $\pi/2 \leq \xi \leq 3\pi/2$, 
$-\pi/2 \leq \xi - \pi \leq \pi/2$ and $\sin \xi = \sin (\xi -\pi)$
imply that
$\xi = \pi +  \arcsin \frac{\pm |b| v}{|a|\sqrt{1-v^2}}$.
Thus, 
\[ \frac{d \xi}{\pi} 
= 
	 f_K(v;|a|) d v. \]
Let
\[  P_{\pm, m}(\xi) = \langle \hat u_+(\xi-\theta_a),  
  	Q_\pm(\xi) \hat u_+(\xi-\theta_a) \rangle. \]
Changing variables by $v= \mp p^\prime(\xi)$ gives 
\[ \langle u_+, e^{i \xi  \hat v_0} u_+ \rangle
	= \int_{[-|a|,|a|]} e^{i \xi v} w(v) f_K(v;|a|) dv, \]
where		
\begin{equation}
\label{defw}
w(v) = \frac{1}{2}\sum_{m=0,1}
			 \left\{ P_{-, m}(\xi_{-,m}(v))
			 + P_{+, m}(\xi_{+,m}(v)) \right\}
\end{equation}
with $\xi_{\pm,m}(v) = m\pi + \arcsin \frac{\mp (-1)^m |b| v}{|a|\sqrt{1-v^2}}$.

\section*{Acknowledgments}  
M.M. was supported by the JSPS KAKENHI Grant Number JP15K17568.
H.S. was supported by JSPS KAKENHI Grant Number JP17K05311.
E.S. acknowledges financial support from 
the Grant-in-Aid for Young Scientists (B) and of Scientific Research (B) Japan Society for the Promotion of Science (Grant No.~16K17637, No.~16K03939).
A. S. was supported by JSPS KAKENHI Grant Number JP26800054. 
K.S acknowledges JSPS the Grant-in-Aid for Scientific Research (C) 26400156.

\end{document}